\newtheorem{theorem}{Theorem}[section]
\newtheorem{proposition}[theorem]{Proposition}
\newtheorem{lemma}[theorem]{Lemma}
\newtheorem{remark}[theorem]{Remark}
\numberwithin{equation}{section}
\newcommand{\vn}{V_\omega}
\newcommand{\von}{V_{\omega,1}}
\newcommand{\vtn}{V_{\omega,2}}
\newcommand{\vin}{V_{\omega,i}}
\newcommand{\tvon}{\widetilde{V}_{L,1}}
\newcommand{\tvtn}{\widetilde{V}_{L,2}}
\newcommand{\drcf}{D} %Free operator 
\newcommand{\drcnf}{D_{L}} %Box, free
\newcommand{\drcn}{D_{\omega,L}} %Box, random
\newcommand{\drcnr}{\mathcal{D}_L} %Box, rescaled, generic
\newcommand{\drcnl}{\mathbf{D}_L} %Line, rescaled, generic
\newcommand{\dos}{\mathcal{N}} %Free, add subscript 0
\newcommand{\dosnf}{\mathcal{N}_{0,L}} %Free, box
\newcommand{\dosn}{\mathcal{N}_{\omega, L}} %Random, box
\newcommand{\spn}{\Upsilon_L}
\newcommand{\Phil}{\Phi^{\lambda}}
\newcommand{\Psil}{\Psi^{\lambda}}
\newcommand{\zl}{\zeta^{\lambda}}
\newcommand{\Rl}{R^{\lambda}}
\newcommand{\tl}{\theta^{\lambda}}
\newcommand{\vtl}{\vartheta^{\lambda}}
\newcommand{\Tl}{T^{\lambda}}
\newcommand{\ppru}{\mathcal{P}}
\newcommand{\vo}{V_{\omega,1}}
\newcommand{\vt}{V_{\omega,2}}
\newcommand{\deltap}{\delta^+}
\newcommand{\deltam}{\delta^-}
\DeclareMathOperator{\tr}{Tr}
\newcommand{\Z}{\mathbb{Z}}
\newcommand{\R}{\mathbb{R}}
\newcommand{\C}{\mathbb{C}}
\newcommand{\n}{\mathbb{N}}
\newcommand{\p}{\mathbb{P}}
\newcommand{\esp}{\mathbb{E}}
\newcommand{\h}{\mathcal{H}}
\newcommand{\bra}{\langle}
\newcommand{\ket}{\rangle}
\newcommand{\vp}{\phi}% eigenfunction
\def \simless {\mathbin{\lower 3pt\hbox{$\rlap{\raise 5pt
              \hbox{$\char'074$}}\mathchar"7218$}}}
\author[G. R. Moreno Flores and A. Taarabt]{Gregorio R. Moreno$^{1}$ Flores and Amal Taarabt$^2$}
\date{}
\address{Facultad de Matem\'aticas\\
Pontificia Universidad Cat\'olica de Chile\\
Vicu\~na Mackenna 4860, Macul\\
Santiago, Chile}
\email{grmoreno@mat.uc.cl, amtaarabt@mat.uc.cl}
\thanks{{\it Key words and phrases.} 
%Regeneration times,  Interacting Particle Systems, Front propagation.}
Random operators, random matrices, scaling limits, spectral statistics.}
\thanks{ AMS 2010 {\it subject classifications}. 82B44, 47B80}
\thanks{$^{1,2}$Facultad de Matem\'aticas, Pontificia Universidad Cat\'olica de Chile.}
\title[Dirac Operators in a Decaying Potential]{One-dimensional Discrete Dirac Operators in a Decaying Random Potential II: Clock, Schr\"odinger  and Sine statistics}
\begin{document}

\begin{abstract}
	We consider one-dimensional discrete Dirac models in vanishing random environments. In a previous work \cite{BMT01}, we showed that these models exhibit a rich phase diagram in terms of their spectrum as a function of the rate of decay of the random potential. 
	
This article is devoted to their spectral statistics. We show that the rescaled spectrum converges to the clock process for fast decay and to the Schr\"odinger/Sine processes from random matrix theory for critical decay. This way, we recover all the regimes previously identified for the Anderson model in a similar context \cite{KVV}. Poisson statistics, which should appear in the model with slow decay, are left as an open problem.

The core of the proof consists in a suitable scaling limit for the Pr\"ufer phase and monotonicity arguments, yielding an alternative to the approach of \cite{KVV}. For one of the models, we also obtain the scaling limit of the Pr\"ufer radii and discuss the consequences for the limiting shape of the eigenfunctions. 
\end{abstract}

\maketitle

\tableofcontents

%%%%%%%%%%%%%%%%%%%%%%%%%%%%%%%%%%%%%%%%%%%%%%%%%%%%%%%%%%%%
%%%%%%%%%%%%%%%%%%%%%%%%%%%%%%%%%%%%%%%%%%%%%%%%%%%%%%%%%%%%
%%%%%%%%%%%%%%%%%%%%%%%%%%%%%%%%%%%%%%%%%%%%%%%%%%%%%%%%%%%%
%%%%%%%%%%%%%%%%%%%%%%%%%%%%%%%%%%%%%%%%%%%%%%%%%%%%%%%%%%%%
%%%%%%%%%%%%%%%%%%%%%%%%%%%%%%%%%%%%%%%%%%%%%%%%%%%%%%%%%%%%
%%%%%%%%%%%%%%%%%%%%%%%%%%%%%%%%%%%%%%%%%%%%%%%%%%%%%%%%%%%%
%%%%%%%%%%%%%%%%%%%%%%%%%%%%%%%%%%%%%%%%%%%%%%%%%%%%%%%%%%%%
%%%%%%%%%%%%%%%%%%%%%%%%%%%%%%%%%%%%%%%%%%%%%%%%%%%%%%%%%%%%
%%%%%%%%%%%%%%%%%%%%%%%%%%%%%%%%%%%%%%%%%%%%%%%%%%%%%%%%%%%%
%%%%%%%%%%%%%%%%%%%%%%%%%%%%%%%%%%%%%%%%%%%%%%%%%%%%%%%%%%%%
%%%%%%%%%%%%%%%%%%%%%%%%%%%%%%%%%%%%%%%%%%%%%%%%%%%%%%%%%%%%
%%%%%%%%%%%%%%%%%%%%%%%%%%%%%%%%%%%%%%%%%%%%%%%%%%%%%%%%%%%%
%%%%%%%%%%%%%%%%%%%%%%%%%%%%%%%%%%%%%%%%%%%%%%%%%%%%%%%%%%%%
%%%%%%%%%%%%%%%%%%%%%%%%%%%%%%%%%%%%%%%%%%%%%%%%%%%%%%%%%%%%
%%%%%%%%%%%%%%%%%%%%%%%%%%%%%%%%%%%%%%%%%%%%%%%%%%%%%%%%%%%%
%%%%%%%%%%%%%%%%%%%%%%%%%%%%%%%%%%%%%%%%%%%%%%%%%%%%%%%%%%%%
%%%%%%%%%%%%%%%%%%%%%%%%%%%%%%%%%%%%%%%%%%%%%%%%%%%%%%%%%%%%
%%%%%%%%%%%%%%%%%%%%%%%%%%%%%%%%%%%%%%%%%%%%%%%%%%%%%%%%%%%%
%%%%%%%%%%%%%%%%%%%%%%%%%%%%%%%%%%%%%%%%%%%%%%%%%%%%%%%%%%%%
%%%%%%%%%%%%%%%%%%%%%%%%%%%%%%%%%%%%%%%%%%%%%%%%%%%%%%%%%%%%
%%%%%%%%%%%%%%%%%%%%%%%%%%%%%%
%%%%%%%%%%%%%%%%%%%%%%%%%%%%%%
%%%%%%%%%%%%%%%%%%%%%%%%%%%%%%%%%%%%%%%%%%%%%%%%%%%%%%%%%%%%
%%%%%%%%%%%%%%%%%%%%%%%%%%%%%%%%%%%%%%%%%%%%%%%%%%%%%%%%%%%%
%%%%%%%%%%%%%%%%%%%%%%%%%%%%%%%%%%%%%%%%%%%%%%%%%%%%%%%%%%%%

\section{Introduction}
The RAGE theorem is probably one of the most beautiful results in mathematical analysis. Roughly speaking, it states that the spectrum of a (random) operator encloses most of the information about the dynamics generated by its associated unitary group. Unfortunately, the precise nature of this (random) object is essentially intractable and its properties have to be derived by means of quite indirect techniques. In principle, a way to simplify the problem consists in restricting the operator to finite boxes. Focussing in the random case from now on, the spectrum then reduces to the eigenvalues of this finite dimensional random matrix and one hopes to recover information on the original infinite dimensional problem through a limit procedure.
At a microscopic scale, the spectrum of such a box restriction is an intricated point process with highly non-trivial correlations which, except for some classical random matrix ensembles, cannot be described explicitly. On the other hand, at a macroscopic scale, one may look at the limit of the empirical measure of these eigenvalues. This is the (integrated) density of states. Of course, lots of information is lost in this procedure. However, some non-trivial properties of the dynamics can still be inferred. For instance, Lifshits tails, i.e. the dramatic decay of the density of state at the edge of the spectrum, are known to imply dynamical localization by means of well-known techniques such as fractional moments or multiscale analysis (see \cite{AW} and references therein for references on random operators).

There is a middle path which consists in looking at mesoscopic scales i.e. to consider the distribution of the eigenvalues in energy intervals of size inversely proportional to the side-length of the box and take a large box limit. This is what is usually understood as spectral statistics. The limiting point processes obtained in this way usually fall in a few different categories, highliting the fact that the mesoscopic spectral statistics of random operators often do not depend on the precise nature of the model under consideration, a phenomenon known in the probability literature as universality. As such, the study of this regime is essentially probabilistic: it focusses on the probabilistic nature of the spectrum which is not related to the dynamical properties of the operators in a completely obvious way.
Spectral statistics have been studied in various works, mainly for discrete and continuum Anderson models \cite{GK14,Killip-Nakano,Min07,Molch82,Killip-Stoiciu,VV,W}.

If we drop the randomness, there is hope to perform explicit computations on the free operator and, as a rule of thumbs, deep inside the spectrum, the limiting point process simply consists in evenly spaced points. This is known as the clock (or picket fence) process. Interestingly enough, this deterministic behaviour remains when the randomness is not too relevant, as we will see in our main working example.
In a sense, this behaviour is completely deterministic. On the other hand, for highly disordered random operators, the limiting object consists of random points independently scattered on the line i.e. a Poisson point process. The heuristics behind this phenomenon is quite clear. The eigenfunctions of the box restriction of such operators are extremely localized, hence, eigenfunctions corresponding to different eigenvalues are essentially supported in disjoint boxes, which results is mostly independent eigenvalues. A careful analysis then allows to approximate the spectrum of our finite dimensional operator by the spectrum of the direct sum of restrictions to smaller boxes, each one of them contributing with at most a single eigenvalue in each rescaled interval. This is exactly the way Poisson statistics arise as the limit of triangular arrays of independent point processes.

Now, this is not the end of the story. Random operators in certain ‘critical regimes’ are known to exhibit spectral statistics which interpolate between the clock and the Poisson behavior \cite{Killip-Stoiciu,VV, KVV,  AllezDumaz}: eigenvalue repulsion is not as strong as in the deterministic case but is strong enough to give rise to point processes with non-trivial correlations in the limit. From another point of view, eigenfunctions corresponding to different eigenvalues overlap but in a very subtile way.
Two examples of this phenomenon are given by the Schrödinger and Sine processes which appear as the spectral statistics of a wide range of random operators and, appealing to the probabilistic terminology, are then universal \cite{VV}. 

The literature on this phenomenon has grown in the last years, especially in the context of one-dimensional Anderson models with decaying potentials \cite{KVV, KU, Nakano, Kotani-Nakano17} (see also \cite{Killip-Stoiciu} for a unitary counterpart). 
This model was initially introduced (in arbitrary dimensions) as an attempt to understand how the absolutely continuous spectrum can survive the addition of a random potential \cite{Krishna, Bou1, Bou2, Figo}. Even prior to the study of its spectral statistics, the one-dimensional case received particular attention and is know to yield a rich phase diagram at the spectral level (see \cite{KLS} for the Anderson model and \cite{deOliveiraPrado,BMT01} for the model studied in this work).

This Poisson/Sine dichotomy is conjectured to apply to a large class of quantum systems. This phenomenon is the object of the Berry-Tabor and Bohigas-Gianini-Smith conjectures. We refer the reader to \cite[Chapter 17]{AW} for a discussion of this matter which falls well beyond the scope of this article. 

%%%%%%%%%%%%%%%%%%%%%%%%%%%%%%%%%%%%%%%%%%%%%%%%%%%%%%%%%%%%
%%%%%%%%%%%%%%%%%%%%%%%%%%%%%%%%%%%%%%%%%%%%%%%%%%%%%%%%%%%%
%%%%%%%%%%%%%%%%%%%%%%%%%%%%%%%%%%%%%%%%%%%%%%%%%%%%%%%%%%%%
%%%%%%%%%%%%%%%%%%%%%%%%%%%%%%%%%%%%%%%%%%%%%%%%%%%%%%%%%%%%

\vspace{2ex}

In this work, we consider a one-dimensional discrete Dirac model with two sorts of decaying potentials. One where the strength of the disorder is sent to $0$ with the size of the system (Model I) and one where a decaying envelope modulates an i.i.d. potential (Model II). 
This complements our work on the spectral regimes of Model II \cite{BMT01}. 
We recover the scaling limits obtained previously for analogous Anderson models (see Theorem \ref{thm:main}). As anticipated, when the disorder vanishes rapidly enough, spectral statistics are governed by the clock process. On the contrary, when the strength of the disorder is critically tuned, we recover the Schrödinger/Sine processes. As in the Anderson model, the case of a slowly vanishing potential is left open, although we strongly believe that spectral statistics should be Poissonian. This regime falls out of the scope of standard techniques and the point of view adopted here. We further comment on this case in Remark \ref{rk:Poisson} below and defer this problem to future work (see \cite{Kotani-Nakano17} for a related continuum model).

At the technical level, \cite{KVV} considers the scaling limit of transfer matrices, from where the scaling limit of the rescaled spectrum follows. Due to the matrix nature of the Dirac model, such a scheme becomes rather unmanageable and it is more convenient to work with  the scaling limit of a certain Pr\"ufer transform, which is shown to converge to the SDE defining the Schr\"odinger/Sine processes respectively
(Theorems \ref{thm:scaling-model-I} and \ref{thm:scaling-model-II}). 
The convergence of the rescaled spectrum, which is the object of our main result, then follows from elementary arguments which simplify the approach of \cite{KVV} (Theorem \ref{thm:convergence-pp}). 
We note that, with our techniques, we obtain convergence to the clock and Schr\"odinger/Sine processes at once in their corresponding regimes. 
As a preliminary, we discuss the density of states of our models (see Section \ref{sec:intro-dos}).

Full proofs will be given for Model I while proofs for Model II will be only sketched. In this case, the convergence of the rescaled spectrum is a consequence of certain scaling limits which can be proved very similarly to Model I, supplemented by general arguments which will not be repeated here. We will refer the reader to \cite{KVV} for the missing details.

%%%%%%%%%%%%%%%%%%%%%%%%%%%%%%%%%%%%%%%%%%%%%%%%%%%%%%%%%%%%
%%%%%%%%%%%%%%%%%%%%%%%%%%%%%%%%%%%%%%%%%%%%%%%%%%%%%%%%%%%%
%%%%%%%%%%%%%%%%%%%%%%%%%%%%%%%%%%%%%%%%%%%%%%%%%%%%%%%%%%%%
%%%%%%%%%%%%%%%%%%%%%%%%%%%%%%%%%%%%%%%%%%%%%%%%%%%%%%%%%%%%
%%%%%%%%%%%%%%%%%%%%%%%%%%%%%%%%%%%%%%%%%%%%%%%%%%%%%%%%%%%%
%%%%%%%%%%%%%%%%%%%%%%%%%%%%%%%%%%%%%%%%%%%%%%%%%%%%%%%%%%%%
%%%%%%%%%%%%%%%%%%%%%%%%%%%%%%%%%%%%%%%%%%%%%%%%%%%%%%%%%%%%
%%%%%%%%%%%%%%%%%%%%%%%%%%%%%%%%%%%%%%%%%%%%%%%%%%%%%%%%%%%%

\section{Models and Results}

%%%%%%%%%%%%%%%%%%%%%%%%%%%%%%%%%%%%%%%%%%%%%%%%%%%%%%%%%%%%
%%%%%%%%%%%%%%%%%%%%%%%%%%%%%%%%%%%%%%%%%%%%%%%%%%%%%%%%%%%%
%%%%%%%%%%%%%%%%%%%%%%%%%%%%%%%%%%%%%%%%%%%%%%%%%%%%%%%%%%%%
%%%%%%%%%%%%%%%%%%%%%%%%%%%%%%%%%%%%%%%%%%%%%%%%%%%%%%%%%%%%
We set $\n^* = \{ 1,2,\cdots \}$ and consider the Hilbert space $\h=l^2(\n^*,\C^2)$ 
with the canonical basis $(\delta_n^\pm)_n$. A vector $\Phi=(\Phi_n)_n\in\h$ is given by two sequences $\vp^\pm=(\vp_n^\pm)_n\in\h$ such that
\begin{eqnarray*}
	\Phi_n
	=
	\begin{pmatrix}
		\vp^+_n \\ \vp^-_n
	\end{pmatrix}\quad \mathrm{for}\ n\in\n^*,
\end{eqnarray*}
which we will occasionally denote by $\Phi=\vp^+ \otimes \vp^-$.

%%%%%%%%%%%%%%%%%%%%%%%%%%%%%%%%%%%%%%%%%%%%%%%%%%%%%%%%%%%%
%%%%%%%%%%%%%%%%%%%%%%%%%%%%%%%%%%%%%%%%%%%%%%%%%%%%%%%%%%%%
%%%%%%%%%%%%%%%%%%%%%%%%%%%%%%%%%%%%%%%%%%%%%%%%%%%%%%%%%%%%
%%%%%%%%%%%%%%%%%%%%%%%%%%%%%%%%%%%%%%%%%%%%%%%%%%%%%%%%%%%%

\subsection{Dirac operators in random potentials}

%%%%%%%%%%%%%%%%%%%%%%%%%%%%%%%%%%%%%%%%%%%%%%%%%%%%%%%%%%%%
%%%%%%%%%%%%%%%%%%%%%%%%%%%%%%%%%%%%%%%%%%%%%%%%%%%%%%%%%%%%
%%%%%%%%%%%%%%%%%%%%%%%%%%%%%%%%%%%%%%%%%%%%%%%%%%%%%%%%%%%%
%%%%%%%%%%%%%%%%%%%%%%%%%%%%%%%%%%%%%%%%%%%%%%%%%%%%%%%%%%%%

%%%%%%%%%%%%%%%%%%%%%%%%%%%%%%%%%%%%%%%%%%%%%%%%%%%%%%%%%%%%
%%%%%%%%%%%%%%%%%%%%%%%%%%%%%%%%%%%%%%%%%%%%%%%%%%%%%%%%%%%%
We consider the free Dirac operator $D$ defined by
\begin{eqnarray}
	D
	=
	\begin{pmatrix}
		m & d \\
		d^* & -m
	\end{pmatrix}\quad\mathrm{on}\quad\h,
\end{eqnarray}
with mass $m\ge0$ and where $d$ and $d^*$ are the shift operators acting as $(du)_n=u_n-u_{n+1}$ and $(d^*u)_n=u_n-u_{n-1}$ for $u\in\ell^2(\n^*,\C)$ with the convention $u_0=0$.
Explicitly, if $\Phi=\vp^+\otimes\vp^-\in\h$, then
\begin{eqnarray*}
	D\Phi
	=
	\begin{pmatrix}
		m \vp^+ + d\vp^-
		\\
		d^*\vp^+ - m \vp^-
	\end{pmatrix}.
\end{eqnarray*}
The operator $D$ is self-adjoint and satisfies 
$$ D^2=\begin{pmatrix}
\Delta+m^2&0\\0&\Delta+m^2\end{pmatrix},
$$
where $\Delta$ is the discrete Laplacian defined on $\ell^2(\n^*,\C)$ by $\Delta u(n):=2u(n)-u(n+1)-u(n-1)$, with the convention $u(0)=u(-1)=0$. 
This yields that $\sigma(D^2)=[m^2+m^2+4]$ \cite{Golenia,COP11}, and hence the spectrum of the free Dirac operator $D$ is absolutely continuous and is given by
$$\Sigma:=\sigma(D)=\sigma_\mathrm{ac}(D)=[-\sqrt{m^2+4},-m]\cup[m,\sqrt{m^2+4}].$$

%%%%%%%%%%%%%%%%%%%%%%%%%%%%%%%%%%%%%%%%%%%%%%%%%%%%%%%%%%%%
%%%%%%%%%%%%%%%%%%%%%%%%%%%%%%%%%%%%%%%%%%%%%%%%%%%%%%%%%%%%
To define the perturbed operators we introduce a probability space $(\Omega,\mathbb{F},\p)$ and a family of bounded independent and identically distributed random variables $\{\omega_{n,i};\, n\in\n^*,\, i=1,\, 2\}$. 
%%%%%%%%%%%%%%%%%%%%%%%%%%%%%%%%%%%%%%%%%%%%%%%%%%%%%%%%%%%%
%%%%%%%%%%%%%%%%%%%%%%%%%%%%%%%%%%%%%%%%%%%%%%%%%%%%%%%%%%%%
Let $\esp$ be the expectation with respect to $\p$. We assume that $\esp[\omega]=0$ and $\esp[\omega^2]=1$.
We will consider two different models. In both cases, the potential $\vn$ will be given by a multiplication operator acting on the canonical basis as
\begin{eqnarray}
	&&
	\vn \delta^+_n = \von(n) \delta^+_n,
	\quad
	\vn \delta^-_n = \vtn(n) \delta^-_n,
\end{eqnarray}
where $(\von(n))_n$ and $(\vtn(n))_n$ are random sequences specific to each model.
\bigskip

Let $\drcnf$ be the restriction of $D$ to box $\Lambda_L=[0,L]$ with boundary conditions $\vp^-_0=\vp^+_L=0$. We then define
\begin{eqnarray}
	\drcn = \drcnf + \vn, \quad \text{on}\quad  \ell^2(\Lambda_L,\C^2),
\end{eqnarray}
where the local potential is defined by

%%%%%%%%%%%%%%%%%%%%%%%%%%%%%%%%%%%%%%%%%%%%%%%%%%%%%%%%%%%%
%%%%%%%%%%%%%%%%%%%%%%%%%%%%%%%%%%%%%%%%%%%%%%%%%%%%%%%%%%%%
%%%%%%%%%%%%%%%%%%%%%%%%%%%%%%%%%%%%%%%%%%%%%%%%%%%%%%%%%%%%
%%%%%%%%%%%%%%%%%%%%%%%%%%%%%%%%%%%%%%%%%%%%%%%%%%%%%%%%%%%%
\vspace{2ex}

\begin{itemize}
	\item {\noindent  \textsc{{Model I (Vanishing coupling):}} }
	$\displaystyle \vin(n)
	=
	\vin^L(n)
	=
	\gamma \frac{\omega_{n,i}}{L^{\alpha}},
	\quad
	i = 1,\, 2.$
	
	\vspace{2ex}
	
	\item {\noindent \textsc{{Model II (Decaying potential):}}} 
$\displaystyle \vin(n)
	=
	\gamma \frac{\omega_{n,i}}{n^{\alpha}},
	\quad
	i = 1,\, 2.$
\end{itemize}

%{\noindent  \textsc{{Model I (Vanishing coupling):}} }
%$\displaystyle \vin(n)
%	=
%	\gamma \frac{\omega_{n,i}}{L^{\alpha}},
%	\quad
%	i = 1,\, 2.$
%%\begin{eqnarray}
%%	\vin(n)
%%	=
%%	\gamma \frac{\omega_{n,i}}{L^{\alpha}},
%%	\quad
%%	i = 1,\, 2.
%%\end{eqnarray}
%
%\vspace{2ex}
%
%%%%%%%%%%%%%%%%%%%%%%%%%%%%%%%%%%%%%%%%%%%%%%%%%%%%%%%%%%%%%
%%%%%%%%%%%%%%%%%%%%%%%%%%%%%%%%%%%%%%%%%%%%%%%%%%%%%%%%%%%%%
%%%%%%%%%%%%%%%%%%%%%%%%%%%%%%%%%%%%%%%%%%%%%%%%%%%%%%%%%%%%%
%%%%%%%%%%%%%%%%%%%%%%%%%%%%%%%%%%%%%%%%%%%%%%%%%%%%%%%%%%%%%
%
%{\noindent \textsc{{Model II (Decaying potential):}}} 
%$\displaystyle \vin(n)
%	=
%	\gamma \frac{\omega_{n,i}}{n^{\alpha}},
%	\quad
%	i = 1,\, 2.$
%\begin{eqnarray}
%	\vin(n)
%	=
%	\gamma \frac{\omega_{n,i}}{n^{\alpha}},
%	\quad
%	i = 1,\, 2.
%\end{eqnarray}

\vspace{2ex}

Let us recall the spectral and dynamical properties of Model II.
In \cite{BMT01}, we have obtained deloca- \\lization/localization results with respect to the value of $\alpha$. Compared to the Anderson model, this operator yields to additional technical difficulties due to its order one and matrix nature. We recovered all the spectral regimes from [84], the three regimes being characterized by  $\alpha>\frac12$ (with a.c. spectrum), $\alpha=\frac12$ (with a spectral transition depending on $\gamma$ in this case) and $\alpha<\frac12$ (p.p. spectrum) respectively. At criticality, we found an elementary argument implying delocalization in the spirit of \cite{GKT}. Finally, we showed dynamical localization for slow decay using the fractional moments method. This allowed us to work with single site distributions outside the scope of the Kunz-Souillard method used in \cite{Si82}. Our proof can be adapted to the Anderson model with some simplifications and extends the results of \cite{Si82} to the whole range of application of the fractional moments method. In \cite{BMT02}, we proved dynamical localization for the continuum Anderson model in a slowly decaying random potential by means of the continuum fractional moments method of \cite{AENSS,HSS-continousFMM}. This was left as an open problem in \cite{DS-KunzSouillard} where the authors developed a continuum version of the Kunz-Souillard method.
  
We now proceed with the formulation of our results.

%%%%%%%%%%%%%%%%%%%%%%%%%%%%%%%%%%%%%%%%%%%%%%%%%%%%%%%%%%%%
%%%%%%%%%%%%%%%%%%%%%%%%%%%%%%%%%%%%%%%%%%%%%%%%%%%%%%%%%%%%
%%%%%%%%%%%%%%%%%%%%%%%%%%%%%%%%%%%%%%%%%%%%%%%%%%%%%%%%%%%%
%%%%%%%%%%%%%%%%%%%%%%%%%%%%%%%%%%%%%%%%%%%%%%%%%%%%%%%%%%%%

\subsection{The density of states}
\label{sec:intro-dos}
%The behavior of the integrated density of states near fluctuations boundaries of the spectrum has been first understood on physical grounds (Lifshitz).
%The IDS measures how many electron energy levels can be found below a given energy per unit volume of a solid. It is also known as the spectral distribution function. 
The integrated density of states $\dosn$ (IDS) is defined as the number of eigenvalues of $D_{\omega,L}$ per unit volume i.e.\begin{equation*}
\dosn(E):=\frac{\sharp\{\text{eigenvalues of}\ D_{\omega,L}\ \text{less than}\  E\}}{|\Lambda_L|},
\end{equation*}
or, equivalently,
\begin{equation*}
\dosn(E):=\frac{1}{|\Lambda_L|}\esp \left(\tr P_{\omega,L}(E)\right)
%=\esp\left(\bra\delta_0,P_{\omega,L}\ \delta_0\ket\right)
\end{equation*}
where $P_{\omega,L}(E)=\chi_{(-\infty,E]}(D_{\omega,L})$ is the Fermi projector associated to $D_{\omega,L}$. We then let
$$D_{\omega} = \lim_{L\to\infty} \dosn,$$
provided the limit exists. Note that, for Model II, this corresponds to the IDS of an infinite-volume operator. We denote the analogous objects corresponding to $D_L$ and $D$ by $\dosnf$ and $\dos_0$ respectively. 

Our first result in an explicit expression for $\dos_0$. It might be well-known to the specialist but, as we did not find it stated in the literature, we include it for the convenience of the reader.

%%%%%%%%%%%%%%%%%%%%%%%%%%%%%%%%%%%%%%%%%%%%%%%%%%%%%%%%%%%%
%%%%%%%%%%%%%%%%%%%%%%%%%%%%%%%%%%%%%%%%%%%%%%%%%%%%%%%%%%%%
\begin{proposition}
	The limit $\displaystyle\dos_0 = \lim_{L\to\infty} \dosnf$ exists and admits the density $(2\pi)^{-1}\rho$, where
	\begin{eqnarray*}
	\rho(E)
	=
	\frac{4E}{\sqrt{(E^2-m^2)(m^2+4-E^2)}},
	\quad
	m < |E| < \sqrt{m^2+4}.
\end{eqnarray*}
\end{proposition}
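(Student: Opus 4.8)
The plan is to compute $\dos_0$ via the standard relation between the IDS of the free Dirac operator and that of the free discrete Laplacian, exploiting the identity $D^2 = \mathrm{diag}(\Delta+m^2,\Delta+m^2)$ recalled in the excerpt. First I would recall that the density of states of the free discrete Laplacian $\Delta$ on $\ell^2(\n^*,\C)$ (or $\Z$) is the arcsine-type law supported on $[0,4]$: writing $\lambda = 2-2\cos\theta$ with $\theta\in[0,\pi]$, the spectral measure is the push-forward of the uniform measure on $[0,\pi]$, so $\dos_\Delta$ has density $\frac{1}{2\pi}\frac{1}{\sqrt{\lambda(4-\lambda)}}$ on $(0,4)$. (One obtains this either from the explicit eigenvalues $2-2\cos(\pi k/(L+1))$ of the Dirichlet Laplacian on $\Lambda_L$ and a Riemann-sum/Weyl-law argument as $L\to\infty$, or from the known free resolvent.) Since the two diagonal blocks of $D^2$ are identical copies of $\Delta+m^2$, the density of states of $D^2$ is that of $\Delta$ translated by $m^2$, namely density $\frac{1}{2\pi}\frac{1}{\sqrt{(t-m^2)(m^2+4-t)}}$ on $(m^2,m^2+4)$, with the factor-of-two block multiplicity cancelled by the normalization $|\Lambda_L|$ counting $2L$ states (equivalently, one works per block).

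Next I would transfer this to $D$ itself. The spectrum $\Sigma$ of $D$ is the symmetric set $\pm[\,m,\sqrt{m^2+4}\,]$, and on $\Sigma$ the map $E\mapsto E^2$ is a bijection onto $[m^2,m^2+4]$ on each of the two arcs, with $D$ having a symmetric spectral decomposition (the off-diagonal structure of $D$ gives a symmetry $E\leftrightarrow -E$ on the free operator, so the limiting spectral measure is even on $\Sigma$). Hence $\dos_0$ is the even measure on $\Sigma$ whose push-forward under $E\mapsto E^2$ equals $\dos_{D^2}$. Concretely, for $E\in(m,\sqrt{m^2+4})$ one changes variables $t=E^2$, $dt = 2E\,dE$, and splits the mass of $\dos_{D^2}$ equally between $+E$ and $-E$; this yields the density
\[
\frac{1}{2\pi}\cdot \frac{2E}{\sqrt{(E^2-m^2)(m^2+4-E^2)}}\cdot\frac{1}{?}
\]
— I would track the constant carefully so as to land on $(2\pi)^{-1}\rho$ with $\rho(E) = \dfrac{4E}{\sqrt{(E^2-m^2)(m^2+4-E^2)}}$ as claimed; the factor $4 = 2\cdot 2$ arises from the Jacobian $2E$ together with the fact that $E^2$ is $2$-to-$1$ on $\Sigma$ while the two spectral arcs each carry the full $\dos_{D^2}$ mass. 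I would double-check normalization by verifying $\int_{\Sigma}(2\pi)^{-1}\rho(E)\,dE = 1$ via the substitution $u=E^2$, which reduces to $\int_{m^2}^{m^2+4}\frac{du}{2\pi\sqrt{(u-m^2)(m^2+4-u)}}=1$ (a standard arcsine integral).

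Finally, I would address existence of the limit $\dos_0=\lim_{L\to\infty}\dosnf$, which in this free case is entirely explicit: the Dirichlet restriction $D_L$ can be diagonalized (its singular values relate to those of the Dirichlet Laplacian on $\Lambda_L$, whose eigenvalues are $2-2\cos(\pi k/(L+1))$, $k=1,\dots,L$), so $\dosnf$ is a sum of point masses at $\pm\sqrt{m^2 + 2 - 2\cos(\pi k/(L+1))}$, and convergence of the empirical measure follows from convergence of Riemann sums. The main obstacle — really the only delicate point — is bookkeeping the constants and multiplicities correctly when passing from $D$ to $D^2$ and back (the block structure, the even symmetry, the boundary-condition-induced exact eigenvalues, and the Jacobian of $E\mapsto E^2$ must all be combined without double-counting); the analytic content is otherwise routine. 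An alternative, perhaps cleaner, route avoiding the boundary-condition details is to use Thouless-type formulas or the a.c. density-of-states formula $\rho(E) = \frac{1}{\pi}\,\mathrm{Im}\,\mathrm{tr}\,(D - E - i0)^{-1}$ for the half-line free operator and compute the resolvent trace explicitly; I would present whichever is shorter.
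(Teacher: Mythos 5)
Your route is viable, but it is genuinely different from the paper's. You compute the free IDS ``statically'': pass through $D^2=\mathrm{diag}(\Delta+m^2,\Delta+m^2)$, use the explicit (arcsine-type) density of states of the discrete Laplacian, and push it forward under $E\mapsto E^2$, splitting the mass over the two bands. The paper instead derives the density from machinery it needs anyway: Lemmas \ref{thm:comparison-resolvents} and \ref{thm:limit-DOS-boxes} (resolvent formula plus Helffer--Sj\"ostrand) show that the box IDS agrees with the infinite-volume one up to $O(L^{-1})$, and then the Pr\"ufer-phase scaling limit of Theorem \ref{thm:scaling-model-I} with $\gamma=0$, together with the characterization of $\Upsilon_L$ through the phase and its monotonicity in $\lambda$, shows that each window $[E_0,E_0+\tfrac{2\pi}{\rho(E_0)L})$ carries exactly one eigenvalue for large $L$; tiling $[E_0,E_0+\varepsilon]$ by such windows and letting $\varepsilon\to0$ gives the density $\rho/2\pi$. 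Your argument is more elementary and self-contained for the free operator; the paper's buys coherence with the rescaling used throughout (the same $\rho$ and the same Pr\"ufer clock asymptotics), and its resolvent comparison disposes of the boundary-condition issue that your exact-diagonalization step must handle by hand. On that point, note that $D_L^2$ is block diagonal with two \emph{different} boundary conditions (one block is $d_Ld_L^*+m^2$, the other $d_L^*d_L+m^2$), so the exact momenta are not both $\pi k/(L+1)$; this is harmless for the limit but should be said. Also, your $E\leftrightarrow-E$ symmetry is cleaner if phrased via singular values: for $\begin{pmatrix} m & A\\ A^* & -m\end{pmatrix}$ the spectrum is exactly $\{\pm\sqrt{m^2+\sigma^2}\}$ over the singular values $\sigma$ of $A$, which gives the even split without any hand-waving.

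The one substantive issue is precisely the constant bookkeeping you deferred, and two of the constants you wrote would not survive your own check: the per-site IDS of the Dirichlet Laplacian has density $\frac{1}{\pi\sqrt{\lambda(4-\lambda)}}$ on $(0,4)$, not $\frac{1}{2\pi\sqrt{\lambda(4-\lambda)}}$, and $\int_{m^2}^{m^2+4}\frac{du}{2\pi\sqrt{(u-m^2)(m^2+4-u)}}=\tfrac12$, not $1$. In fact, with the stated density $(2\pi)^{-1}\rho$ the total mass over the two bands of $\Sigma$ is $2$, i.e.\ the normalization is by the number of sites $|\Lambda_L|=L$ while each site carries two states, so insisting on total mass $1$ would land you on $(4\pi)^{-1}\rho$ instead of the claimed formula. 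The safest way to nail the constant within your approach is to bypass $D^2$ altogether: with the dispersion relation $E(k)=\pm\sqrt{m^2+2-2\cos k}$ and $k$ equidistributed on $[0,\pi]$ (one $k$ per site, two bands), the positive-band density is $\frac1\pi\frac{dk}{dE}=\frac{2E}{\pi\sqrt{(E^2-m^2)(m^2+4-E^2)}}=(2\pi)^{-1}\rho(E)$, which is the proposition's formula directly and makes the multiplicity accounting transparent.
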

%%%%%%%%%%%%%%%%%%%%%%%%%%%%%%%%%%%%%%%%%%%%%%%%%%%%%%%%%%%%
%%%%%%%%%%%%%%%%%%%%%%%%%%%%%%%%%%%%%%%%%%%%%%%%%%%%%%%%%%%%
The limit for the random operators turns out to be the same. This fact was already noticed by  \cite{DolaiKrishna} for $d\geq 3$.
%%%%%%%%%%%%%%%%%%%%%%%%%%%%%%%%%%%%%%%%%%%%%%%%%%%%%%%%%%%%
%%%%%%%%%%%%%%%%%%%%%%%%%%%%%%%%%%%%%%%%%%%%%%%%%%%%%%%%%%%%
\begin{theorem}
	For both models and for all $\alpha>0$, the limit
	\begin{eqnarray*}
		\dos = \lim_{L\to\infty} \dos_{\omega,L},
	\end{eqnarray*}
	exists and, furthermore, $\dos=\dos_0$ almost-surely.
\end{theorem}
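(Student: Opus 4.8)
The plan is to sandwich the random eigenvalue counting function of $\drcn$ between two shifted copies of the deterministic counting function of $\drcnf$, exploiting that in both models the random potential becomes small in operator norm once a bounded number of sites near the origin has been discarded; the statement then follows from the preceding Proposition and the continuity of $\dos_0$. Concretely, I would first fix a deterministic constant $M<\infty$ with $|\omega_{n,i}|\le M$ for all $n\in\n^*$, $i\in\{1,2\}$, on a single full-measure event $\Omega_0$ (possible since the $\omega_{n,i}$ are i.i.d.\ and bounded), and work on $\Omega_0$ throughout. For an integer $K\ge 0$ I split $\vn=\vn^{\le K}+\vn^{>K}$ into the part of the diagonal potential supported on sites $n\le K$ and the remainder, so that $\vn^{\le K}$ has rank at most $2(K+1)$ \emph{uniformly in $L$}, while the explicit form of the potentials yields
\[
\norm{\vn^{>K}}\ \le\ \gamma M\,\eta_{K,L},\qquad \eta_{K,L}=\begin{cases} L^{-\alpha}, & \text{Model I (where one takes }K=0\text{)},\\[1mm] K^{-\alpha}, & \text{Model II}.\end{cases}
\]

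Next I would use two elementary consequences of the min--max principle on the finite-dimensional space $\ell^2(\Lambda_L,\C^2)$, writing $N_A(E):=\#\{\text{eigenvalues of }A\le E\}$: a self-adjoint perturbation of rank $\le r$ changes $N_{\cdot}(E)$ by at most $r$, and a perturbation $W$ with $\norm{W}\le\eta$ satisfies $N_A(E-\eta)\le N_{A+W}(E)\le N_A(E+\eta)$. Applying both through the intermediate operator $\drcnf+\vn^{>K}$ gives, for every $E$ and every $L$,
\[
N_{\drcnf}\!\big(E-\gamma M\eta_{K,L}\big)-2(K+1)\ \le\ N_{\drcn}(E)\ \le\ N_{\drcnf}\!\big(E+\gamma M\eta_{K,L}\big)+2(K+1).
\]
Dividing by $|\Lambda_L|$ and letting $L\to\infty$ with $K$ fixed, the error terms $2(K+1)/|\Lambda_L|$ vanish and $N_{\drcnf}(E')/|\Lambda_L|=\dosnf(E')\to\dos_0(E')$ by the Proposition; hence for Model II
\[
\dos_0\!\big(E-\gamma M K^{-\alpha}\big)\ \le\ \liminf_{L}\dosn(E)\ \le\ \limsup_{L}\dosn(E)\ \le\ \dos_0\!\big(E+\gamma M K^{-\alpha}\big),
\]
and sending $K\to\infty$, using that $\dos_0$ is continuous (its density $\rho/2\pi$ has only integrable singularities, at the band edges), pins both sides to $\dos_0(E)$. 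For Model I the same bound holds with $\gamma M L^{-\alpha}\to0$, so monotonicity of $\dosnf$ gives $\limsup_L\dosnf\!\big(E+\gamma M L^{-\alpha}\big)\le\dos_0(E+\delta)$ for every $\delta>0$, hence $\le\dos_0(E)$, together with the matching lower bound; again $\dosn(E)\to\dos_0(E)$. In both cases the limit exists and equals the deterministic function $\dos_0$, and since the argument runs entirely on $\Omega_0$ this yields the almost-sure statement; if instead $\dosn$ is understood via the averaged formula $|\Lambda_L|^{-1}\esp\,\tr P_{\omega,L}$, one simply takes expectations of the deterministic bounds above.

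The one step that is not completely routine is Model II, where $\vn$ is of order one near the origin and hence not small in norm: it is handled by the peeling above, which separates an $O(K)$-rank piece on the first $K$ sites — negligible after normalizing by $|\Lambda_L|$ — from a norm-$O(K^{-\alpha})$ tail. Everything else reduces to Weyl's inequality and the continuity of $\dos_0$ supplied by the Proposition.
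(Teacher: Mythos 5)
Your argument is correct, but it is not the route the paper takes. The paper compares $\drcn$ and $\drcnf$ at the level of resolvents: by the resolvent identity and Cauchy--Schwarz, the diagonal Green's functions satisfy $\sum_{j,\sigma}|G_{\omega,L}(z;j,\sigma;j,\sigma)-G_{0,L}(z;j,\sigma;j,\sigma)|\leq C|z|^{-2}\sum_{k}(|\vo(k)|+|\vt(k)|)=o(L)$, and the Helffer--Sj\"ostrand formula then converts this into $|\dosn(\varphi)-\dosnf(\varphi)|\to 0$ for smooth test functions $\varphi$, after which the Proposition identifies the limit; note that this argument uses only the summability bound $\sum_{k\leq L}|V_\omega(k)|=o(L)$, which holds for both models and all $\alpha>0$ without any splitting of the potential. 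Your proof instead works directly with eigenvalue counting functions: you peel off a finite-rank piece near the origin (needed only for Model II, where the potential is order one there), control the tail in operator norm, and invoke Weyl-type interlacing (rank perturbations move $N_\cdot(E)$ by at most the rank, norm-$\eta$ perturbations shift the energy argument by at most $\eta$), closing the argument with the continuity of $\dos_0$ supplied by the explicit density. What your approach buys is elementary machinery (min--max only, no quasi-analytic extensions) and a pointwise statement $\dosn(E)\to\dos_0(E)$ at every energy, rather than convergence tested against smooth functions; what the paper's approach buys is that it never needs the potential to be small in sup norm away from finitely many sites, only $\ell^1$-smallness per unit volume, so it would extend with no change to unbounded single-site distributions with suitable moments, where your fixed deterministic bound $M$ and the norm estimate on the tail would have to be replaced by a stochastic argument. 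Both proofs use the same external input, namely the Proposition for the free IDS, and your continuity and almost-sure reductions (working on the full-measure event of uniform boundedness, and taking expectations for the averaged definition of $\dosn$) are sound.
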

%%%%%%%%%%%%%%%%%%%%%%%%%%%%%%%%%%%%%%%%%%%%%%%%%%%%%%%%%%%%
%%%%%%%%%%%%%%%%%%%%%%%%%%%%%%%%%%%%%%%%%%%%%%%%%%%%%%%%%%%%

%%%%%%%%%%%%%%%%%%%%%%%%%%%%%%%%%%%%%%%%%%%%%%%%%%%%%%%%%%%%
%%%%%%%%%%%%%%%%%%%%%%%%%%%%%%%%%%%%%%%%%%%%%%%%%%%%%%%%%%%%
%%%%%%%%%%%%%%%%%%%%%%%%%%%%%%%%%%%%%%%%%%%%%%%%%%%%%%%%%%%%L
%%%%%%%%%%%%%%%%%%%%%%%%%%%%%%%%%%%%%%%%%%%%%%%%%%%%%%%%%%%%

\subsection{Spectral statistics}

%%%%%%%%%%%%%%%%%%%%%%%%%%%%%%%%%%%%%%%%%%%%%%%%%%%%%%%%%%%%
%%%%%%%%%%%%%%%%%%%%%%%%%%%%%%%%%%%%%%%%%%%%%%%%%%%%%%%%%%%%

We fix $E\in(m, \sqrt{m^2+4})$ and look for eigenvalues of the form $E+\frac{\lambda}{\rho L}$ where $\rho=\rho(E)$. 
This is equivalent to determining the spectrum of the operator 
\begin{eqnarray*}
	\drcnr=\rho L(\drcn-E),
\end{eqnarray*}
with the specified boundary conditions $\vp^-_0=\vp^+_L=0$. We define $\spn$ as the set of the eigenvalues of $\drcnr$ i.e. the point process given by
\begin{eqnarray*}
	\spn 
	=
	\left\{
		\lambda\in\R:\, \det \left( \drcnr - \lambda \right)=0
	\right\}.
\end{eqnarray*}
This is our main object of study. We will be interested in its scaling limit as $L\to\infty$. Later, we will characterize $\spn$ in terms of a phase process which is amenable for scaling limits. But first, we need to introduce the three families of limiting processes we will encounter in this work.

\vspace{1ex}

%%%%%%%%%%%%%%%%%%%%%%%%%%%%%%%%%%%%%%%%%%%%%%%%%%%%%%%%%%%%
%%%%%%%%%%%%%%%%%%%%%%%%%%%%%%%%%%%%%%%%%%%%%%%%%%%%%%%%%%%%
%We now define the point processes that will appear in the limit.

\vspace{2ex}

%%%%%%%%%%%%%%%%%%%%%%%%%%%%%%%%%%%%%%%%%%%%%%%%%%%%%%%%%%%%
%%%%%%%%%%%%%%%%%%%%%%%%%%%%%%%%%%%%%%%%%%%%%%%%%%%%%%%%%%%%
\noindent \textsc{\underline{The clock process:}} For $\eta\in\R$, we define
\begin{eqnarray}
	\textbf{Clock}_{\eta}
	=
	\left\{
		2\pi k+\pi + 2\eta,\, k\in\Z
	\right\}.
\end{eqnarray}
This will correspond to the scaling limit of $\spn$ for both models and $\alpha>\frac12$.
%%%%%%%%%%%%%%%%%%%%%%%%%%%%%%%%%%%%%%%%%%%%%%%%%%%%%%%%%%%%
%%%%%%%%%%%%%%%%%%%%%%%%%%%%%%%%%%%%%%%%%%%%%%%%%%%%%%%%%%%%

\vspace{2ex}

%%%%%%%%%%%%%%%%%%%%%%%%%%%%%%%%%%%%%%%%%%%%%%%%%%%%%%%%%%%%
%%%%%%%%%%%%%%%%%%%%%%%%%%%%%%%%%%%%%%%%%%%%%%%%%%%%%%%%%%%%
\noindent \textsc{\underline{The Scr\"odinger process:}} Consider the family of coupled SDEs
\begin{eqnarray}
\label{eq:SDE-schrodinger}
		d\varphi^{\lambda}
		&=&
		\lambda dt
		+
		\text{Re}\left\{
			e^{-i\varphi^{\lambda}}dB
		\right\}
		+
		dW,
		\quad
		\varphi^{\lambda}(0)=0,
	\end{eqnarray}
	for $\lambda\in\R$ and $t\in[0,1]$ and where
	$B$ and $W$ are independent complex standard motion and one-dimensional standard Brownian motions respectively. From \cite{KVV}, Corollary 4, we know that there exists a unique strong solution for such a system and that, for each fixed $t$, the function $\lambda\mapsto\vartheta^{\lambda}(t)$ is almost-surely analytic and increasing. 
	For $\tau>0$, we define the Schr\"odinger point process
	\begin{eqnarray}\label{eq:Schodinger point process}
		\textbf{Sch}_{\tau}
		=
		\{\lambda:\, \varphi^{\lambda/\tau}(\tau) \in 2\pi\Z\}.
	\end{eqnarray}
	This will correspond to the scaling limit of $\spn$ for Model I and $\alpha=\frac12$.
%%%%%%%%%%%%%%%%%%%%%%%%%%%%%%%%%%%%%%%%%%%%%%%%%%%%%%%%%%%%
%%%%%%%%%%%%%%%%%%%%%%%%%%%%%%%%%%%%%%%%%%%%%%%%%%%%%%%%%%%%	

\vspace{2ex}

%%%%%%%%%%%%%%%%%%%%%%%%%%%%%%%%%%%%%%%%%%%%%%%%%%%%%%%%%%%%
%%%%%%%%%%%%%%%%%%%%%%%%%%%%%%%%%%%%%%%%%%%%%%%%%%%%%%%%%%%%
\noindent \textsc{\underline{The Sine process:}} 
Let $\beta\in\R$ and consider the family of coupled SDEs
\begin{eqnarray}
\label{eq:SDE-sine}
		d\alpha^{\lambda}
		&=&
		\lambda \frac{\beta}{4}e^{-\frac{\beta}{4}t} dt
		+
		\text{Re}\left\{
			\left(
				e^{-i\alpha^{\lambda}}-1
			\right)\,
			dZ
		\right\},
		\quad
		\alpha^{\lambda}(0)=0,
	\end{eqnarray}
%\begin{eqnarray}
%\label{eq:SDE-sine}
%		d\alpha^{\lambda}
%		&=&
%		\lambda dt
%		+
%		\frac{1}{\sqrt{1-t}}
%		\text{Re}\left\{
%			\left(
%				e^{-i\alpha^{\lambda}}dB-1
%			\right)\,
%			dB
%		\right\}
%		+
%		dW,
%		\quad
%		\alpha^{\lambda}(0)=0
%	\end{eqnarray}
	for $\lambda\in\R$ and $t\geq 0$ and where $Z$ is a complex standard Brownian motion. Then, the function 
	$t\mapsto \lfloor \alpha^{\lambda}(t)/2\pi \rfloor$ is non-decreasing and the limit $\displaystyle\alpha_{\infty}(\lambda):=\lim_{t\to\infty} \alpha^{\lambda}(t)$ satisfies $\alpha_{\infty}(\lambda)\in 2\pi \Z$. The sine process in the interval $[\lambda_1,\lambda_2]$ is then defined as
	\begin{eqnarray}
		\textbf{Sine}_{\beta}[\lambda_1,\lambda_2]
		=
		\frac{\alpha_{\infty}(\lambda_2)-\alpha_{\infty}(\lambda_1)}{2\pi}.
	\end{eqnarray}
%	for $\lambda\in\R$ and $t\geq 0$ and where
%	$B$ and $W$ are independent complex standard motion and one-dimensional standard Brownian motions respectively. Let $g(\lambda)=\lim_{t\to1^-} \alpha^{\lambda}(t)$, which is known to exist and to be an integer multiple of $2\pi$ [KVV]. For $\beta>0$, we define the $\textbf{Sine}_{\beta}$ process as
%	\begin{eqnarray}
%		\textbf{Sine}_{\beta}
%		=
%		\left\{
%			\lambda:\, g(\lambda) \in 2\pi\Z
%		\right\}.		
%	\end{eqnarray}
	This will correspond to the scaling limit of $\spn$ for Model II and $\alpha=\frac12$.
%	
%	{\color{red} There is no $\beta$ in the SDE. Check the equations from Nakano.}
%%%%%%%%%%%%%%%%%%%%%%%%%%%%%%%%%%%%%%%%%%%%%%%%%%%%%%%%%%%%
%%%%%%%%%%%%%%%%%%%%%%%%%%%%%%%%%%%%%%%%%%%%%%%%%%%%%%%%%%%%

%%%%%%%%%%%%%%%%%%%%%%%%%%%%%%%%%%%%%%%%%%%%%%%%%%%%%%%%%%%%
%%%%%%%%%%%%%%%%%%%%%%%%%%%%%%%%%%%%%%%%%%%%%%%%%%%%%%%%%%%%
%Considering the point process $\xi_{L,E}$ composed of the rescaling eigenvalues $\{E+\frac{\lambda}{\rho L}\}_j$ of the finite box operator $\drcnr=\rho L(\drcn-E)$ as $L\to\infty$ is given by 
%\begin{itemize}
%\item Clock process $(\theta(E))$ for $\alpha>\frac12$ 
%\item Sine process $(\beta(E))$ for $\alpha=\frac12$
%\item Poison process for $\alpha<\frac12$
%\end{itemize}

%%%%%%%%%%%%%%%%%%%%%%%%%%%%%%%%%%%%%%%%%%%%%%%%%%%%%%%%%%%%
%%%%%%%%%%%%%%%%%%%%%%%%%%%%%%%%%%%%%%%%%%%%%%%%%%%%%%%%%%%%

\vspace{1ex}

Let us introduce the proper notion of convergence for point processes. Note that a point process can be interpreted as a random point measure by considering its empirical distribution.
We say that a sequence of random measures $(\mu_n)_n$ on $(\mathbb{R},\mathcal{B}(\mathbb{R}))$ converges in law to $\mu$ in the vague convergence topology if
\begin{eqnarray*}
	\lim_{n\to\infty}
	\esp\left[
		e^{-\int f \, d\mu_n}
	\right]
	=
	\esp\left[
		e^{-\int f \, d\mu}
	\right],
\end{eqnarray*}
for all continuous non-negative compactly supported functions $f:\mathbb{R}\to\mathbb{R}$ \cite{Durrett}.

%%%%%%%%%%%%%%%%%%%%%%%%%%%%%%%%%%%%%%%%%%%%%%%%%%%%%%%%%%%%
%%%%%%%%%%%%%%%%%%%%%%%%%%%%%%%%%%%%%%%%%%%%%%%%%%%%%%%%%%%%
\bigskip

Let us now formulate our main result.  In the theorem below, we consider the positive component of the spectrum as the other case is similar. We also omit the energy level $E=\sqrt{m^2+2}$ which yields similar results but requires a slightly different treatment (see \cite[Theorem 7]{KVV}).
For $x\in\R$, let $[x]$ denote the unique value in $[0,\pi)$ such that $x=[x](\text{mod}\, 2\pi)$.

%%%%%%%%%%%%%%%%%%%%%%%%%%%%%%%%%%%%%%%%%%%%%%%%%%%%%%%%%%%%
%%%%%%%%%%%%%%%%%%%%%%%%%%%%%%%%%%%%%%%%%%%%%%%%%%%%%%%%%%%%
\begin{theorem}
\label{thm:main}
	Let $k\in(-\pi,-\frac{\pi}{2})$ with $k\neq -\frac{3\pi}{4}$, let $E\in(m,\sqrt{m^2+4})$ be the corresponding energy such that  $\cos k = - \frac{1}{2} \sqrt{E^2-m^2}$. Let $\sigma^2 = \frac{p_1(E)^2+p_2(E)^2}{\sin^2(2k)}$ and $\eta_L = (2L-1)k$.
	We have the following convergences in law in the vague convergence topology:
	\begin{enumerate}
		\item[1.-] \textsc{\underline{Model I}}
			\begin{enumerate}
				\item[1a.-] If $\alpha>\frac{1}{2}$, then 
				\begin{eqnarray*}
					\Upsilon_{\omega,L} - [2\eta_L]-\pi
					\,
					\longrightarrow
					\,
					\textbf{Clock}_{0}.
				\end{eqnarray*}
				
				\vspace{1ex}
		
				\item[1b.-] If $\alpha=\frac12$, then
				\begin{eqnarray*}
					\Upsilon_{\omega,L} - [2\eta_L]-\pi
					\quad
					\longrightarrow
					\quad
					\textbf{Sch}_{\sigma^2}.
				\end{eqnarray*}
			\end{enumerate}
		
		\item[2.-] \textsc{\underline{Model II}}
			\begin{enumerate}
				\item[2a.-] If $\alpha>\frac{1}{2}$, then 
				\begin{eqnarray*}
					\Upsilon_{\omega,L} - [2\eta_L]-\pi
					\,
					\longrightarrow
					\,
					\textbf{Clock}_{0}.
				\end{eqnarray*}
				
				\vspace{1ex}
				
				\item[2b.-] If $\alpha=\frac12$, then
				\begin{eqnarray*}
					\Upsilon_{\omega,L} - [2\eta_L]-\pi
					\quad
					\longrightarrow
					\quad
					\textbf{Sine}_{\beta},
				\end{eqnarray*}
				with $\beta=\tfrac{\! 2}{\,\sigma^2}$.
			\end{enumerate}
	\end{enumerate}
\end{theorem}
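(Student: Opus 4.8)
The plan is to reduce the theorem to a one-dimensional scaling limit for a Pr\"ufer phase together with a monotonicity (oscillation) argument, which is the route announced in the introduction through Theorems~\ref{thm:scaling-model-I}, \ref{thm:scaling-model-II} and \ref{thm:convergence-pp}.

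\emph{Step 1 (Pr\"ufer phase and the eigenvalue equation).} First I would rewrite the eigenvalue problem $\drcnr\Phi=\lambda\Phi$, i.e. $\drcn\Phi=(E+\lambda/(\rho L))\Phi$ with the boundary conditions $\varphi^-_0=\varphi^+_L=0$, through the $2\times2$ transfer matrices $\transn=\transn(\lambda)$. Diagonalising the free transfer matrix, whose eigenvalues are $e^{\pm ik}$ with $\cos k=-\tfrac12\sqrt{E^2-m^2}$, one writes the solution at site $n$ in polar form: a Pr\"ufer radius $R^\lambda_n>0$ and a phase $\theta^\lambda_n\in\R$, and then a reduced phase $\varphi^\lambda_n:=\theta^\lambda_n+2nk$ for which the free dynamics is a pure rotation by $2k$ per step. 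The left boundary condition fixes $\theta^\lambda_0$, and $\varphi^+_L=0$ holds if and only if $\theta^\lambda_L$ belongs to a fixed coset of $\pi\Z$; subtracting the deterministic centring $[2\eta_L]+\pi$ with $\eta_L=(2L-1)k$ puts this into the form of the statement. The parameters $\eta_L$ (free winding), the rotation angle $2k$, and $\sigma^2=(p_1(E)^2+p_2(E)^2)/\sin^2(2k)$ all drop out of this computation: $\sigma^2$ is $\esp[\omega^2]=1$ times the squared sensitivity of a single phase increment to the two independent potential channels $\omega_{n,1},\omega_{n,2}$, and the $\sin(2k)$ comes from the change to polar coordinates.

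\emph{Step 2 (Monotonicity and reduction).} The structural input is that, for fixed $\omega$ and $L$, the map $\lambda\mapsto\theta^\lambda_L$ is strictly increasing --- the Dirac analogue of Sturm oscillation --- which follows from a Pr\"ufer--Wronskian identity using that $\partial_\lambda$ of the perturbation (subtracting $\lambda$ from $\drcnr$) has a definite sign. Hence, for any compact interval $[\lambda_1,\lambda_2]$,
\begin{equation*}
\#\bigl(\Upsilon_{\omega,L}\cap(\lambda_1,\lambda_2]\bigr)=\Bigl\lfloor\tfrac{1}{\pi}\bigl(\theta^{\lambda_2}_L-\theta^{\lambda_1}_L\bigr)\Bigr\rfloor+O(1),
\end{equation*}
with an $O(1)$ boundary defect that is irrelevant for the vague limit. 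Therefore the vague convergence of $\Upsilon_{\omega,L}$ reduces to joint convergence in law of the finitely many centred marginals $(\theta^{\lambda_1}_L,\dots,\theta^{\lambda_j}_L)$, provided the limiting phase is still monotone in $\lambda$ and, almost surely, meets the relevant lattice only finitely often on compacts and transversally, so that the floor functional is continuous along the convergence. This is Theorem~\ref{thm:convergence-pp}, and the almost-sure regularity needed for the limit is exactly the analyticity and strict monotonicity of $\lambda\mapsto\vartheta^\lambda(t)$ (resp. $\lambda\mapsto\alpha_\infty(\lambda)$) quoted from \cite{KVV} after \eqref{eq:SDE-schrodinger}.

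\emph{Step 3 (Scaling limit of the phase).} It remains to prove the phase scaling limits. One expands the one-step recursion for $\varphi^\lambda_n$ into a deterministic drift plus a martingale increment driven by $\omega_{n,1},\omega_{n,2}$: in Model I with $\alpha=\tfrac12$ the drift due to $\lambda$ accumulates to a linear term over $n\le tL$, while the martingale increments are of size $\sim L^{-1/2}$ with conditional variance $\sim\sigma^2/L$, so the martingale functional central limit theorem (Ethier--Kurtz / Stroock--Varadhan) gives convergence of $t\mapsto\varphi^\lambda_{\lfloor tL\rfloor}$ to a solution of \eqref{eq:SDE-schrodinger}; the term $\mathrm{Re}\{e^{-i\varphi}dB\}$ appears because the phase increment couples to $\cos\varphi^\lambda_n$ and $\sin\varphi^\lambda_n$, and the independent $dW$ is the part of the noise that does not see the phase (this is where the two decoupled potential channels are used). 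In Model II with $\alpha=\tfrac12$ the envelope $n^{-1/2}$ makes the accumulated variance grow like $\sigma^2\log n$, so after the time change $n\mapsto\log n/\log L$ the same argument produces the Sine SDE \eqref{eq:SDE-sine}. For $\alpha>\tfrac12$ the total conditional variance of the martingale part tends to $0$ (for Model I, $\sum_{n\le L}L^{-2\alpha}=L^{1-2\alpha}\to0$; for Model II it is summable and negligible against the $\lambda$-drift after rescaling), so $\varphi^\lambda_L\to\lambda$ in probability. Making the error terms in this expansion uniform in $\lambda$ over compacts, and verifying the Lindeberg condition, is the main technical obstacle; all the remaining ingredients are robust.

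\emph{Step 4 (Identification of the limits).} Feeding the limiting phase into the monotone functional of Step~2 concludes the argument. For $\alpha>\tfrac12$ (both models) the limiting eigenvalue condition degenerates to $\lambda\in\pi+2\pi\Z$, i.e. $\textbf{Clock}_0$. For $\alpha=\tfrac12$, Model I, the condition becomes $\{\lambda:\ \psi^\lambda(1)\in2\pi\Z\}$ for the limiting phase $\psi$; matching its drift (proportional to $\lambda$) and diffusion coefficient (carrying the factor $\sigma^2$) with \eqref{eq:SDE-schrodinger} shows $\psi$ is a time-rescaled copy of $\varphi$, and unwinding the rescaling gives precisely $\textbf{Sch}_{\sigma^2}$ of \eqref{eq:Schodinger point process}. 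For $\alpha=\tfrac12$, Model II, matching the logarithmic variance accumulation with the exponential clock $e^{-\beta t/4}$ of \eqref{eq:SDE-sine} yields $\textbf{Sine}_\beta$ with $\beta=2/\sigma^2$. Model II is treated exactly as in \cite{BMT01,KVV} once equipped with the reductions of Steps~1--2, and the excluded cases $k=-\tfrac{3\pi}{4}$ (the resonance $E=\sqrt{m^2+2}$) and the negative spectral band are handled as in \cite{KVV}.
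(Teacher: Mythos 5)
Your Steps 1--3 for Model I follow the paper's own route: the Pr\"ufer characterization of $\Upsilon_L$, a Markov-chain-to-diffusion limit for the phase (the paper uses Lemma \ref{thm:general-convergence-finite-dimensional}, i.e.\ KVV's Proposition 27, which is exactly your martingale FCLT step), monotonicity of $\lambda\mapsto\theta^\lambda_L(L)$ (Lemma \ref{thm:monotonicity-phases}, proved there via a Herglotz/Green-function identity rather than your Wronskian argument, but to the same effect), tightness in $\lambda$ via the stochastic Dini lemma, and the counting argument of Proposition \ref{thm:convergence-pp}. Two small points there: an $O(1)$ defect is \emph{not} automatically irrelevant for vague convergence of point processes --- your later transversality remark is what actually saves the argument, and the paper instead uses Skorohod representation plus uniform convergence of the increasing phase functions; and to identify the limit in case 1b you still need the shift-equivariance in law of the Schr\"odinger phase ($\varphi^{\lambda-\theta}(t)+\theta t\stackrel{d}{=}\varphi^{\lambda}(t)$, KVV Lemma 19) to absorb the centering $[2\eta_L]+\pi$, together with a subsequence argument since $[2\eta_L]$ does not converge; the paper does both explicitly.

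The genuine gap is case 2b. Your proposed mechanism for Model II at $\alpha=\tfrac12$ --- keep the envelope $n^{-1/2}$, note that the accumulated variance grows like $\sigma^2\log n$, and apply the time change $n\mapsto\log n/\log L$ --- does not work: under that parametrization the accumulated variance at macroscopic time $s$ is of order $s\,\sigma^2\log L$, so the absolute phase $\vartheta^\lambda_L$ has no limit at all, and no diffusive limit of the form \eqref{eq:SDE-sine} can be extracted this way. The paper's actual device is different: it \emph{reverses} the envelope to $(L-j)^{-\alpha}$ (harmless in law), proves convergence of the phase on compact subsets of $[0,1)$ to the SDE \eqref{eq:SDE-sine-2}, whose noise coefficient $\sigma/\sqrt{1-t}$ is singular at $t=1$, and only then applies the deterministic time change $t\mapsto 1-e^{-\beta t/4}$ with $\beta=2/\sigma^2$ to recover \eqref{eq:SDE-sine}. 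Even with this in hand, the point process lives at the endpoint $t=1$, where the limiting SDE is singular, so one must prove convergence of the \emph{relative} phase $\vartheta^{\lambda}_L(L)-\vartheta^{0}_L(L)$, i.e.\ \eqref{eq:convergence-relative-phase}; this is the genuinely hard step, which the paper explicitly flags and defers to Section 6 of \cite{KVV} and Section 6 of \cite{VV}. Your sketch contains no mechanism for this endpoint control, and the blanket "treated exactly as in \cite{KVV}" does not repair the incorrect time-change claim; a correct write-up must either reproduce the reversed-envelope scheme and the relative-phase argument, or cite them precisely in place of Step 3's Model II paragraph. (Your one-line justification of 2a --- ``summable, hence negligible'' --- likewise needs the same endpoint/relative-phase care, since the tail variance near $j=L$ is $O(1)$, not $o(1)$.)
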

%%%%%%%%%%%%%%%%%%%%%%%%%%%%%%%%%%%%%%%%%%%%%%%%%%%%%%%%%%%%
%%%%%%%%%%%%%%%%%%%%%%%%%%%%%%%%%%%%%%%%%%%%%%%%%%%%%%%%%%%%

%%%%%%%%%%%%%%%%%%%%%%%%%%%%%%%%%%%%%%%%%%%%%%%%%%%%%%%%%%%%
%%%%%%%%%%%%%%%%%%%%%%%%%%%%%%%%%%%%%%%%%%%%%%%%%%%%%%%%%%%%
This will be proved by considering the scaling limit of a certain Pr\"ufer transform (see Theorem \ref{thm:scaling-model-I} and \ref{thm:scaling-model-II} below). More precisely, the SDE \eqref{eq:SDE-schrodinger} and a time-changed version of \eqref{eq:SDE-sine} will be obtained as the scaling limit of the Pr\"ufer phases. On the other hand,
in the case of Model I at the critical value $\alpha=\frac{1}{2}$, we can identify the limiting shape of the eigenfunctions by looking at the scaling limit of the Pr\"ufer radii.
If we let $\Psi^{\lambda}_L=\Psi^{\lambda}_{L,+}\otimes \Psi^{\lambda}_{L,-}$ be a normalized eigenfunction corresponding to $\lambda \in \Upsilon_L$, then, by the scaling limits from the aforementioned theorems,
\begin{eqnarray*}
	\Big{\{} \left(\lambda, L |\Psi^{\lambda}_L(Lt)|^2 dt\right):\, \lambda\in\Upsilon_{\omega,L} \Big{\}}
	\Rightarrow
	\Big{\{} \left(\lambda, |\Psi^{\lambda}(t)|^2dt\right):\, \lambda \in \textbf{Sch}_{\sigma^2}\Big{\}},
\end{eqnarray*}
where 
\begin{eqnarray*}
	\left|
		\Psi^{\lambda}(t)
	\right|^2
	=
	\frac{e^{2r^{\lambda}(t)}}{\int^1_0 e^{2r^{\lambda}(s)}ds},
\end{eqnarray*}
and $r^{\lambda}$, which corresponds to the scaling limit of the Pr\"ufer radii, satisfies the system of SDEs \eqref{eq:SDEs-model-I} below.
It is possible to give a better characterization of the asymptotic shape of the eigenfunctions. The following theorem can be obtained from the scaling limits contained in Theorem \ref{thm:scaling-model-I}, combined with the general arguments of \cite{RV} which we will not repeat here. 
%%%%%%%%%%%%%%%%%%%%%%%%%%%%%%%%%%%%%%%%%%%%%%%%%%%%%%%%%%%%
%%%%%%%%%%%%%%%%%%%%%%%%%%%%%%%%%%%%%%%%%%%%%%%%%%%%%%%%%%%%
\begin{theorem}
	Consider Model I with $\alpha=\frac{1}{2}$. 
	Let $\mu$ be picked at random from the eigenvalues of $D_{\omega,L}$, let $\Psi^{\mu}_L=\Psi^{\mu}_{L,+}\otimes \Psi^{\mu}_{L,-}$ be the corresponding eigenfunction and write
	\begin{eqnarray*}
		|\Psi^{\mu}(j)|^2=|\Psi^{\mu}_+(j)|^2+|\Psi^{\mu}_-(j)|^2.
	\end{eqnarray*}
	Then,
	\begin{eqnarray*}
		\left(
			\mu, \, L|\Psi^{\mu}(Lt)|^2dt
		\right)
		\Rightarrow
		\left(
			E,
			\frac{S(\rho(E)^2(t-U))\, dt}{\int^1_0 S(\rho(E)^2(s-U))\, ds},
		\right),
	\end{eqnarray*}
	where 
	\begin{eqnarray*}
		S(t)
		=
		\exp\left( 
			\frac{\mathcal{B}_t}{\sqrt{2}}-\frac{|t|}{4}
		\right),
	\end{eqnarray*}
	$E$, $U$ and $\mathcal{B}$ are independent and such that
	\begin{itemize}
		\item $E$ is distributed according to $\rho(x)\, dx$,
		
		\vspace{1ex}
		
		\item $U$ is uniformly distributed on $[0,1]$,
		
		\vspace{1ex}
		
		\item $\mathcal{B}$ is a double-sided Brownian motion.
	\end{itemize}
\end{theorem}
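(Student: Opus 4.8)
The plan is to deduce this statement from the scaling limits of Theorem~\ref{thm:scaling-model-I}, which give joint convergence of the Pr\"ufer phases $\varphi^\lambda_L$ and Pr\"ufer radii $r^\lambda_L$ to the solution $(\varphi^\lambda, r^\lambda)$ of the limiting system of SDEs \eqref{eq:SDEs-model-I}. The first step is to recall that, in these Pr\"ufer coordinates, a normalized eigenfunction $\Psi^\lambda_L$ corresponding to $\lambda \in \Upsilon_{\omega,L}$ has squared modulus proportional to $e^{2 r^\lambda_L(\cdot)}$ along the lattice, so that $L|\Psi^\mu(Lt)|^2\,dt$ is exactly $e^{2 r^\lambda_L(Lt)}\,dt \big/ \int_0^1 e^{2 r^\lambda_L(Ls)}\,ds$ up to the discretization error, which vanishes in the vague topology. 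Hence the content of the theorem is the identification of the law of the limiting radial process $r^\mu$ for a \emph{uniformly picked} eigenvalue $\mu$, together with the decoupling of the energy $E$, the ``center'' $U$, and the driving Brownian motion $\mathcal{B}$.

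The second step is to analyze the SDE for $r^\lambda$ in \eqref{eq:SDEs-model-I}. As in \cite{RV} for the analogous Schr\"odinger-operator computation, the radial increment $dr^\lambda$ splits into a drift term and a martingale term whose bracket is governed by the same complex Brownian motion $B$ driving the phase. The key observation is that, conditionally on the phase evolution, the martingale part of $r^\lambda$ is a Gaussian martingale, and when one evaluates at the \emph{random} eigenvalue $\mu$ (equivalently, conditions on $\varphi^{\mu}_L(1) \in 2\pi\Z$, i.e.\ on being an eigenvalue), the size-biasing by $e^{2 r^\mu_L}$ interacts with the Gaussian structure to produce, in the limit, a two-sided Brownian motion $\mathcal{B}$ running in a ``local time'' variable centered at a point $U$. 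The normalization $\rho(E)^2$ appearing inside $S(\rho(E)^2(t-U))$ arises from the time-change built into the scaling, and the drift $-|t|/4$ and diffusion coefficient $1/\sqrt 2$ come from computing the drift and bracket of $r^\lambda$ explicitly and plugging in the stationary law of the phase (a direct computation, exactly parallel to \cite[Section ...]{RV}).

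The third step is to establish the independence of $E$, $U$, and $\mathcal{B}$. The energy $E$ decouples because the eigenvalue is picked uniformly and the density of states has density $\rho(x)\,dx$ by Proposition~\ref{...} and Theorem~\ref{...}, so the marginal of $\mu$ converges to $\rho(x)\,dx$; the localization center $U$ becomes uniform on $[0,1]$ by a translation-invariance (stationarity) argument for the bulk phase process, and $\mathcal{B}$ is the limiting fluctuation field which, after the size-biasing, is independent of where the maximum of $e^{2r}$ sits. I would organize this exactly as the corresponding argument in \cite{RV}, invoking their general lemmas rather than reproving them, since the only model-specific inputs are (i) the scaling limit of Theorem~\ref{thm:scaling-model-I} and (ii) the explicit drift/diffusion coefficients of the radial SDE, both of which are available.

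\medskip

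\noindent The main obstacle I anticipate is \emph{not} the algebra of the SDE but the interchange of limits implicit in ``pick $\mu$ at random, then take $L\to\infty$'': one must show that the joint law of $(\mu, L|\Psi^\mu(Lt)|^2 dt)$ converges, which requires controlling the Pr\"ufer radii \emph{uniformly in $\lambda$ over compact sets} and ruling out eigenfunctions whose mass escapes to the boundary of $[0,1]$ or concentrates on a scale finer than $L$. This is precisely the kind of tightness/regularity estimate that \cite{RV} handles in the Schr\"odinger setting, and adapting it to the matrix-valued Dirac Pr\"ufer transform — where the radial and angular parts are coupled through a $2\times 2$ transfer structure — is the technical heart of the matter; everything else follows from the cited results.
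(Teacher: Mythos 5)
Your proposal follows essentially the same route as the paper: the paper does not give a detailed proof of this theorem either, stating only that it follows from the Pr\"ufer scaling limits of Theorem \ref{thm:scaling-model-I} (which identify $L|\Psi^\mu(Lt)|^2$ with the normalized $e^{2r^\lambda_L}$ and give the radial SDE with its explicit drift and diffusion coefficients) combined with the general arguments of \cite{RV}, which are not repeated. Your sketch of how the size-biasing, the uniform center $U$, the two-sided Brownian motion, and the tightness issues are handled is exactly the content delegated to \cite{RV}, so the approach matches.
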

%%%%%%%%%%%%%%%%%%%%%%%%%%%%%%%%%%%%%%%%%%%%%%%%%%%%%%%%%%%%
%%%%%%%%%%%%%%%%%%%%%%%%%%%%%%%%%%%%%%%%%%%%%%%%%%%%%%%%%%%%
In some sense, this is a weak localization type result as, even if one obtains a good decay of the eigenfunctions, the localization centers spread over the whole box. In contrast, we showed that Model II for $\alpha=\frac12$ is always delocalized, even in the pure point regime \cite{BMT01}.

\vspace{1ex}

\begin{remark}
\label{rk:Poisson}
Finally, we note that the case $\alpha<\frac12$, which is believed to yield Poisson statistics, is open, even in the case of the sub-critical discrete Anderson model. To the best of our knowledge, this result is currently known only in the case of a continuous Anderson model in a Brownian potential \cite{Kotani-Nakano17}, where stochastic calculus can be applied to reproduce the scheme of proof from \cite{AllezDumaz}, and in the case of unitary models presenting some type of rotational symmetry \cite{Stoiciu, Killip-Stoiciu}. 
Note that, in \cite{BMT01}, we proved dynamical localization for Model II and $\alpha<\frac12$, which is a key input for the classical proofs of Poisson statistics (see \cite[Chapter 17]{AW}). Such localization bounds should be possible to obtain for Model I as well.
The main issue to keep going forwards with this classical scheme consists in obtaining suitable Wegner and Minami's estimates, as the standard bounds diverge with the size of the system due to the fact that the coupling constant $L^{-\alpha}$ goes to zero. 
Nonetheless, Model I at $\alpha=\frac12$ should converge in the scaling limit (and not only its rescaled spectrum) to a continuum model , which suggests that it might be possible to obtain uniform Wegner and Minami's bounds, in the spirit of \cite[Proposition 6.4]{DumazLabbe}. 
%For the continuum Anderson model, Poisson statistics and localization were obtained in [Dumaz-Labbé, localization 1-D]. 
Alternatively, we note that, in each box of size $L^{2\alpha}$, the system is critical. 
%although the Schr\"odinger statistics would appear in interval of size $L^{-2\alpha}$ i.e. much larger than $L^{-1}$. 
The full spectrum could then in principle be approximated by a triangular array of independent point processes restricted to vanishing rescaled intervals, yielding a possible path towards Poisson statistics. We defer these questions to a future work. 

The bottom of the spectrum is definitely not covered by the techniques of the present article. Note that, in this case, the density of states diverges. 
\end{remark}

%%%%%%%%%%%%%%%%%%%%%%%%%%%%%%%%%%%%%%%%%%%%%%%%%%%%%%%%%%%%
%%%%%%%%%%%%%%%%%%%%%%%%%%%%%%%%%%%%%%%%%%%%%%%%%%%%%%%%%%%%

%%%%%%%%%%%%%%%%%%%%%%%%%%%%%%%%%%%%%%%%%%%%%%%%%%%%%%%%%%%%
%%%%%%%%%%%%%%%%%%%%%%%%%%%%%%%%%%%%%%%%%%%%%%%%%%%%%%%%%%%%

%%%%%%%%%%%%%%%%%%%%%%%%%%%%%%%%%%%%%%%%%%%%%%%%%%%%%%%%%%%%
%%%%%%%%%%%%%%%%%%%%%%%%%%%%%%%%%%%%%%%%%%%%%%%%%%%%%%%%%%%%

%%%%%%%%%%%%%%%%%%%%%%%%%%%%%%%%%%%%%%%%%%%%%%%%%%%%%%%%%%%%
%%%%%%%%%%%%%%%%%%%%%%%%%%%%%%%%%%%%%%%%%%%%%%%%%%%%%%%%%%%%
%%%%%%%%%%%%%%%%%%%%%%%%%%%%%%%%%%%%%%%%%%%%%%%%%%%%%%%%%%%%
%%%%%%%%%%%%%%%%%%%%%%%%%%%%%%%%%%%%%%%%%%%%%%%%%%%%%%%%%%%%

\subsection{Scaling limit of the Pr\"ufer transform}

%%%%%%%%%%%%%%%%%%%%%%%%%%%%%%%%%%%%%%%%%%%%%%%%%%%%%%%%%%%%
%%%%%%%%%%%%%%%%%%%%%%%%%%%%%%%%%%%%%%%%%%%%%%%%%%%%%%%%%%%%
This section contains the scaling limits used in the proof of Theorem \ref{thm:main}.

In both models, the operator $\drcnr$ can be extended to the whole $\h$ with the boundary condition $\vp^-_0=0$. We denote this extension by $\drcnl$. A generalized solution to the eigenvalue problem for $\drcnl$ can be obtained in a recursive way by means of transfer matrices i.e. there exists a sequence of matrices $(\Tl_{L,n})_n$ such that, if
$\Phi_L(1)
=
\begin{pmatrix}
	1 \\ 0
\end{pmatrix},
$
and if $\Phi_L^\lambda(n+1):=\Tl_{L,n}\Phi_L^\lambda(n)$ for all $n$, then the vector $\Phi_L=(\Phi_L^\lambda(n))_n$ satisfies $\drcnl \Phi_L^\lambda  = \lambda \Phi_L$. The exact form of these matrices depends of course on the model.
%%%%%%%%%%%%%%%%%%%%%%%%%%%%%%%%%%%%%%%%%%%%%%%%%%%%%%%%%%%%
%%%%%%%%%%%%%%%%%%%%%%%%%%%%%%%%%%%%%%%%%%%%%%%%%%%%%%%%%%%%

To write down the transfer matrices, let $p_1(x)=m-E+x$ and $p_2(x)=m+E-x$, and define
\begin{eqnarray*}
	T(x_1,x_2)
	=
	\begin{pmatrix}
		p_1(x_1)p_2(x_2)+1 & p_2(x_2) \\
		p_1(x_1) & 1
	\end{pmatrix}.
\end{eqnarray*}
With this, it is not hard to see that the transfer matrices are given by $\Tl_{L,n}=\Tl(\tvon(n),\tvtn(n))$ where 
\begin{eqnarray*}
	\tvon(n)
	=
	\gamma
	\frac{\omega_{n,1}}{L^{\alpha}}-\frac{\lambda}{\rho L},
	\quad
	\tvtn(n)
	=
	\gamma
	\frac{\omega_{n+1,2}}{L^{\alpha}}-\frac{\lambda}{\rho L},
\end{eqnarray*}
for Model I and
\begin{eqnarray*}
	\tvon(n)
	=
	\gamma
	\frac{\omega_{n,1}}{n^{\alpha}}-\frac{\lambda}{\rho L},
	\quad
	\tvtn(n)
	=
	\gamma
	\frac{\omega_{n+1,2}}{n^{\alpha}}-\frac{\lambda}{\rho L},
\end{eqnarray*}
for Model II.

%%%%%%%%%%%%%%%%%%%%%%%%%%%%%%%%%%%%%%%%%%%%%%%%%%%%%%%%%%%%
%%%%%%%%%%%%%%%%%%%%%%%%%%%%%%%%%%%%%%%%%%%%%%%%%%%%%%%%%%%%

With this formalism, we can see that $\Phil_L=\phi^{\lambda,+}_L\otimes \phi^{\lambda,-}_L$ defined above is an eigenvector of $\drcnr$ with corresponding eigenvalue $\lambda$ if and only if $\phi^{\lambda,+}_L(L)=0$. We now introduce the Pr\"ufer transform to write this last condition in a convenient form. 

Let $E\in(m,\sqrt{m^2+4})$ and let $k\in(-\pi,-\pi/2)$ be defined through the relation
\begin{eqnarray}
\label{eq:Ek}
	\cos k = - \frac{1}{2} \sqrt{E^2-m^2}.
\end{eqnarray}
Let 
\begin{eqnarray}\label{eq:change-of-basis}
	\ppru_n
	=
	(-1)^{n-1}
	\begin{pmatrix}
		-\sqrt{p_2} \cos((2n-1)k) & -\sqrt{p_2} \sin((2n-1)k)\\
		\sqrt{-p_1} \cos((2n-2)k) & \sqrt{-p_1} \sin((2n-2)k)
	\end{pmatrix}.
\end{eqnarray}
If $\Phil_L$ is defined as above, we define a new sequence $(\Psil_L(n))_n$ through the relation $\Phil_L(n) = \ppru_n \Psil_L(n)$. Writing $\Psil_L = \psi^{\lambda,+}_L\otimes \psi^{\lambda,-}_L$, we define the Pr\"ufer coordinates as
\begin{equation*}
	\zl_L(n)=\Rl_L(n) e^{i\tl_L(n)} := \psi^{\lambda,+}_L(n) + i \psi^{\lambda,-}_L(n).
\end{equation*}
Here, the phase $\tl_L$ is defined in such a way that $\tl_L(0)=0$ and $|\tl_L(j)-\tl_L(j-1)| < 2\pi $. Thanks to the decay of the environment, it can be seen that this procedure is well-defined (see \cite[Sec 3]{BMT01}).
Recall that $\lambda \in \spn$ if and only if
\begin{eqnarray*}
	\Psi_L^{\lambda}(L)
	=
	c
	\begin{pmatrix}
		0 \\ 1
	\end{pmatrix},
\end{eqnarray*}
for some constant $c\neq 0$.
In terms of Pr\"ufer coordinates, this  can be written as
\begin{eqnarray*}
	\Rl_L(L) e^{i\tl_L(L)}
	=
	c
	\ppru_L^{-1}
	\begin{pmatrix}
		0 \\ 1
	\end{pmatrix}
\end{eqnarray*}
Explicitly,
%Now,
%\begin{eqnarray*}
%	\ppru_n^{-1}
%	\begin{pmatrix}
%		0 \\ 1
%	\end{pmatrix}
%	=
%	(-1)^{n-1} \frac{i}{\sin 2k}
%	\left(
%		-\sqrt{-p_1} v_{2n-2}, \, \sqrt{p_2}v_{2n-1}
%	\right)
%	\begin{pmatrix}
%		0 \\ 1
%	\end{pmatrix}
%	=
%	(-1)^{n-1} \frac{i}{\sin 2k}
%	\sqrt{p_2}v_{2n-1}.
%\end{eqnarray*}
%Hence, the condition for an eigenvalue is
\begin{eqnarray*}
	e^{i\tl_L(L)} = \pm i e^{i(2L-1)k},
\end{eqnarray*}
or, equivalently,
\begin{eqnarray*}
	\tl_L(L) - [(2L-1)k] - \tfrac{\pi}{2} \in \pi \Z.
\end{eqnarray*}
Hence, we can rewrite our point process as
\begin{eqnarray*}
	\Upsilon_L
	=
	\left\{
		\lambda:\, \tl_L(L) - [(2L-1)k] - \tfrac{\pi}{2} \in \pi \Z
	\right\}.
\end{eqnarray*}
%%%%%%%%%%%%%%%%%%%%%%%%%%%%%%%%%%%%%%%%%%%%%%%%%%%%%%%%%%%%
%%%%%%%%%%%%%%%%%%%%%%%%%%%%%%%%%%%%%%%%%%%%%%%%%%%%%%%%%%%%
It will be convenient to rewrite this conditions as
\begin{eqnarray}\label{eq:spectrum prufer}
	\Upsilon_L
	=
	\left\{
		\lambda:\, \vtl_L(L) -2 [(2L-1)k] -\pi  \in 2 \pi \Z
	\right\},
\end{eqnarray}
with $\vtl_L = 2\tl_L$.
%%%%%%%%%%%%%%%%%%%%%%%%%%%%%%%%%%%%%%%%%%%%%%%%%%%%%%%%%%%%
%%%%%%%%%%%%%%%%%%%%%%%%%%%%%%%%%%%%%%%%%%%%%%%%%%%%%%%%%%%%
Theorem 2.3 will be a corollary of the following scaling limit. As above, we only consider the positive component of the spectrum as the other case is similar and omit the energy level $E=\sqrt{m^2+2}$ (see \cite{KVV}).
%%%%%%%%%%%%%%%%%%%%%%%%%%%%%%%%%%%%%%%%%%%%%%%%%%%%%%%%%%%%
%%%%%%%%%%%%%%%%%%%%%%%%%%%%%%%%%%%%%%%%%%%%%%%%%%%%%%%%%%%%
\begin{theorem}\label{thm:scaling-model-I}
	Consider Model I with $\alpha=\frac12$ and let $E\in(m,\sqrt{m^2+4})$, $E\neq \sqrt{m^2+2}$.
	Let $\vtl_L(t)=2\theta^{\lambda}_L(Lt)$ and $r^{\lambda}_L(t)=\log R_L^{\lambda}(Lt)$. Then, the sequence
	$\{\vartheta^{\lambda}_L,r^{\lambda}_L:\, t\in[0,T],\, \lambda\in\R\}$ converges weakly to the unique solution of the SDE
	\begin{eqnarray}
		d\vartheta^{\lambda}
		&=&
		\lambda dt
		+
		\sigma
		\text{Re}\left\{
			e^{-i\vartheta^{\lambda}}dB
		\right\}
		+
		\sigma dW
		\\
		dr^{\lambda}
		&=&
		\frac{\sigma^2}{8}dt
		+
		\frac{\sigma}{2}
		\text{Im}\left\{
			e^{-i\vartheta^{\lambda}}dB
		\right\},
		\label{eq:SDEs-model-I}
	\end{eqnarray}
%	\begin{eqnarray}
%		d\vartheta^{\lambda}
%		&=&
%		\lambda dt
%		+
%		\frac{\sigma}{\sqrt{2}} \sin(\vartheta^{\lambda}) dB_1
%		+
%		\frac{\sigma}{\sqrt{2}} \cos(\vartheta^{\lambda}) dB_2
%		+
%		\sigma dW,
%	\end{eqnarray}
	in the topology of uniform convergence over $[0,T]$ and in the sense of finite dimensional distributions in $\lambda$, where $B$ and $W$ are independent standard complex Brownian motion and standard Brownian motion respectively, and $\sigma^2 = \frac{p_1(E)^2+p_2(E)^2}{\sin^2(2k)}$.

	 Furthermore, the sequence of processes $(\lambda \mapsto \vartheta_L^{\lambda}(1))_L$ converges weakly to $\lambda \mapsto \vartheta^{\lambda}(1)$ in the topology of uniform convergence over compact sets. 
	 %for each fixed $t\in[0,1]$. 
%	 , the sequence of processes $(\lambda \mapsto \vartheta_L^{\lambda}(t))_L$ converges weakly to $\lambda \mapsto \vartheta^{\lambda}(t)$ in the topology of uniform convergence over compact sets.
	 
	 \vspace{1ex}
	
	If $\alpha>\frac12$ or $\gamma=0$, the same holds with respect to the limiting process $\vtl(t)=\vtl(0) + \lambda t$.
\end{theorem}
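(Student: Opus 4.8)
The plan is to derive a one-step recursion for the Pr\"ufer coordinates, recognize it under diffusive rescaling as a discretization of \eqref{eq:SDEs-model-I}, conclude by a martingale-problem argument, and finally upgrade to joint convergence in $\lambda$ via monotonicity. \emph{Step 1 (the Pr\"ufer recursion).} Combining $\Phi^\lambda_L(n+1)=T^\lambda_{L,n}\Phi^\lambda_L(n)$ with the change of basis $\Phi^\lambda_L(n)=\mathcal{P}_n\Psi^\lambda_L(n)$ gives $\Psi^\lambda_L(n+1)=A^\lambda_{L,n}\Psi^\lambda_L(n)$ with $A^\lambda_{L,n}=\mathcal{P}_{n+1}^{-1}T^\lambda_{L,n}\mathcal{P}_n$. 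The matrices $\mathcal{P}_n$ are chosen precisely so that they conjugate the free transfer matrix $T(0,0)$ to the identity; consequently $A^\lambda_{L,n}$ is the identity plus a part linear and quadratic in $\tvon(n),\tvtn(n)$, whose $\omega$-contribution is of size $O(L^{-1/2})$ and whose deterministic contribution is of size $O(1/L)$ at $\alpha=\tfrac12$. Writing $\zeta=Re^{i\theta}$, $\vartheta=2\theta$, $r=\log R$ and expanding $A^\lambda_{L,n}$ to second order in the potential, we get a recursion of the shape
\begin{eqnarray*}
\vartheta^\lambda_L(n+1)-\vartheta^\lambda_L(n)
&=&
\frac{\lambda}{L}+L^{-1/2}\,\xi_n+\frac1L\,\eta_n+o(1/L),
\end{eqnarray*}
and similarly for $r^\lambda_L$, where $\xi_n$ is centered, $O(1)$, built from $\omega_{n,1}$ and $\omega_{n+1,2}$, and depends on the phase only through $e^{\pm i\vartheta^\lambda_L(n)}$ and $1$ (this is the origin of the $e^{-i\vartheta}dB$ and $dW$ terms), while $\eta_n$ is the $O(1)$ second-order term whose conditional mean, together with the compensator of $\xi_n$, produces the It\^o structure of \eqref{eq:SDEs-model-I} --- in particular the drift $\sigma^2/8$ for $r$ and, after cancellation, the pure drift $\lambda$ for $\vartheta$. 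The hypotheses $\esp[\omega]=0$, $\esp[\omega^2]=1$ and the independence of $\omega_{n,1},\omega_{n+1,2}$ fix the covariance structure, the normalization $\rho=\rho(E)$ is exactly what makes the coefficient of $\lambda\,dt$ equal to one, and $\sigma^2=(p_1(E)^2+p_2(E)^2)/\sin^2(2k)$ falls out of the same expansion.

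\emph{Step 2 (diffusive scaling and the martingale problem).} Setting $t=n/L$, so that $[0,T]$ corresponds to $n=1,\dots,\lfloor LT\rfloor$, we split each increment into its conditional mean and a martingale increment; summing $O(L)$ increments of size $O(L^{-1/2})$ (martingale part) and $O(1/L)$ (drift) is the standard diffusive scaling. The accumulated drift converges to $\lambda t$ (respectively $\tfrac{\sigma^2}{8}t$) and the predictable quadratic variation of the martingale part converges to the bracket prescribed by \eqref{eq:SDEs-model-I}, the one-dependence of the transfer matrices affecting only finitely many correction terms. Tightness in $C([0,T],\R^2)$ follows from Kolmogorov's criterion via uniform fourth-moment bounds on the increments; the $r$-component requires no a priori growth control since its coefficients are bounded once $\vartheta$ is known. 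Any subsequential limit solves the martingale problem for \eqref{eq:SDEs-model-I}, which is well posed: the $\vartheta$-equation has a unique strong solution by \cite[Corollary 4]{KVV}, and the $r$-equation is then solved by direct integration. Running the $\R^{2m}$-valued chain $(\vartheta^{\lambda_j}_L,r^{\lambda_j}_L)_{j\le m}$ simultaneously, all driven by the same $\omega$'s, gives the convergence of finite-dimensional distributions in $\lambda$, with the limits coupled through the common $B$ and $W$.

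\emph{Step 3 (uniformity in $\lambda$ and the regime $\alpha>\tfrac12$).} The maps $\lambda\mapsto\vartheta^\lambda_L(1)$ are non-decreasing --- a discrete Sturm oscillation property, proved as in \cite[Sec 3]{BMT01} --- and so is $\lambda\mapsto\vartheta^\lambda(1)$, which is moreover almost surely continuous (indeed analytic) by \cite{KVV}. Convergence in the sense of finite-dimensional distributions of monotone functions towards a continuous monotone limit upgrades automatically to weak convergence in the topology of uniform convergence on compact $\lambda$-sets, which is the last assertion of the theorem. Finally, for $\alpha>\tfrac12$ or $\gamma=0$ the $\omega$-terms have size $O(L^{-\alpha})$ with $2\alpha>1$, so the martingale parts and the $\sigma^2$-drift all disappear in the limit and only $\lambda\,dt$ survives, yielding the deterministic process $t\mapsto \vartheta^\lambda(0)+\lambda t$.

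The step we expect to be the real work is Step 1: performing the second-order expansion of $A^\lambda_{L,n}$ while bookkeeping the remainders so that they are genuinely $o(1/L)$ after summing $O(L)$ of them, checking that the overlap $\omega_{n+1,2}$ shared by consecutive steps does not spoil the martingale decomposition, and extracting the uniform moment bounds needed for tightness. The precise identification of $B$, $W$ and of the constants $\sigma^2$ and $\sigma^2/8$ lives entirely inside that computation.
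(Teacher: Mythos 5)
Your plan follows essentially the same route as the paper: expand the Pr\"ufer recursion $\zeta^{\lambda}_L(j+1)=(1+\Gamma_j)\zeta^{\lambda}_L(j)$ to second order, compute the accumulated drift and covariance under diffusive scaling (where the trigonometric averaging over $\eta_j=(2j-1)k$ produces $\lambda$, $\sigma^2/8$ and the $e^{-i\vartheta}dB$, $dW$ structure), run the $\lambda$'s jointly to get finite-dimensional convergence, and upgrade to uniform convergence in $\lambda$ via monotonicity of $\lambda\mapsto\vartheta^{\lambda}_L(L)$ plus a stochastic Dini/tightness argument. The only cosmetic differences are that the paper packages your martingale-problem step by invoking the general Markov-chain-to-SDE lemma of \cite{KVV} (Proposition 27), augmenting the chain with the rescaled random walks $B_{L,1},B_{L,2}$ to identify the common driving noise, and proves the monotonicity through a Green's-function (Herglotz) identity rather than a Sturm oscillation argument; also note that consecutive steps of the recursion use disjoint noise ($\omega_{1,j},\omega_{2,j+1}$), so the one-dependence you worry about does not actually arise.
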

%%%%%%%%%%%%%%%%%%%%%%%%%%%%%%%%%%%%%%%%%%%%%%%%%%%%%%%%%%%%
%%%%%%%%%%%%%%%%%%%%%%%%%%%%%%%%%%%%%%%%%%%%%%%%%%%%%%%%%%%%
%The case $k=-\frac{3\pi}{4}$ yields a slightly different limits which will not be discussed here (see \cite{KVV}).
%%%%%%%%%%%%%%%%%%%%%%%%%%%%%%%%%%%%%%%%%%%%%%%%%%%%%%%%%%%%
%%%%%%%%%%%%%%%%%%%%%%%%%%%%%%%%%%%%%%%%%%%%%%%%%%%%%%%%%%%%

We now state our result for Model II. In this case, it is convenient to reverse the envelope environment i.e. to consider
\begin{eqnarray*}
	\tvon(n)
	=
	\gamma
	\frac{\omega_{n,1}}{(L-n)^{\alpha}}-\frac{\lambda}{\rho L},
	\quad
	\tvtn(n)
	=
	\gamma
	\frac{\omega_{n+1,2}}{(L-n)^{\alpha}}-\frac{\lambda}{\rho L}.
\end{eqnarray*}
In terms of the law of the objects associated to the random operator $\mathcal{D}_L$, this change is harmless but produces a limiting SDE with a singularity at $t=1$ instead of $t=0$.
%%%%%%%%%%%%%%%%%%%%%%%%%%%%%%%%%%%%%%%%%%%%%%%%%%%%%%%%%%%%
%%%%%%%%%%%%%%%%%%%%%%%%%%%%%%%%%%%%%%%%%%%%%%%%%%%%%%%%%%%%
\begin{theorem}\label{thm:scaling-model-II}
	Consider Model II with $\alpha=\frac12$ and let $E\in(m,\sqrt{m^2+4})$, $E\neq \sqrt{m^2+2}$.
	Let $\vtl_L(t)=2\theta^{\lambda}_L(Lt)$ and $r^{\lambda}_L(t)=\log R_L^{\lambda}(Lt)$. Then, the sequence
	$\{\vartheta^{\lambda}_L,r^{\lambda}_L:\, t\in[0,1),\, \lambda\in\R\}$ converges weakly to the unique solution of the SDE
	\begin{eqnarray}
	\label{eq:SDE-sine-2}
		d\vartheta^{\lambda}
		&=&
		\lambda dt
		+
		\frac{\sigma}{\sqrt{1-t}}
		\text{Re}\left\{
			e^{i\vartheta^{\lambda}}dB
		\right\}
%		\frac{\sigma}{\sqrt{2}} \frac{\sin(\vartheta^{\lambda})}{\sqrt{1-t}} dB_1
%		+
%		\frac{\sigma}{\sqrt{2}} \frac{\cos(\vartheta^{\lambda})}{\sqrt{1-t}} dB_2
		+
		\sigma dW,
	\end{eqnarray}
	in the topology of uniform convergence over compact subset of $[0,1)$ and in the sense of finite dimensional distributions in $\lambda$, where $B$ and $W$ are independent standard complex Brownian motion and standard Brownian motion respectively, and $\sigma^2 = \frac{p_1(E)^2+p_2(E)^2}{\sin^2(2k)}$.
	
%	{\color{red} There is a scaling in the whole $[0,1)$ for the relative phase $\vartheta^{\lambda}-\vartheta^0$. See KVV Thm 15}
%
%	
%	 Furthermore, the sequence of processes $(\lambda \mapsto \vartheta_L^{\lambda}(1)-\vartheta_L^{0}(1))_L$ converges weakly to $(\lambda \mapsto g(\lambda))$ in the topology of uniform convergence over compact sets.
%	 \note{so far, only for $t=1$ and for the rescaled phase, but I haven't checked}
%	 %for each fixed $t\in[0,1)$,  the sequence of processes $(\lambda \mapsto \vartheta_L^{\lambda}(t))_L$ converges weakly to $\lambda \mapsto \vartheta^{\lambda}(t)$ in the topology of uniform convergence over compact sets.
%	 
%	 \vspace{1ex}
%	
	If $\alpha>\frac12$ or $\gamma=0$, the same holds with respect to the limiting process $\vtl(t)=\vtl(0) + \lambda t$.
\end{theorem}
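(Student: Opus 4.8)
The plan is to derive the scaling limit for Model~II by transcribing, essentially line by line, the proof of Theorem~\ref{thm:scaling-model-I}, the only substantive change being that the flat envelope $L^{-1/2}$ of Model~I is replaced by $(L-n)^{-1/2}$, so that at a site $n=\lfloor Lt\rfloor$ the random part of the perturbation has size of order $\gamma\,(L(1-t))^{-1/2}$; the remaining ingredients — tightness up to the singular time, the martingale problem for the coupled family, and the passage from finite-dimensional to functional statements — are formally identical to those used for Model~I and, ultimately, to those of \cite{KVV}, so they will only be sketched, as announced. First I would record the one-step recursion for the Pr\"ufer coordinates: conjugating the transfer matrix $\Tl_{L,n}=T(\tvon(n),\tvtn(n))$ by the change of basis $\ppru_n$ of \eqref{eq:change-of-basis} removes the deterministic rotation by $2k$ carried by $T(0,0)$, so that the increments $\theta^{\lambda}_L(n+1)-\theta^{\lambda}_L(n)$ and $\log R^{\lambda}_L(n+1)-\log R^{\lambda}_L(n)$ become explicit real-analytic functions of the small quantities $\tvon(n),\tvtn(n)$, each of the form $\gamma\,\omega(L-n)^{-1/2}-\tfrac{\lambda}{\rho L}$, and of the current phase $\theta^{\lambda}_L(n)$. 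Taylor-expanding to second order isolates a first-order term, a second-order term, and a remainder which, by boundedness of the $\omega_{n,i}$, is uniformly $O((L-n)^{-3/2})$ and hence contributes nothing at scale $Lt$ on compact subsets of $[0,1)$.

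Next I would show that $\vtl_L(t)=2\theta^{\lambda}_L(\lfloor Lt\rfloor)$ splits as drift $+$ martingale $+$ $o(1)$. Using $\esp[\omega]=0$ and $\esp[\omega^2]=1$: the $-\tfrac{\lambda}{\rho L}$ part of the first-order terms, together with the conditional expectations of the second-order terms, sums over $n\le Lt$ to $\lambda t$ — the choice $\rho=\rho(E)$ is precisely the normalization that collapses the accumulated constant to $1$, which is why the eigenvalue spacing was fixed at $(\rho L)^{-1}$ — while the mean-zero part of the first-order terms forms a martingale whose predictable bracket, built from the explicit entries of $T$ and $\ppru_n$ and from the envelope factor $(L-n)^{-1}$, converges as $L\to\infty$ to a deterministic, $t$-dependent limit on compacts of $[0,1)$, where $(L-n)^{-1}\approx (L(1-t))^{-1}$ makes the Riemann sums converge. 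A standard martingale functional central limit theorem — the Lindeberg condition being trivial by boundedness and envelope decay — then identifies the weak limit with the stochastic integrals in \eqref{eq:SDE-sine-2}: the phase-modulated part of the bracket retains the envelope and produces $\tfrac{\sigma}{\sqrt{1-t}}\mathrm{Re}\{e^{i\vtl}dB\}$, the phase-independent part produces $\sigma\,dW$, with $B$ and $W$ independent and $\sigma$ as in the statement. The radius $r^{\lambda}_L(t)=\log R^{\lambda}_L(\lfloor Lt\rfloor)$ comes out of the same expansion, its limit solving $dr^{\lambda}=\tfrac{\sigma^2}{8(1-t)}dt+\tfrac{\sigma}{2\sqrt{1-t}}\mathrm{Im}\{e^{i\vtl}dB\}$ driven by the same $B$.

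For tightness I would fix $\epsilon>0$ and work on $[0,1-\epsilon]$, where the coefficients of the limiting system are bounded: Kolmogorov's criterion applied to the increments of $\vtl_L$ and $r^{\lambda}_L$, with the required moment estimates coming directly from the expansion and from $\sup|\omega|<\infty$, yields tightness in $C([0,1-\epsilon])$, and sending $\epsilon\downarrow 0$ gives tightness in $C([0,1))$ for uniform convergence on compacts. Well-posedness of the limiting system is clear since on each $[0,1-\epsilon]$ its coefficients are globally Lipschitz in $(\vtl,r^{\lambda})$; moreover the time change $s=-\log(1-t)$ — which turns $(1-t)^{-1}dt$ into $ds$ and absorbs the factor $(1-t)^{-1/2}$ in front of $dB$ into a reparametrized Brownian motion — conjugates \eqref{eq:SDE-sine-2} into a Sine-type equation on $[0,\infty)$ of the form \eqref{eq:SDE-sine}, which is what will produce $\textbf{Sine}_{\beta}$ in Theorem~\ref{thm:main}. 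Joint convergence over finitely many values of $\lambda$ follows from the fact that the whole family $(\vtl_L)_{\lambda}$ is driven by the same randomness: one establishes joint tightness and identifies the limit via the martingale problem for the coupled system, and the monotonicity of $\lambda\mapsto\vtl_L(t)$, inherited from the drift $\lambda\,dt$, is recorded for later use in the proof of Theorem~\ref{thm:main}. Finally, if $\alpha>\tfrac12$ or $\gamma=0$, the martingale bracket is $O(L^{1-2\alpha})\to 0$ on compacts (respectively, the martingale vanishes identically), the increments reduce to the deterministic drift, and one obtains $\vtl(t)=\vtl(0)+\lambda t$.

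The step I expect to be the main obstacle is the control at the singular endpoint $t=1$. On the one hand, one must check that replacing $n^{-1/2}$ by $(L-n)^{-1/2}$ is harmless at the level of the laws of the objects attached to $\drcnr$ — as noted before the statement — so that the theorem as formulated genuinely follows from the reversed recursion used above. On the other, and more delicately, the phase-martingale bracket diverges logarithmically as $t\to 1$, so one must guarantee that the pre-limit processes still form a tight family on $[0,1)$ whose weak limit does not explode before time $1$; this uniform-in-$L$ control near $t=1$ — exactly what makes the scaling limit a time-changed Sine SDE rather than a degenerate object — is where the criticality $\alpha=\tfrac12$ is genuinely used, and it is the point at which I would lean most on the corresponding estimates of \cite{KVV}.
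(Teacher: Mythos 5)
Your proposal follows essentially the same route as the paper: the Pr\"ufer recursion with the reversed envelope $(L-j)^{-\alpha}$ is Taylor-expanded exactly as for Model I, the drift is unchanged, the diffusion coefficients pick up the factor $\int_0^t\frac{ds}{1-s}$ via the trigonometric sums weighted by $(L-j)^{-2\alpha}$ (the paper's Lemma \ref{thm:trigonometric-limits-2}), and the diffusion-approximation machinery (the paper cites Lemma \ref{thm:general-convergence-finite-dimensional}, i.e. KVV's Proposition 27, rather than a hand-built martingale FCLT) identifies \eqref{eq:SDE-sine-2} on each $[0,T]$, $T<1$. The only slight mismatch is your final paragraph: the uniform-in-$L$ control at the singular endpoint $t=1$ is not needed for this theorem, which only claims convergence on compact subsets of $[0,1)$; it enters later, for the relative-phase limit \eqref{eq:convergence-relative-phase} used in the point-process convergence, where the paper indeed defers to \cite{KVV,VV}.
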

%%%%%%%%%%%%%%%%%%%%%%%%%%%%%%%%%%%%%%%%%%%%%%%%%%%%%%%%%%%%
%%%%%%%%%%%%%%%%%%%%%%%%%%%%%%%%%%%%%%%%%%%%%%%%%%%%%%%%%%%%
To obtain \eqref{eq:SDE-sine} from \eqref{eq:SDE-sine-2}, it is enough to perform the time change $t\mapsto 1-e^{-\beta t /4}$, with $\beta = \frac{2}{\sigma^2}$. See \cite{KVV}. 

In the rest of this work, we will give full proofs in the case of Model I while, for Model II, we will only sketch the arguments and refer the reader to \cite{KVV} whenever further details are needed.

%%%%%%%%%%%%%%%%%%%%%%%%%%%%%%%%%%%%%%%%%%%%%%%%%%%%%%%%%%%%
%%%%%%%%%%%%%%%%%%%%%%%%%%%%%%%%%%%%%%%%%%%%%%%%%%%%%%%%%%%%
%%%%%%%%%%%%%%%%%%%%%%%%%%%%%%%%%%%%%%%%%%%%%%%%%%%%%%%%%%%%
%%%%%%%%%%%%%%%%%%%%%%%%%%%%%%%%%%%%%%%%%%%%%%%%%%%%%%%%%%%%
%%%%%%%%%%%%%%%%%%%%%%%%%%%%%%%%%%%%%%%%%%%%%%%%%%%%%%%%%%%%
%%%%%%%%%%%%%%%%%%%%%%%%%%%%%%%%%%%%%%%%%%%%%%%%%%%%%%%%%%%%
%%%%%%%%%%%%%%%%%%%%%%%%%%%%%%%%%%%%%%%%%%%%%%%%%%%%%%%%%%%%
%%%%%%%%%%%%%%%%%%%%%%%%%%%%%%%%%%%%%%%%%%%%%%%%%%%%%%%%%%%%

\section{The integrated density of states}

%%%%%%%%%%%%%%%%%%%%%%%%%%%%%%%%%%%%%%%%%%%%%%%%%%%%%%%%%%%%
%%%%%%%%%%%%%%%%%%%%%%%%%%%%%%%%%%%%%%%%%%%%%%%%%%%%%%%%%%%%
 The integrated density of states for the whole free operator $\drcf$ corresponds to
\begin{eqnarray*}
	\dos_0(\varphi)
	&=&
	\lim_{L\to\infty}
	\frac{1}{2L}
	\sum_{\substack{1\leq j \leq L \\ \sigma=\pm}} \bra \delta^{\sigma}_j, \varphi(\drcf) \delta^{\sigma}_j \ket.
%	\\
%	&=&
%	\frac12
%	\bra \delta^+_0, \varphi(\drcf) \delta^+_0 \ket
%	+
%	\frac12
%	\bra \delta^-_0, \varphi(\drcf) \delta^-_0 \ket,
\end{eqnarray*}
The usual proof of the existence of this limit for ergodic self-adjoint operators uses restrictions to boxes with suitable boundary conditions and monotonicity/subadditivity arguments, which become delicate for our perturbed operators. 
%In many cases, an application of Sz\"ego's theorem allows to compute the local density of states, i.e., the density of the measure $\dos_0$. We will follow a different path.
%%%%%%%%%%%%%%%%%%%%%%%%%%%%%%%%%%%%%%%%%%%%%%%%%%%%%%%%%%%%
%%%%%%%%%%%%%%%%%%%%%%%%%%%%%%%%%%%%%%%%%%%%%%%%%%%%%%%%%%%%
Furthermore, the limit can be recovered as
\begin{eqnarray}\label{eq:dos-from-boxes}
	\dos_0(\varphi)
	=
	\lim_{L\to\infty}
	\frac{1}{2L}
	\sum_{\substack{1\leq j \leq L \\ \sigma=\pm}} \bra \delta^{\sigma}_j, \varphi(\drcnf) \delta^{\sigma}_j \ket,
\end{eqnarray}
independently of the (reasonable) boundary conditions imposed on $\drcnf$. In the following, we will consider $\varphi^-(0)=\varphi^+(L)=0$.
It turns out that this procedure will allow us to compute the local density of states i.e. the density of the measure $\dos_0$.
In many cases, this can be achieved by an application of Sz\"ego's theorem. We follow a different paths that builds on our knowledge of the Pr\"ufer phases.

We will begin with a quantitative estimate on the limit \eqref{eq:dos-from-boxes}.
Let $R_0(z)=(D_0-z)^{-1}$ and $R_{0,L}(z)=(D_L-z)^{-1}$ be the resolvents of the free operator $D$ and the restricted operator  $D_L$ respectively.  We define the corresponding Green's functions by
$$ G_0(z;j,\sigma;j',\sigma')=\bra\delta_j^\sigma,R_0(z)\delta_{j'}^{\sigma'}\ket,\quad 
G_{0,L}(z;j,\sigma;j',\sigma')=\bra\delta_j^\sigma,R_{0,L}(z)\delta_{j'}^{\sigma'}\ket.$$
We start with a preliminary estimate.

%%%%%%%%%%%%%%%%%%%%%%%%%%%%%%%%%%%%%%%%%%%%%%%%%%%%%%%%%%%%
%%%%%%%%%%%%%%%%%%%%%%%%%%%%%%%%%%%%%%%%%%%%%%%%%%%%%%%%%%%%
\begin{lemma}\label{thm:comparison-resolvents}
	There is a constant $C>0$ such that
	\begin{eqnarray*}
		\sum_{\substack{1\leq j \leq L \\ \sigma=\pm}} 
		\left|
			G_0(z;j,\sigma;j,\sigma)-G_{0,L}(z;j,\sigma;j,\sigma)
		\right|
		\leq
		C |z|^{-2},
	\end{eqnarray*}
	for all $z\notin \R$.
\end{lemma}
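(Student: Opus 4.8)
The plan is to compare the two resolvents via the resolvent (second resolvent) identity, exploiting the fact that $D_L$ differs from $D$ only through the boundary conditions at the ends of the box, which is a finite-rank perturbation. Concretely, write $D_L = D + B$ where $B$ is the (finite-rank, indeed rank at most a small fixed number) operator implementing the boundary conditions $\varphi^-_0=\varphi^+_L=0$; equivalently, thinking of $D_{0,L}$ as a submatrix, the difference $R_0(z) - R_{0,L}(z)$ restricted to $\ell^2(\Lambda_L,\C^2)$ can be written as $R_0(z)\, K\, R_{0,L}(z)$ for a fixed finite-rank operator $K$ supported near sites $0$ and $L$ (one may need to be slightly careful and instead use the Krein/Schur-complement formula for the difference of a matrix inverse and the inverse of its principal submatrix, but the upshot is the same: the correction is $R_0 K R_{0,L}$ with $K$ of bounded rank and bounded norm, coming from the "hopping terms" that get cut when one passes to the box).

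Given that, the diagonal Green's function difference is $G_0(z;j,\sigma;j,\sigma) - G_{0,L}(z;j,\sigma;j,\sigma) = \langle \delta_j^\sigma, R_0(z) K R_{0,L}(z)\delta_j^\sigma\rangle$, and since $K = \sum_{a} |u_a\rangle\langle v_a|$ is a sum of boundedly many rank-one terms with $\|u_a\|,\|v_a\|$ bounded, this equals $\sum_a \langle \delta_j^\sigma, R_0(z) u_a\rangle \langle v_a, R_{0,L}(z)\delta_j^\sigma\rangle$. Then I would estimate the sum over $j,\sigma$ by Cauchy--Schwarz:
\begin{eqnarray*}
	\sum_{j,\sigma} |\langle \delta_j^\sigma, R_0(z)u_a\rangle|\,|\langle v_a, R_{0,L}(z)\delta_j^\sigma\rangle|
	\leq
	\Big(\sum_{j,\sigma}|\langle \delta_j^\sigma, R_0(z)u_a\rangle|^2\Big)^{1/2}
	\Big(\sum_{j,\sigma}|\langle v_a, R_{0,L}(z)\delta_j^\sigma\rangle|^2\Big)^{1/2}
	\leq
	\|R_0(z)u_a\|\,\|R_{0,L}(z) v_a\|,
\end{eqnarray*}
where I have used that $\sum_{j,\sigma}|\langle v_a,R_{0,L}(z)\delta_j^\sigma\rangle|^2 = \|R_{0,L}(z)^* v_a\|^2$. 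Now $\|R_0(z)\| \leq \operatorname{dist}(z,\Sigma)^{-1}$ and likewise for $R_{0,L}$; since $\Sigma=[-\sqrt{m^2+4},-m]\cup[m,\sqrt{m^2+4}]$ is bounded, for $|z|$ large one has $\operatorname{dist}(z,\Sigma)\geq c|z|$, giving each factor a bound $\lesssim |z|^{-1}$ and hence a product bounded by $C|z|^{-2}$; summing over the boundedly many indices $a$ preserves this. For $z$ in a bounded region away from $\R$ the quantity $|z|^{-2}$ is bounded below, so the inequality is trivially true there after enlarging $C$ — but one should note that the claimed bound $C|z|^{-2}$ for \emph{all} $z\notin\R$ only has content for large $|z|$; near the real axis the resolvents blow up and the statement must be read as vacuous/trivial there, or the constant must be allowed to absorb the compact region $\{|z|\leq 1\}\setminus\R$. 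Actually, since the claim is uniform over all $z \notin \mathbb{R}$, the correct reading is that the bound is genuinely useful only in the large-$|z|$ regime (which is all that is needed for the subsequent density-of-states computation, where one integrates against analytic functions), and for $|z|\le 1$ one should simply restrict attention or note the statement as phrased requires $|z|$ bounded away from $0$; I would phrase the proof so that the estimate reads $\le C|z|^{-2}$ for $|z|$ large enough and state separately that the left side is finite (it is, being a finite sum of bounded terms) for every $z\notin\R$.

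The main obstacle is identifying the finite-rank structure of $R_0 - R_{0,L}$ cleanly. The operator $D$ on $\h = \ell^2(\n^*,\C^2)$ is already a half-line operator, so $D_L$ is its compression to the first $L$ sites; the Feshbach/Schur complement formula expresses $R_{0,L}(z)$ in terms of $R_0(z)$ and the "off-box" block, and the correction term is controlled by the hopping across the cut at site $L$ (and the boundary term at $0$ is the same for both). Making this bookkeeping precise — getting the explicit rank-one kernels $u_a,v_a$ localized at sites $L, L+1$ (and handling the $\varphi_0^-$ boundary condition, which both operators share) with norms bounded uniformly in $L$ — is where the real work lies, but it is a finite, deterministic computation with the shift operators $d,d^*$, so no genuine analytic difficulty is expected. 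Once the structure is in hand, the Cauchy--Schwarz/resolvent-norm estimate above is routine.
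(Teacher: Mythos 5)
Your overall strategy coincides with the paper's: decompose $D_0 = D_{0,L} - \delta^+_L\langle\delta^-_{L+1}| + P_L^{\perp}CP_L^{\perp}$ (so that only the rank-one term crossing the cut contributes to matrix elements taken inside the box), apply the resolvent identity, and bound the resulting diagonal sum by Cauchy--Schwarz, which yields $\|R_{0,L}(z)\delta^+_L\|\,\|R_0(z)\delta^-_{L+1}\|$ uniformly in $L$. You leave the finite-rank bookkeeping as a promissory note where the paper does it in one line, but that part of your plan is sound.

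The genuine gap is in your final estimate of the two resolvent norms. You only invoke $\|R(z)\|\le \mathrm{dist}(z,\Sigma)^{-1}$, which gives $|z|^{-1}$ decay solely for large $|z|$, and you then claim that on bounded regions off the real axis the inequality is ``trivially true after enlarging $C$,'' or that the statement is ``vacuous'' there. Neither is correct: for fixed $L$, if $z=E_k+i\epsilon$ with $E_k$ an eigenvalue of $D_{0,L}$ lying in the interior of a band, then $G_{0,L}$ has a pole at $E_k$ while the free diagonal Green's functions remain bounded, so the left-hand side diverges like $\epsilon^{-1}$ while $C|z|^{-2}$ stays bounded; no enlargement of $C$ can absorb the region of bounded $|z|$ and small $|\Im z|$. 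Moreover that region is precisely the one the lemma is used on: in Lemma \ref{thm:limit-DOS-boxes} the Helffer--Sjöstrand integral runs over a compact neighbourhood of the (bounded) spectrum, so a ``large-$|z|$ only'' version of the estimate gives nothing there, contrary to your remark that it suffices for the density-of-states computation. The missing ingredient is simply self-adjointness: $\|R_0(z)\|,\|R_{0,L}(z)\|\le |\Im z|^{-1}$, so the Cauchy--Schwarz product is bounded by $|\Im z|^{-2}$ uniformly in $L$ and in $z\notin\R$. This is the bound the paper's proof implicitly uses and the one the Helffer--Sjöstrand step needs (there $|\partial_{\bar z}\tilde\varphi(z)|=O(|\Im z|^{2})$ on a compact set); the $|z|^{-2}$ in the statement should accordingly be read as $|\Im z|^{-2}$, since, as the pole argument above shows, the bound as literally written cannot hold uniformly near the spectrum.
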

%%%%%%%%%%%%%%%%%%%%%%%%%%%%%%%%%%%%%%%%%%%%%%%%%%%%%%%%%%%%
%%%%%%%%%%%%%%%%%%%%%%%%%%%%%%%%%%%%%%%%%%%%%%%%%%%%%%%%%%%%
\begin{proof}
	Note that
	\begin{eqnarray*}
	D_0
	=
	D_{0,L}
	- \deltap_L \bra \deltam_{L+1}| 
	+
	P_L^{\perp} C P_L^{\perp},
	\end{eqnarray*}
	for some bounded operator $C$, where $P_L^{\perp}$ denotes the projection on $\bra \{  \delta^{\pm}_j:\, |j|\geq L \}\ket$. By the resolvent formula,
	\begin{eqnarray}
		&&
		\sum_{\substack{1\leq j \leq L \\ \sigma=\pm}} 
		\left(
			G_0(z;j,\sigma;j,\sigma)-G_{0,L}(z;j,\sigma;j,\sigma)
		\right)
		\\
		\label{eq:res-formula}
		&=&
		-\sum_{\substack{1\leq j \leq L \\ \sigma=\pm}} 
		\Big{(}
			G_{0,L}(z;j,\sigma;L,+)G_0(z;L+1,-;j,\sigma)
		\Big{)}.
	\end{eqnarray}
	Finally, by Cauchy-Schwarz inequality,
	\begin{eqnarray*}
		&&
		\sum_{\substack{1\leq j \leq L \\ \sigma=\pm}} 
			\left| G_{0,L}(z;j,\sigma;L,+)G_0(z;L+1,-,\sigma;j,\sigma) \right|
		\\
		&&
		\phantom{blablablabla}
		\leq
		\| R_{0,L}(z) \deltap_L \| \| R_0(z)\deltam_{L+1} \|
		\leq C
		|z|^{-2}.
	\end{eqnarray*}
\end{proof}
%%%%%%%%%%%%%%%%%%%%%%%%%%%%%%%%%%%%%%%%%%%%%%%%%%%%%%%%%%%%
%%%%%%%%%%%%%%%%%%%%%%%%%%%%%%%%%%%%%%%%%%%%%%%%%%%%%%%%%%%%
Let $\dosnf$ denote the integrated  density of states of $\drcnf$.
%%%%%%%%%%%%%%%%%%%%%%%%%%%%%%%%%%%%%%%%%%%%%%%%%%%%%%%%%%%%
%%%%%%%%%%%%%%%%%%%%%%%%%%%%%%%%%%%%%%%%%%%%%%%%%%%%%%%%%%%%
\begin{lemma}\label{thm:limit-DOS-boxes}
	For each smooth $\varphi$, there is a constant $C=C(\varphi)$ such that
	\begin{eqnarray*}
		\left|
			\dos_0(\varphi)- \dos_{0,L}(\varphi)
		\right|
		\leq
		C L^{-1}.
	\end{eqnarray*}
\end{lemma}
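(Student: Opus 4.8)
The plan is to represent $\varphi$ of the self-adjoint operators via the Helffer--Sj\"ostrand formula and reduce everything to the resolvent comparison already established in Lemma 2.9. Concretely, for a smooth compactly supported $\varphi$, pick an almost-analytic extension $\tilde\varphi$ with $|\bar\partial \tilde\varphi(z)| \leq C_N |\im z|^N$ for any $N$, so that
\begin{eqnarray*}
	\varphi(A)
	=
	\frac{1}{\pi}
	\int_{\C}
	\bar\partial \tilde\varphi(z)\, (A-z)^{-1}\, dx\, dy,
\end{eqnarray*}
valid for $A = \drcf$ and $A = \drcnf$. Taking the normalized trace over the box $\Lambda_L$ of the difference and interchanging the (absolutely convergent) integral with the finite sum, one gets
\begin{eqnarray*}
	\dos_0(\varphi) - \dosnf(\varphi)
	=
	\frac{1}{2\pi L}
	\int_{\C}
	\bar\partial \tilde\varphi(z)
	\sum_{\substack{1\leq j \leq L \\ \sigma=\pm}}
	\bigl(
		G_0(z;j,\sigma;j,\sigma) - G_{0,L}(z;j,\sigma;j,\sigma)
	\bigr)
	\, dx\, dy.
\end{eqnarray*}

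First I would insert the bound from Lemma 2.9, namely that the inner sum is at most $C|z|^{-2}$ in modulus, uniformly in $z\notin\R$. This kills the sum's $L$-dependence entirely, leaving a prefactor $L^{-1}$ and an integral $\int_{\C} |\bar\partial\tilde\varphi(z)|\, |z|^{-2}\, dx\, dy$. Since $\tilde\varphi$ is supported in a bounded neighbourhood of $\supp\varphi$ (which, for the relevant energies, is bounded away from $0$, or else one uses the $|\im z|^N$ decay of $\bar\partial\tilde\varphi$ to absorb the singularity of $|z|^{-2}$ near the real axis), this integral is a finite constant $C(\varphi)$. Combining, $|\dos_0(\varphi) - \dosnf(\varphi)| \leq C(\varphi) L^{-1}$, which is the claim. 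One should also note that $\dosnf$ here is defined with the boundary conditions $\varphi^-(0) = \varphi^+(L) = 0$, matching the operator $D_L$ appearing in Lemma 2.9, and that the passage from the box-trace of $\varphi(D_L)$ to the genuine IDS $\dosnf$ is exactly the normalized-trace definition, so no extra step is needed there.

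The only mildly delicate point, and the one I would be most careful about, is the integrability near $\R$: $|z|^{-2}$ is not locally integrable against Lebesgue measure on $\C$ without help near the real axis if $0 \in \overline{\supp\varphi}$. For energies $E$ in the bulk $(m,\sqrt{m^2+4})$ this is a non-issue since $\supp\varphi$ stays away from $0$; in general one simply chooses $N \geq 1$ in the almost-analytic extension so that $|\bar\partial\tilde\varphi(z)| \lesssim |\im z|$ compensates the $|\im z|^{-2}$ blow-up of the resolvent-difference bound when $\re z$ is near $0$ — but in fact Lemma 2.9 already gives $|z|^{-2}$, not $|\im z|^{-2}$, so even this is only needed in a bounded region and one $N=2$ suffices. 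Everything else is a routine application of the Helffer--Sj\"ostrand functional calculus and Fubini.
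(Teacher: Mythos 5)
Your proposal is correct and follows essentially the same route as the paper: bound the difference by the normalized sum of diagonal matrix elements of $\varphi(D)-\varphi(D_L)$, apply the Helffer--Sj\"ostrand formula, insert the $C|z|^{-2}$ bound from the resolvent-comparison lemma, and conclude that the remaining integral is a finite constant depending on $\varphi$. Your extra care about integrability near the real axis (using the $|\im z|^N$ decay of $\bar\partial\tilde\varphi$) is exactly what the paper summarizes as ``the property of quasi-analytic extensions,'' so there is no substantive difference.
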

%%%%%%%%%%%%%%%%%%%%%%%%%%%%%%%%%%%%%%%%%%%%%%%%%%%%%%%%%%%%
%%%%%%%%%%%%%%%%%%%%%%%%%%%%%%%%%%%%%%%%%%%%%%%%%%%%%%%%%%%%
\begin{proof}
	First, note that
\begin{align*}
\left|\dos_0(\varphi)- \dos_{0,L}(\varphi)\right|
&\le
\frac{1}{2L}
 \sum_{\substack{1\leq j \leq L \\ \sigma=\pm}} |\bra \delta^{\sigma}_j, (\varphi(\drcf)-\varphi(D_L))\delta^{\sigma}_j \ket|.
\end{align*}
The proof then follows from the Helffer-Sjostrand formula \cite{Davies}: if $\tilde{\varphi}$ is a quasi-analytic extension of $\varphi$, then
\begin{eqnarray*}
	\varphi(\drcf)-\varphi(D_L)
	&=&
	-\frac{1}{\pi}
	\int_{\mathbb{C}} \frac{\partial}{\partial \overline{z}} \tilde{\varphi}(z)
	\left(
		R_0(z)-R_{0,L}(z)
	\right)
	\lambda(dz),
\end{eqnarray*}
where $\lambda$ denotes the Lebesgue measure on $\mathbb{C}$. Using the estimate in Lemma \ref{thm:comparison-resolvents}, we obtain that
\begin{eqnarray*}
	\frac{1}{2L}
	\sum_{\substack{1\leq j \leq L \\ \sigma=\pm}} 
	|\bra \delta^{\sigma}_j, (\varphi(\drcf)-\varphi(D_L))\delta^{\sigma}_j \ket|
	&\leq&
	\frac{C}{L}
	\frac{1}{\pi}
	\int_{\mathbb{C}} \left| \frac{\partial}{\partial \overline{z}} \tilde{\varphi}(z)\right|
		|z|^{-2}
	\lambda(dz).
\end{eqnarray*}
The integral is finite thanks to the property of quasi-analytic extensions. This finishes the proof.
\end{proof}
%%%%%%%%%%%%%%%%%%%%%%%%%%%%%%%%%%%%%%%%%%%%%%%%%%%%%%%%%%%%
%%%%%%%%%%%%%%%%%%%%%%%%%%%%%%%%%%%%%%%%%%%%%%%%%%%%%%%%%%%%
\begin{proposition}
	The integrated density of states $\dos_0$ admits the explicit density $(2\pi)^{-1}\rho(E)$ for all $m < |E| < \sqrt{m^2 + 4}$. 
\end{proposition}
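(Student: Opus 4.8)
The plan is to compute $\dos_{0,L}$ explicitly via the Pr\"ufer phase of the free restricted operator $D_L$ and then let $L\to\infty$, using Lemma \ref{thm:limit-DOS-boxes} to transfer the limit to $\dos_0$. The starting point is the counting identity: for the free operator, $\dos_{0,L}(E) = |\Lambda_L|^{-1}\#\{\text{eigenvalues of } D_L \le E\}$, and an eigenvalue of $D_L$ at energy $E'$ is detected, exactly as in Section 2, by the boundary condition $\phi^{+}_L(L)=0$, i.e. by $\theta^{\lambda}_L(L)$ hitting a coset of $\pi\Z$. For the \emph{free} operator the Pr\"ufer phase is explicit: with $\gamma=0$ (or more precisely with the unperturbed transfer matrix $T(0,0)$ conjugated by $\mathcal{P}_n$ from \eqref{eq:change-of-basis}), the phase increments are deterministic and linear, so that $\theta_L$ advances by a fixed amount per site governed by $k=k(E')$ through \eqref{eq:Ek}. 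Concretely I would show that the number of eigenvalues of $D_L$ below $E$ equals, up to an $O(1)$ boundary error, $\frac{1}{\pi}$ times the total phase accumulated, which is $(2L-1)k(E)$ plus lower order; differentiating in $E$ then yields the density.

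First I would set up the free Pr\"ufer transform: take the transfer matrix $T(0,0)$ at a general energy $E'\in(m,\sqrt{m^2+4})$, introduce $k'=k(E')$ via $\cos k' = -\tfrac12\sqrt{{E'}^2-m^2}$, and verify that in the basis $\mathcal{P}_n$ the dynamics is a pure rotation by $2k'$ (this is the deterministic skeleton underlying the computations of \cite[Sec.~3]{BMT01}). Consequently $\theta^{0}_L(L) = (2L-1)k' + O(1)$ uniformly, and the eigenvalue-counting condition gives
\begin{eqnarray*}
\#\{\text{eigenvalues of } D_L \text{ in } (m,E)\}
=
\frac{(2L-1)}{\pi}\bigl(k(E)-k(m)\bigr) + O(1).
\end{eqnarray*}
Since $k(m)=-\pi/2$ (because $\cos k = 0$ at $E'=m$) and $|\Lambda_L| = 2L$ (counting both spinor components), dividing by $2L$ and letting $L\to\infty$ gives $\dos_{0,L}(E) \to \frac{1}{2\pi}\bigl(k(E)+\tfrac{\pi}{2}\bigr)$ on the positive branch, and the analogous expression on the negative branch. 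Differentiating,
\begin{eqnarray*}
\frac{d}{dE}\,\dos_0(E)
=
\frac{1}{2\pi}\,\frac{dk}{dE}
=
\frac{1}{2\pi}\,\rho(E),
\end{eqnarray*}
once one checks from $\cos k = -\tfrac12\sqrt{E^2-m^2}$ that $\dfrac{dk}{dE} = \dfrac{4E}{\sqrt{(E^2-m^2)(m^2+4-E^2)}} = \rho(E)$, which is an elementary implicit-differentiation computation: $-\sin k \, dk = -\tfrac{E}{2\sqrt{E^2-m^2}}\,dE$ and $\sin^2 k = 1-\tfrac14(E^2-m^2) = \tfrac14(m^2+4-E^2)$.

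Finally I would invoke Lemma \ref{thm:limit-DOS-boxes}, which guarantees $|\dos_0(\varphi)-\dos_{0,L}(\varphi)|\le CL^{-1}$ for smooth test functions, to legitimately identify $\dos_0 = \lim_L \dos_{0,L}$ as a measure (having only proved the convergence of distribution functions along the box approximation above), and to conclude that $\dos_0$ has the stated density $(2\pi)^{-1}\rho(E)$ on $m<|E|<\sqrt{m^2+4}$. The main obstacle I anticipate is not the asymptotic count but the careful bookkeeping of the $O(1)$ boundary term and the boundary conditions $\varphi^-_0=\varphi^+_L=0$: one must ensure that the discrete phase $\theta^0_L$, defined with the constraint $|\theta_L(j)-\theta_L(j-1)|<2\pi$, genuinely winds monotonically and that the endpoint correction does not contribute to the leading-order density — this is where I would spend most of the rigor, possibly by comparing with the Sturm oscillation / eigenvalue-interlacing picture for $D_L$ rather than manipulating the phase directly.
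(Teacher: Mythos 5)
Your route is genuinely different from the paper's: you compute the free IDS globally via an oscillation/rotation-number count (total winding $(2L-1)k(E)$ of the free Pr\"ufer dynamics, then differentiate $k(E)$), whereas the paper argues locally, using the $\gamma=0$ scaling limit of the Pr\"ufer phase in the rescaled parameter $\lambda$ to show that each mesoscopic window $[E_0,E_0+\tfrac{2\pi}{\rho(E_0)L})$ contains exactly one eigenvalue of $D_L$ for large $L$, uniformly over compacts, and then chains such windows across $[E_0,E_0+\varepsilon]$. Your scheme is in principle viable (the needed monotone-winding input is essentially the analogue of Lemma \ref{thm:monotonicity-phases}), but as written it does not prove the stated density, because the constants are wrong and they do not cancel.

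Concretely: from \eqref{eq:Ek}, your own intermediate identities $-\sin k\,dk=-\tfrac{E}{2\sqrt{E^2-m^2}}\,dE$ and $\sin^2 k=\tfrac14(m^2+4-E^2)$ give $\bigl|\tfrac{dk}{dE}\bigr| = \tfrac{E}{\sqrt{(E^2-m^2)(m^2+4-E^2)}} = \tfrac{\rho(E)}{4}$, so the asserted identity $\tfrac{dk}{dE}=\rho(E)$ is off by a factor $4$ (note also that $k$ is \emph{decreasing} in $E$, so $k(E)-k(m)<0$ and you need absolute values or a sign discussion of the winding). Second, the normalization is off: dividing the count $\tfrac{2L-1}{\pi}\,|k(E)+\tfrac{\pi}{2}|$ by $2L$ gives $\tfrac1\pi |k(E)+\tfrac{\pi}{2}|$ in the limit (since $\tfrac{2L-1}{2L}\to 1$, not $\tfrac12$), and hence a density $\tfrac1\pi\cdot\tfrac{\rho}{4}=\tfrac{\rho}{4\pi}$ rather than $\tfrac{\rho}{2\pi}$. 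The Proposition — and the paper's proof, through the identity $L\,\mathcal{N}_{0,L}\bigl([E_0,E_0+\tfrac{2\pi}{\rho L})\bigr)=\#\{\cdots\}$ — normalizes the eigenvalue count by $|\Lambda_L|=L$, not by the number $2L$ of spinor components; with that normalization your corrected count yields $\tfrac{2}{\pi}\cdot\tfrac{\rho}{4}=\tfrac{\rho}{2\pi}$, as desired. So the argument can be repaired by fixing the derivative of $k$ and the normalization, but as stated the final formula $\tfrac{d}{dE}\mathcal{N}_0=\tfrac{1}{2\pi}\rho(E)$ does not follow from the displayed computation.
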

%%%%%%%%%%%%%%%%%%%%%%%%%%%%%%%%%%%%%%%%%%%%%%%%%%%%%%%%%%%%
%%%%%%%%%%%%%%%%%%%%%%%%%%%%%%%%%%%%%%%%%%%%%%%%%%%%%%%%%%%%
\begin{proof}
	We use the convergence of the Prüfer transform from Theorem \ref{thm:scaling-model-I} with $\gamma=0$ (which proof is independent of this proposition).
	Let $L\geq 1$ and let $\{\tl_L(j):\, j\geq 1\}$ denote the Pr\"ufer phase with base energy $E_0$ for the free system. Then, according to \eqref{eq:spectrum prufer}, we have
	\begin{eqnarray*}
		L\dos_{0,L}([E_0,E_0+\tfrac{2\pi}{\rho(E_0)L}))
		&=&
		\# \left\{
			\lambda\in[0,2\pi):\,
			\vtl_L(L)-2[(2L-1)k]-\pi \in 2\pi\Z
		\right\}
		\\
		&=&
		\left\lfloor
			\frac{\vartheta^{2\pi}_L(L)-2[(2L-1)k]-\pi}{2\pi}
		\right\rfloor
		\\
		&& \quad
		- 
		\left\lceil
			\frac{\vartheta^0_L(L)-2[(2L-1)k]-\pi}{2\pi}
		\right\rceil
		+
		1.
	\end{eqnarray*}
	Now, as $\tl_L(L) \to \theta_0 + \lambda$ uniformly in compacts, for some constant $\theta_0$, it must hold that
	\begin{eqnarray}
		\lim_{L\to\infty}
		L\dos_{0,L}([E_0,E_0+\tfrac{2\pi}{\rho(E_0)L}])
		=
		1.
	\end{eqnarray}
	Furthermore, as the left-hand side is integer-valued, there exists $L_0\geq 1$ such that
	\begin{eqnarray*}
		L\dos_{0,L}([E_0,E_0+\tfrac{2\pi}{\rho(E_0)L}])
		=
		1,
	\end{eqnarray*}
	for all $L\geq L_0$. Note that, thanks to the uniform convergence of the Pr\"ufer phases, such an $L_0=L_0(E_0)$ can be chosen uniformly bounded for $E_0$ running over compact sets contained in the interior of the spectrum of $D_0$.
	Hence, for all $\varepsilon>0$, there exists $L_{\varepsilon}\geq 1$ such that
	\begin{eqnarray*}
		L\dos_{0,L}([E,E+\tfrac{2\pi}{\rho(E)L}])
		=
		1,
	\end{eqnarray*}
	for all $E\in[E_0,E_0+\varepsilon]$ and all $L\geq L_{\varepsilon}$.
	
	Let $\rho^L_0=\rho(E_0)$, $E^L_1 = E^L_0 + \frac{2\pi}{\rho^L_0 L}$ and, in general, $\rho^L_j=\rho(E^L_j)$ and $E^L_{j+1}=E^L_j + \frac{2\pi}{\rho^L_j L}$. For $\varepsilon>0$, let
	\begin{eqnarray*}
		M_L(\varepsilon)
		=
		\min\left\{
			m:\, E^L_m \geq E_0+ \varepsilon
		\right\}.
	\end{eqnarray*}
	Then,
	\begin{eqnarray*}
		\dos_0([E_0,E_0+\varepsilon])
		&=&
		\lim_{L\to\infty} 
		\dos_{0,L}([E_0,E_0+\varepsilon))
		\\
		&=&
		\lim_{L\to\infty} 
		\dos_{0,L}([E_0,E_{M_L(\varepsilon)}])
		\\
		&=&
		\lim_{L\to\infty} 
		\sum^{M_L(\varepsilon)}_{j=1}
		\dos_{0,L}([E^L_{j-1},E^L_{j-1}+\frac{2\pi}{\rho^L_{j-1}L}))
		\\
		&=&
		\lim_{L\to\infty} 
		\frac{M_{L}(\varepsilon)}{L}.
	\end{eqnarray*}		
	Now,
	\begin{eqnarray*}
		E^L_{M_L(\varepsilon)}
		&=&
		E_0 + \sum^{M_L(\varepsilon)}_{j=0} \frac{2\pi}{\rho^L_j L}
		\geq 
		E_0 + \varepsilon
		>
		E_0 + \sum^{M_L(\varepsilon)-1}_{j=0} \frac{2\pi}{\rho^L_j L}.
	\end{eqnarray*}
	By continuity of $\rho$, there exists $c>0$ such that $|\rho_j^L-\rho_0|<c\varepsilon$ for all $j\leq M_L(\varepsilon)$. Hence,
	\begin{eqnarray*}
		\frac{2\pi M_L(\varepsilon)}{(\rho_0-c\varepsilon) L}
		=
		\sum^{M_L(\varepsilon)}_{j=0} \frac{2\pi}{(\rho_0-c\varepsilon) L}
		\geq 
		\varepsilon
		\geq
		\sum^{M_L(\varepsilon)-1}_{j=0} \frac{2\pi}{(\rho_0+c\varepsilon) L}
		=
		\frac{2\pi(M_L(\varepsilon)-1)}{(\rho_0+c\varepsilon) L}.
	\end{eqnarray*}
	It then follows that
	\begin{eqnarray*}
		\frac{2\pi }{\rho_0-c\varepsilon}
		\lim_{L\to\infty}
		\frac{M_L(\varepsilon)}{ L}
		\geq
		\varepsilon
		\geq
		\frac{2\pi}{\rho_0+c\varepsilon}
		\lim_{L\to\infty}
		\frac{M_L(\varepsilon)}{L},
	\end{eqnarray*}
	and
	\begin{eqnarray*}
		\lim_{\varepsilon\to0}
		\varepsilon^{-1}
		\lim_{L\to\infty}
		\frac{M_L(\varepsilon)}{L}
		=
		\frac{\rho_0}{2\pi}.
	\end{eqnarray*}
	This finishes the proof.

\end{proof}

%%%%%%%%%%%%%%%%%%%%%%%%%%%%%%%%%%%%%%%%%%%%%%%%%%%%%%%%%%%%
%%%%%%%%%%%%%%%%%%%%%%%%%%%%%%%%%%%%%%%%%%%%%%%%%%%%%%%%%%%%
Next, we compare the free and perturbed operators. The coincidence of the density of states for the free and perturbed operators was first observed by Dolai and Krishna for $d\geq 3$ \cite{DolaiKrishna}.

%%%%%%%%%%%%%%%%%%%%%%%%%%%%%%%%%%%%%%%%%%%%%%%%%%%%%%%%%%%%
%%%%%%%%%%%%%%%%%%%%%%%%%%%%%%%%%%%%%%%%%%%%%%%%%%%%%%%%%%%%
%{\color{red}\begin{lemma}\label{densityofstates}
%Let
%\begin{align*}
%\bra\delta_0,R(z)\delta_0\ket=
%\bra\delta_0,(H-z)^{-1}\delta_0\ket
%&=\varphi_0\\
%&\Longleftrightarrow 
%\delta_0=(H-z)\varphi_0\\
%&\Longleftrightarrow
%\end{align*}
%where $\psi=R(z)\delta_0\in\ell^2(\n^*)$.
%\end{lemma}}

%%%%%%%%%%%%%%%%%%%%%%%%%%%%%%%%%%%%%%%%%%%%%%%%%%%%%%%%%%%%
%%%%%%%%%%%%%%%%%%%%%%%%%%%%%%%%%%%%%%%%%%%%%%%%%%%%%%%%%%%%
\begin{lemma}
%	For each $\beta\in(0,1)$ and each smooth $\varphi$, there is a constant $C=C(a,\varphi)$ such that
%	\begin{eqnarray}
%		\left|
%			\dos_{\omega,n}(\varphi)- \dos_{0,n}(\varphi)
%		\right|
%		\leq
%		C n^{-\beta}.
%	\end{eqnarray}
	\begin{eqnarray*}
		\lim_{L\to\infty}\dos_{\omega,L}=\dos_{0}.
	\end{eqnarray*}
\end{lemma}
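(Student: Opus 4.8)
The plan is to compare the perturbed density of states $\dos_{\omega,L}$ with the free one $\dosnf$ using the same Helffer--Sjöstrand machinery that proved Lemma \ref{thm:limit-DOS-boxes}, but now controlling the difference of resolvents $R_{\omega,L}(z) - R_{0,L}(z)$ coming from the random potential $\vn$ rather than from the boundary. First I would write, for a fixed smooth $\varphi$,
\begin{eqnarray*}
	\left| \dos_{\omega,L}(\varphi) - \dos_{0,L}(\varphi) \right|
	\leq
	\frac{1}{2L}
	\sum_{\substack{1\leq j \leq L \\ \sigma=\pm}}
	\left| \bra \delta^{\sigma}_j, \left( \varphi(\drcn) - \varphi(\drcnf) \right) \delta^{\sigma}_j \ket \right|,
\end{eqnarray*}
and then apply Helffer--Sjöstrand with a quasi-analytic extension $\tilde\varphi$ to express $\varphi(\drcn) - \varphi(\drcnf)$ as an integral over $\mathbb{C}$ of $\frac{\partial}{\partial\overline z}\tilde\varphi(z)$ against $R_{\omega,L}(z) - R_{0,L}(z)$. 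The second resolvent identity gives $R_{\omega,L}(z) - R_{0,L}(z) = - R_{\omega,L}(z) \vn R_{0,L}(z)$, and since $\vn$ is a multiplication operator supported on $\Lambda_L$ with entries bounded by $\gamma L^{-\alpha}$ (Model I) or $\gamma \sup_n n^{-\alpha}$ (Model II, after a rougher bound), one controls the diagonal sum by $\|\vn\|_{\mathrm{op}}$ times operator-norm bounds $\|R_{\omega,L}(z)\|\,\|R_{0,L}(z)\| \leq |\im z|^{-2}$, times $L$ — which is not quite enough on its own.

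The key refinement is to use Cauchy--Schwarz on the trace-type sum as in Lemma \ref{thm:comparison-resolvents}: writing $\vn = \sum_j \sum_\sigma v_j^\sigma |\delta_j^\sigma\ket\bra\delta_j^\sigma|$ with $|v_j^\sigma| \leq \gamma L^{-\alpha}$, one gets
\begin{eqnarray*}
	\sum_{j,\sigma} \left| \bra \delta^{\sigma}_j, R_{\omega,L}(z) \vn R_{0,L}(z) \delta^{\sigma}_j \ket \right|
	\leq
	\gamma L^{-\alpha} \sum_{j,\sigma,j',\sigma'}
	\left| \bra \delta^\sigma_j, R_{\omega,L}(z)\delta^{\sigma'}_{j'}\ket \right|\,
	\left| \bra \delta^{\sigma'}_{j'}, R_{0,L}(z)\delta^\sigma_j\ket \right|,
\end{eqnarray*}
and by Cauchy--Schwarz in $(j,\sigma)$ followed by summing over $(j',\sigma')$ this is at most $\gamma L^{-\alpha}$ times $\sum_{j',\sigma'}\|R_{\omega,L}(z)\delta^{\sigma'}_{j'}\|\,\|R_{0,L}(z)\delta^{\sigma'}_{j'}\|$, hence $\leq \gamma L^{-\alpha} \cdot L \cdot |\im z|^{-2}$ by a further Cauchy--Schwarz and the Hilbert--Schmidt-type bound $\sum_{j',\sigma'}\|R_L(z)\delta^{\sigma'}_{j'}\|^2 = \|R_L(z)\|_{\mathrm{HS}}^2 \leq (L/|\im z|^2)$. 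Dividing by $2L$ kills one power of $L$, leaving $O(L^{-\alpha})$ for Model I, which tends to $0$. For Model II the potential does not decay uniformly, so I would instead split the sum over $j$ into $j \leq L^{1/2}$ (a vanishing fraction of sites, controlled crudely by $\|\vn\|_{\mathrm{op}} \leq \gamma$) and $j > L^{1/2}$ (where $|v_j^\sigma| \leq \gamma L^{-\alpha/2}$), again getting a vanishing bound after dividing by $L$; the integrability of $\int_\mathbb{C} |\frac{\partial}{\partial\overline z}\tilde\varphi(z)|\,|\im z|^{-2}\,\lambda(dz)$ from the quasi-analytic extension closes the estimate. Combining with Lemma \ref{thm:limit-DOS-boxes} and the previous Proposition, $\dos_{\omega,L}(\varphi) \to \dos_0(\varphi)$ for every smooth $\varphi$, and a density argument upgrades this to the claimed weak convergence of measures; almost-sure convergence then follows since the bound is deterministic (no randomness survives in the estimate), which is in fact stronger than convergence in expectation.

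The main obstacle I anticipate is the Model II case: because the envelope $n^{-\alpha}$ does not vanish with $L$, the naive bound $\|\vn\|_{\mathrm{op}} = O(1)$ only gives $O(1)$ for the normalized difference, so the split-and-conquer argument (or, more efficiently, a weighted Cauchy--Schwarz that exploits the spatial decay of $G_{0,L}$ and $G_{\omega,L}$ away from the diagonal, which is available in the a.c. regime) is essential and must be done carefully; a secondary subtlety is justifying that the operator-norm and Hilbert--Schmidt bounds on $R_{\omega,L}(z)$ hold uniformly in $\omega$, which is immediate since $\drcn$ is self-adjoint so $\|R_{\omega,L}(z)\| \leq |\im z|^{-1}$ regardless of the realization.
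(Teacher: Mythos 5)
Your proposal is correct and follows essentially the same route as the paper: a resolvent identity for $R_{\omega,L}-R_{0,L}$, a per-site Cauchy--Schwarz bound of the form $\sum_{j,\sigma}|G_{\omega,L}(z;j,\sigma;k,\cdot)\,G_{0,L}(z;k,\cdot;j,\sigma)|\le\|R_{\omega,L}(z)\delta_k\|\,\|R_{0,L}(z)\delta_k\|\le C|z|^{-2}$, and the Helffer--Sj\"ostrand argument of Lemma \ref{thm:limit-DOS-boxes}. The only difference is organizational: the paper keeps the factor $|V_{\omega,i}(k)|$ inside the sum and uses $\sum_{k\le L}|V_{\omega,i}(k)|=o(L)$, which treats Models I and II uniformly and makes your separate split of sites for Model II unnecessary.
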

%%%%%%%%%%%%%%%%%%%%%%%%%%%%%%%%%%%%%%%%%%%%%%%%%%%%%%%%%%%%
%%%%%%%%%%%%%%%%%%%%%%%%%%%%%%%%%%%%%%%%%%%%%%%%%%%%%%%%%%%%
\begin{proof}
	We start noticing that
	\begin{eqnarray*}
		D_{\omega,L}
		=
		D_{0,L}
		+
		\sum_{|j|\leq L}
		\left(
			\vo(k) \deltap_k \bra \deltap_k |+ \vt(k) \deltam_k \bra \deltam_k |
		\right).
	\end{eqnarray*}
	By the resolvent formula,
	\begin{eqnarray*}
		&&
		\sum_{|j|\leq L, \sigma=\pm}
		\left|
			G_{\omega,L}(z;j,\sigma;j,\sigma)
			-
			G_{0,L}(z;j,\sigma;j,\sigma)
		\right|
		\\
		&\leq&
		\sum_{|j|\leq L, \sigma=\pm}
		\Big{|}
		\sum_{|k|\leq L}
		\Big{(}
			G_{\omega,L}(z;j,\sigma;k,+)
			G_{0,L}(z;k,+;j,\sigma)\vo(k)
			\\
			&&
			\phantom{blablabla}
			+
			G_{\omega,L}(z;j,\sigma;k,+)
			G_{0,L}(z;k,+;j,\sigma)\vt(k)
		\Big{)}
		\Big{|}.
	\end{eqnarray*}
	Now, by Cauchy-Schwarz inequality,
	\begin{eqnarray*}
		&&
		\sum_{|j|\leq L, \sigma=\pm}
		\Big{|}
		\sum_{|k|\leq L}
		\Big{(}
			G_{\omega,L}(z;j,\sigma;k,+)
			G_{0,L}(z;k,+;j,\sigma)\vo(k)
		\Big{)}
		\Big{|}
		\\
		&\leq&
		\sum_{|k|\leq L}
		\left| \vo(k) \right|
		\sum_{|j|\leq L, \sigma=\pm}
		\Big{|}
			G_{\omega,L}(z;j,\sigma;k,+)
			G_{0,L}(z;k,+;j,\sigma)
		\Big{|}
		\\
		&\leq&
		\sum_{|k|\leq L}
		\left| \vo(k) \right|
		\| R_{\omega,L}(z) \deltap_k\| \| R_{0,L}(z) \deltap_k\|
		\\
		&\leq&
		C |z|^{-2} \sum_{|k|\leq L}
		\left| \vo(k) \right|.
	\end{eqnarray*}
	The last sum is $o(L)$. The sum involving the random variables $\vt(k)$ can be treated in the same way. The result then follows as in the proof of Lemma \ref{thm:limit-DOS-boxes}.
\end{proof}

\section{Scaling limit for Model I}

%%%%%%%%%%%%%%%%%%%%%%%%%%%%%%%%%%%%%%%%%%%%%%%%%%%%%%%%%%%%
%%%%%%%%%%%%%%%%%%%%%%%%%%%%%%%%%%%%%%%%%%%%%%%%%%%%%%%%%%%%
%%%%%%%%%%%%%%%%%%%%%%%%%%%%%%%%%%%%%%%%%%%%%%%%%%%%%%%%%%%%
%%%%%%%%%%%%%%%%%%%%%%%%%%%%%%%%%%%%%%%%%%%%%%%%%%%%%%%%%%%%
We fix a base energy $E_0\in(m,\sqrt{m^2+4})$ corresponding to $k\in(-\pi,-\frac{\pi}{2})$, $k\neq-\frac{3\pi}{4}$ through \eqref{eq:Ek} and we write $\rho=\rho(E_0)$.
We denote the Pr\"ufer transform for $\mathcal{D}_L$ corresponding to the energy level $E_0 + \frac{\lambda}{\rho L}$ by $\zeta^{\lambda}_L(j)=R^{\lambda}_L(j)e^{i\theta^{\lambda}_L(j)}$, denote $\eta_j=(2j-1)k$ and $\bar{\theta}^{\lambda}_L(j)=\theta^{\lambda}_L(j)-\eta_j$. Let
\begin{eqnarray*}
	\sigma_1=\frac{-p_1}{\sin(2k)},
	\quad
	\sigma_2=\frac{p_2}{\sin(2k)},
	\quad
	\sigma^2=\sigma_1^2+\sigma_2^2.
\end{eqnarray*}
The recursion for $\zeta^{\lambda}_L$ can be found in \cite[Equation (3.17)]{BMT01} and is given  by
\begin{eqnarray*}
	\zeta^{\lambda}_L(j+1)
	&=&
	\Big{(}
	1
	- i\sigma_2
	\cos \bar\theta^{\lambda}_L(j) \, 
	e^{-i\bar\theta^{\lambda}_L(j)}
	  \left(
		 \frac{\omega_{1,j}}{L^{\alpha}} - \frac{\lambda}{\rho L}
	\right)
	\\
	&&
	\phantom{blablabla}
	- i\sigma_1
	\cos( \bar\theta^{\lambda}_L(j)-k)
	\, e^{-i (\bar\theta^{\lambda}_L(j)-k)}
	\left(
		 \frac{\omega_{2,j+1}}{L^{\alpha}} - \frac{\lambda}{\rho L}
	\right)
	\\
	\nonumber
	&&
	\phantom{blablabla}
	-i \sqrt{\sigma_1\sigma_2}
	\cos \bar\theta^{\lambda}_L(j) \,
	e^{-i (\bar \theta^{\lambda}_L(j)-k)}
	\left(
		 \frac{\omega_{1,j}}{L^{\alpha}} - \frac{\lambda}{\rho L}
	\right)
	\left(
		 \frac{\omega_{2,j+1}}{L^{\alpha}} - \frac{\lambda}{\rho L}
	\right)
	\Big{)}\,
	\zeta^{\lambda}_L(j)
%	\\
%	&=:&
%	\Big{(}
%		1
%		 - i\sigma_1 A_j(\lambda,\theta^{\lambda}_L(j)) 
%		- i\sigma_2 B_j(\lambda,\theta^{\lambda}_L(j)) 
%		- i \sqrt{\sigma_1\sigma_2} C_j(\lambda,\theta^{\lambda}_L(j))
%	\Big{)} 
%	\, \zeta^{\lambda}_L(j)
	\\
	&=:&
	\Big{(}
		1 +\Gamma_j(\lambda,\theta_L(j))
	\Big{)} 
	\, \zeta^{\lambda}_L(j),
\end{eqnarray*}
with $R^{\lambda}_L(0)=1$ and $\theta^{\lambda}_L(j)=0$.
Note that, for $L$ large enough, the quantity in the parenthesis above will be contained in a small neighbourhood of $1$ for all $j=0,\cdots,L$, so that its logarithm will be well defined and $\zeta_L(j)$ will not vanish. We let $r^{\lambda}_L(j)=\log R^{\lambda}_L(j)$, so that $r^{\lambda}_L(0)=0$.

\smallskip

%%%%%%%%%%%%%%%%%%%%%%%%%%%%%%%%%%%%%%%%%%%%%%%%%%%%%%%%%%%%
%%%%%%%%%%%%%%%%%%%%%%%%%%%%%%%%%%%%%%%%%%%%%%%%%%%%%%%%%%%%
\begin{proof}[Proof of Proposition \ref{thm:scaling-model-I}]
Let $\lambda_1,\cdots,\lambda_m\in\R$, write $\theta_{L,l}(j)=\theta^{\lambda_l}_L(j)$ for $l=1,\cdots,m$, $r_{L,l}(j)=r^{\lambda_l}_L(j)$ and consider the $2m+2$-dimensional recursion
\begin{eqnarray*}
	\theta_{L,l}(j+1)
	&=&
	\theta_{L,l}(j)
	+
	\text{Im}
	\left\{
	\log\Big{(}
		1 + \Gamma_j(\lambda_l,\theta_{L,l}(j))
	\Big{)}
	\right\},
	\\
	r_{L,l}(j+1)
	&=&
	r_{L,l}(j)
	+
	\text{Re}
	\left\{
	\log\Big{(}
		1 + \Gamma_j(\lambda_l,\theta_{L,l}(j))
	\Big{)}
	\right\},
	\quad
	l=1,\cdots,m,
	\\
	B_{L,1}(j+1)
	&=&
	B_{L,1}(j) - \frac{\omega_{1,j}}{L^{\alpha}},
	\\
	B_{L,2}(j+1)
	&=&
	B_{L,2}(j) - \frac{\omega_{2,j+1}}{L^{\alpha}}.
\end{eqnarray*}
	%%%%%%%%%%%%%%%%%%%%%%%%%%%%%%
	In the context of Theorem \ref{thm:general-convergence-finite-dimensional},
	we denote 
	\begin{eqnarray*}
	X^L
	=
	\Big{(}X^L_l\Big{)}_{l=1}^{2m+2}
	=
	\Big{(}\theta_{L,1},\cdots, \theta_{L,m},
	r_{L,1},\cdots,r_{L,m}
	B_{L,1},B_{L,2}\Big{)},
	\end{eqnarray*}
	and we let
	$$
	Y^L_j(x)
	=
	\Big{(}Y^L_{j,l}\Big{)}_{l=1}^{2m+2}
	=
	\Big{(}Y^L_{j,1}(x), \cdots, Y^L_{j,2m+2}(x)\Big{)}$$
	to be the increment $X^L_{j+1}-X^L_j$ conditioned on $X^L_j=x\in\R^{2m+1}$:
	\begin{eqnarray*}
		Y^L_{j,l}(x)
		&=&
		\text{Im}
		\left\{
		\log\Big{(}
			1 + \Gamma_j(\lambda_l,x_l)
		\Big{)}
		\right\},
		\\
		Y^L_{j,m+l}(x)
		&=&
		\text{Re}
		\left\{
		\log\Big{(}
			1 + \Gamma_j(\lambda_l,x_l)
		\Big{)}
		\right\},
	\end{eqnarray*}
	for $l=1,\cdots,m$ and
	\begin{eqnarray*}
		Y^L_{j,2m+1}(x)
		=
		- \frac{\omega_{1,j}}{L^{\alpha}},
		\quad
		\quad
		Y^L_{j,2m+2}(x)
		=
		- \frac{\omega_{2,j+1}}{L^{\alpha}}.
	\end{eqnarray*}
	%%%%%%%%%%%%%%%%%%%%%%%%%%%%%%		
	We define
	\begin{eqnarray*}
		b^L_l(t,x)
		=
		L \esp[Y^L_{Lt,l}(x)],
		\quad
		a^L_{l,l'}(t,x)
		=
		L\esp[Y^L_{Lt,l}(x)_lY^L_{Lt,l'}(x)],
	\end{eqnarray*}
	where, with a slight abuse of notation, each time a continuous parameter $s$ is given where an integer one is expected, it is understood that $s$ has to be replaced by $\lfloor s \rfloor$. Let us write
	$$ b^L(t,x)=\Big{(}b^L_l(t,x)\Big{)}_l,\quad a^L(t,x)=\Big{(}a^L_{l,l'}(t,x)\Big{)}_{l,l'}.$$
	%%%%%%%%%%%%%%%%%%%%%%%%%%%%%%
	%%%%%%%%%%%%%%%%%%%%%%%%%%%%%%
	Let us start computing the drift terms i.e. the limits
	\begin{eqnarray}
		\label{eq:general-drift-limit}
		\lim_{L\to\infty}
		\int^t_0 b^L_l(s,x)\, ds
		=
		\lim_{n\to\infty}
		\frac{1}{L} \sum^{Lt}_{j=1} b^L_l(\tfrac{j}{L},x)ds
		=
		\lim_{n\to\infty}
		\sum^{Lt}_{j=1}
		\esp\left[
			Y^L_{j,l}(x)
		\right].
	\end{eqnarray}
	 As the random variables $(\omega_{i,j})_j$, $i=1,2$ are centered, this limit is $0$ for $l=2m+1$ and $2m+2$.
	For $l=1,\cdots,2m$, we use the approximation $\log(1+z) = z - \frac{z^2}{2}+O(|z|^3)$ to obtain
	\begin{eqnarray}
		\label{eq:log-approximation}
		\log\Big{(} 1 + \Gamma_j(\lambda_l,x_l) \Big{)}
		&=&
		-i\sigma_2
		\cos(x_l-\eta_j) \, 
		e^{-i(x_l-\eta_j)}
		 \left(
		 	\frac{\omega_{1,j}}{L^{\alpha}} - \frac{\lambda_l}{\rho L}
		 \right)
		\\
		\nonumber
		&&
%		\phantom{blablablabla}
		-
		i\sigma_1
		\cos(x_l-\eta_j+k)
		\, e^{-i (x_l-\eta_j+k)}
		\left(
		 	\frac{\omega_{2,j+1}}{L^{\alpha}} - \frac{\lambda_l}{\rho L}
		 \right)
		\\
		\nonumber
		&&
%		\phantom{blablablabla}
		-
		\frac{\sigma_2^2}{2}
		\cos^2(x_l-\eta_j) \, 
		e^{-i2(x_l-\eta_j)}
		\frac{\omega_{1,j}^2}{L^{2\alpha}}
		\\
		\nonumber
		&&
%		\phantom{blablablabla}
		-
		\frac{\sigma_1^2}{2}
		\cos^2(x_l-\eta_j+k)
		\, e^{-2i (x_l-\eta_j+k)}
		 	\frac{\omega_{2,j+1}^2}{L^{2\alpha}} 
		 +
		 A_j \omega_{1,j}\omega_{2,j}
		 +
		 O\left(
		 	\frac{1}{L^{3\alpha}}
		 \right),
	\end{eqnarray}
	for some sequence $(A_j)_j$.
	Note that the error terms above will have a vanishing contribution to the limits considered in \eqref{eq:general-drift-limit}. Likewise, the cross-terms $\omega_{1,j}\omega_{2,j}$ have zero expected value. Let us consider each term separately. First,
	\begin{eqnarray*}
		&&
		\lim_{L\to\infty}
		\sum_{j=1}^{Lt}
		\esp\left[
			i\cos(x_l-\eta_j) \, 
			e^{-i(x_l-\eta_j)}
		 	\left(
		 		\frac{\omega_{1,j}}{L^{\alpha}} - \frac{\lambda_l}{\rho L}
			 \right)
		\right]
		\\
		&& \quad
		=
		-
		\lim_{L\to\infty}
		\frac{\lambda_l}{\rho L}
		\sum_{j=1}^{Lt}
		\Big{(}
		i\cos^2(x_l-\eta_j)
		+
		\sin(x_l-\eta_j)
		\cos(x_l-\eta_j)
		\Big{)}
%		\\
%		\label{eq:example-trigonometric-limit}
%		&& \quad
%		=
%		-\lim_{2L\to\infty}
%		\frac{\lambda_l}{\rho L}
%		\sum_{j=1}^{Lt}
%		\left(
%			i
%			+
%			i \cos(2x_l-2\eta_j)
%			+
%			\sin(2x_l-2\eta_j)
%		\right)
		=
		-
		\frac{i\lambda_l t}{2 \rho},
	\end{eqnarray*}
	for any $k\in(-\pi,-\frac{\pi}{2})$, where we used Lemma \ref{thm:trigonometric-limits-1}. The second term in \eqref{eq:log-approximation} can be handled in the same way. Next,
	\begin{eqnarray*}
		&&
		\lim_{L\to\infty}
		\sum^{Lt}_{j=1}
		\esp\left[
			\cos^2(x_l-\eta_j) \, 
			e^{2i(x_l-\eta_j)}
			\frac{\omega_{1,j}^2}{L^{2\alpha}}
		\right]
		\\
		&& \quad
		=
		\lim_{L\to\infty}
		\frac{1}{L^{2\alpha}}
		\sum^{Lt}_{j=1}
		\Big{(}
			\cos^2(x_l-\eta_j)
			\cos(2(x_l-\eta_j))
			+
			i
			\cos^2(x_l-\eta_j) \, 
			\sin(2(x_l-\eta_j))
		\Big{)}
		\\ 
		&& \quad
		=
		\left\{
			\begin{array}{ll}
				0 & \alpha>\frac{1}{2},
				\\
				&
				\\
				-\frac{t}{4} & \alpha=\frac{1}{2},
			\end{array}
		\right.
	\end{eqnarray*}
	for $k\neq -\frac{3\pi}{4}$, as can be seen from Lemma \ref{thm:trigonometric-limits-1} and the identities
	\begin{eqnarray*}
		\cos^2(\alpha)\cos(2\alpha)
		&=&
		2\cos^4 (\alpha)-\cos^2(\alpha),
		\\
		\cos^2(\alpha)\, \sin(2\alpha)
		&=&
		\frac{1}{2}\sin(4\alpha)
		+
		\frac{1}{4}\sin(4\alpha).
	\end{eqnarray*}
	The last term in \eqref{eq:log-approximation} can be handled along the same lines.
	Hence,
	\begin{eqnarray*}
		\lim_{L\to\infty}
		\sum^{Lt}_{j=1}
		\esp\left[
			\log\Big{(} 1 + \Gamma_j(\lambda_l,x_l) \Big{)}
		\right]
		=
		\frac{i\sigma_2 \lambda_l t}{2\rho}
		+\frac{i\sigma_1 \lambda_l t}{2\rho}
		+\frac{\sigma_2^2 t}{8}
		+\frac{\sigma_1^2 t}{8}
		=
		\frac{i \lambda_l t}{2}
		+
		\frac{\sigma^2 t}{8},
	\end{eqnarray*}
	for $\alpha=\frac{1}{2}$, while the second summand is replaced by $0$ if $\alpha>\frac{1}{2}$. As a conclusion,
	\begin{eqnarray*}
		\lim_{L\to\infty} \int^t_0 b^L(s,x)\, ds
		=
		\left\{
			\begin{array}{ll}
				\left(
					\frac{\lambda_1 t}{2},\cdots, \frac{\lambda_m t}{2},
					\frac{\sigma^2}{8},\cdots, \frac{\sigma^2}{8},
					0,0
				\right)
				&
				\alpha=\frac{1}{2},
				\\
				&
				\\
				\left(
					\frac{\lambda_1 t}{2},\cdots, \frac{\lambda_m t}{2},
					0,\cdots, 0,
					0,0
				\right)
				&
				\alpha>\frac{1}{2}.
			\end{array}
		\right.
	\end{eqnarray*}
	%%%%%%%%%%%%%%%%%%%%%%%%%%%%%%
	%%%%%%%%%%%%%%%%%%%%%%%%%%%%%%
	
	We now turn to the correlations i.e. to the limits
	\begin{eqnarray}
		\label{eq:general-diffusion}
		\lim_{L\to\infty}
		\int^t_0 a^L_{l,l'}(s,x)\, ds
		=
		\lim_{L\to\infty}
		\frac{1}{L}
		\sum^{Lt}_{j=1} a^L(\tfrac{j}{L},x)
		=
		\lim_{L\to\infty}
		\sum^{Lt}_{j=1}
		\esp\left[
			Y^L_{j,l}(x)\, Y^L_{j,l'}(x)
		\right].
	\end{eqnarray}
	%%%%%%%%%%%%%%%%%%%%%%%%%%%%%%
	%%%%%%%%%%%%%%%%%%%%%%%%%%%%%%
	First, we have that
	\begin{eqnarray*}
		&&
		\text{Im}\left\{\log\Big{(} 1 + \Gamma_j(\lambda_l,x_l) \Big{)}\right\}
		\text{Im}\left\{\log\Big{(} 1 + \Gamma_j(\lambda_{l'},x_{l'}) \Big{)}\right\}
		\\
		&&
		=
		\left(
		\sigma_2
		\cos^2(x_l-\eta_j) \, 
		 	\frac{\omega_{1,j}}{L^{\alpha}} 
		+
		\sigma_1
		\cos^2(x_l-\eta_j+k)
		 	\frac{\omega_{2,j+1}}{L^{\alpha}}
		 +
		 O\left(
		 	\frac{1}{L^{2\alpha}}
		 \right)
		 \right)
		 \\
		 &&
		 \quad
		 \times
		 \left(
		\sigma_2
		\cos^2(x_{l'}-\eta_j) \, 
		 	\frac{\omega_{1,j}}{L^{\alpha}} 
%		\phantom{blablablabla}
		+
		\sigma_1
		\cos^2(x_{l'}-\eta_j+k)
		 	\frac{\omega_{2,j+1}}{L^{\alpha}} 
%		\phantom{blablablabla}
		 +
		 O\left(
		 	\frac{1}{L^{2\alpha}}
		 \right)
		 \right)
		 \\
		 &&
		 =
		 \sigma_2^2
		\cos^2(x_l-\eta_j) \, \cos^2(x_{l'}-\eta_j)
		 	\frac{\omega_{1,j}^2}{L^{2\alpha}} 
		 \\
		 &&
		 \quad
		 +
		 \sigma_1^2
		\cos^2(x_{l}-\eta_j+k)\, \cos^2(x_{l'}-\eta_j+k)
		 	\frac{\omega_{2,j+1}^2}{L^{2\alpha}} 
		 +
		 B_j \omega_{1,j}\omega_{2,j+1}
		 +
		 O\left(
		 	\frac{1}{L^{3\alpha}}
		 \right),
	\end{eqnarray*}
	for some sequence $(B_j)_j$.
	We can see immediately that the limit \eqref{eq:general-diffusion} will be $0$ if $\alpha>\frac{1}{2}$. We consider $\alpha=\frac{1}{2}$ in the following. Using Lemma \ref{thm:trigonometric-limits-1},
	\begin{eqnarray*}
		&&
		\lim_{L\to\infty}
		\sum^{Lt}_{j=1}
		\esp\left[
			\cos^2(x_l-\eta_j) \, \cos^2(x_{l'}-\eta_j)
		 	\frac{\omega_{1,j}^2}{L^{2\alpha}} 
		\right]
		\\
		&& \quad
		=
		\lim_{L\to\infty}
		\frac{1}{4L}
		\sum^{Lt}_{j=1}
		\Big{(}
			1+\cos(2x_l-2\eta_j)\,\cos(2x_{l'}-2\eta_j)
		\Big{)}
		\\
		&& \quad
		=
		\lim_{L\to\infty}
		\frac{1}{8L}
		\sum^{Lt}_{j=1}
		\Big{(}
			2+\cos(2x_l-2x_{l'})+\cos(2x_{l}+2x_{l'}-4\eta_j)
		\Big{)}
		\\
		&& \quad
		=
		\frac{t}{4}
		+ \frac{t}{8} \cos(2x_l-2x_{l'}).
	\end{eqnarray*}
	Hence, for $l,l'\in\{1,\cdots,m\}$,
	\begin{eqnarray*}
		\lim_{L\to\infty}
		\int^t_0 a^L_{l,l'}(s,x)\, ds
		=
		\frac{\sigma^2 t}{4}
		\left(
			1 + \frac{1}{2}\cos(2x_l-2x_{l'})
		\right).
	\end{eqnarray*}
	Next,
	\begin{eqnarray*}
		&&
		\text{Re}\left\{\log\Big{(} 1 + \Gamma_j(\lambda_l,x_l) \Big{)}\right\}
		\text{Re}\left\{\log\Big{(} 1 + \Gamma_j(\lambda_{l'},x_{l'}) \Big{)}\right\}
		\\
		&&
		=
		\left(
		\frac{\sigma_2}{2}
		\sin(2x_l-2\eta_j)
		 	\frac{\omega_{1,j}}{L^{\alpha}} 
		+
		\frac{\sigma_1}{2}
		\sin(2x_l-2\eta_j+2k) 
		 	\frac{\omega_{2,j+1}}{L^{\alpha}}
		 +
		 O\left(
		 	\frac{1}{L^{2\alpha}}
		 \right)
		 \right)
		\\
		&&
		\quad
		\times
		\left(
		\frac{\sigma_2}{2}
		\sin(2x_{l'}-2\eta_j)
		 	\frac{\omega_{1,j}}{L^{\alpha}} 
		+
		\frac{\sigma_1}{2}
		\sin(2x_{l'}-2\eta_j+2k) 
		 	\frac{\omega_{2,j+1}}{L^{\alpha}}
		 +
		 O\left(
		 	\frac{1}{L^{2\alpha}}
		 \right)
		 \right)
		 \\
		 &&
		 =
		 \frac{\sigma_2^2}{4}
		 \sin(2x_{l}-2\eta_j) \, \sin(2x_{l'}-2\eta_j)
		 \frac{\omega_{1,j}^2}{L^{2\alpha}} 
		 \\
		 &&
		 \quad
		 +
		 \frac{\sigma_1^2}{4}
		\sin(2x_{l}-2\eta_j+2k) \, \sin(2x_{l'}-2\eta_j+2k) 
		\frac{\omega_{2,j+1}^2}{L^{2\alpha}}
		+
		C_j \omega_{1,j}\omega_{2,j+1}
		+
		 O\left(
		 	\frac{1}{L^{3\alpha}}
		 \right),
	\end{eqnarray*}
	for some sequence $(C_j)_j$.
	Now,
	\begin{eqnarray*}
		&&
		\lim_{L\to\infty}
		\sum^{Lt}_{j=1}
		\esp\left[
			\sin(2x_{l}-2\eta_j) \, \sin(2x_{l'}-2\eta_j)
		 	\frac{\omega_{1,j}^2}{L^{2\alpha}}
		\right]
		\\
		&&
		\quad
		=
		\lim_{L\to\infty}
		\frac{1}{2L}
		\sum^{Lt}_{j=1}
		\Big{(}
			\cos(2x_l-2x_{l'})
			-
			\cos(2x_l+2x_{l'}-4\eta_j)
		\Big{)}
		=
		\frac{t}{2} \cos(2x_l-2x_{l'}).
	\end{eqnarray*}
	Hence, for $l,l'\in\{1,\cdots,m\}$,
	\begin{eqnarray*}
		\lim_{L\to\infty}
		\int^t_0 a^L_{m+l,m+l'}(s,x)\, ds
		=
		\frac{\sigma^2 t}{8}\cos(2x_l-2x_{l'}).
	\end{eqnarray*}
	Finally,
	\begin{eqnarray*}
		&&
		\text{Im}\left\{\log\Big{(} 1 + \Gamma_j(\lambda_l,x_l) \Big{)}\right\}
		\text{Re}\left\{\log\Big{(} 1 + \Gamma_j(\lambda_{l'},x_{l'}) \Big{)}\right\}
		\\
		&&
		=
		\left(
		\sigma_2
		\cos^2(x_l-\eta_j) \, 
		 	\frac{\omega_{1,j}}{L^{\alpha}} 
		+
		\sigma_1
		\cos^2(x_l-\eta_j+k)
		 	\frac{\omega_{2,j+1}}{L^{\alpha}}
		 +
		 O\left(
		 	\frac{1}{L^{2\alpha}}
		 \right)
		 \right)
		 \\
		 &&
		 \quad
		\times
		\left(
		\frac{\sigma_2}{2}
		\sin(2x_{l'}-2\eta_j)
		 	\frac{\omega_{1,j}}{L^{\alpha}} 
		+
		\frac{\sigma_1}{2}
		\sin(2x_{l'}-2\eta_j+2k) 
		 	\frac{\omega_{2,j+1}}{L^{\alpha}}
		 +
		 O\left(
		 	\frac{1}{L^{2\alpha}}
		 \right)
		 \right)
		 \\
		 &&
		 =
		 \frac{\sigma_2^2}{2}
		 \cos^2(x_l-\eta_j) \, \sin(2x_{l'}-2\eta_j)
		 \frac{\omega_{1,j}^2}{L^{2\alpha}}
		 \\
		 &&
		 \quad
		 +
		 \frac{\sigma_1^2}{2}
		 \cos^2(x_l-\eta_j+k) \, \sin(2x_{l'}-2\eta_j+2k)
		 \frac{\omega_{2,j+1}^2}{L^{2\alpha}}
		 +
		 D_j \omega_{1,j}\omega_{2,j+1}
		 +
		 O\left(
		 	\frac{1}{L^{3\alpha}}
		 \right),
	\end{eqnarray*}
	for some sequence $(D_j)_j$.
	Now,
	\begin{eqnarray*}
		&&
		\lim_{L\to\infty}
		\sum^{Lt}_{j=1}
		\esp\left[
			 \cos^2(x_l-\eta_j) \, \sin(2x_{l'}-2\eta_j)
		 	\frac{\omega_{1,j}^2}{L^{2\alpha}}
		\right]
		\\
		&&
		\quad
		=
		\lim_{L\to\infty}
		\frac{1}{2L}
		\sum^{Lt}_{j=1}
		\Big{(}
			1 + \cos(2x_l-2\eta_j)
		\Big{)}
		\,
		 \sin(2x_{l'}-2\eta_j)
		 \\
		 &&
		 \quad
		=
		\lim_{L\to\infty}
		\frac{1}{4L}
		\sum^{Lt}_{j=1}
		\Big{(}
			\sin(2x_l+2x_{l'}-4\eta_j)
			-
			\sin(2x_l-2x_{l'})
		\Big{)}
		\\
		&&
		\quad
		=
		-\frac{t}{4}\sin(2x_l-2x_{l'}).
	\end{eqnarray*}
	Hence, for $l,l'\in\{1,\cdots,m\}$,
	\begin{eqnarray*}
		\lim_{L\to\infty}
		\int^t_0 a^L_{l,m+l'}(s,x)\, ds
		=
		-\frac{\sigma^2 t}{8}\sin(2x_l-2x_{l'}).
	\end{eqnarray*}
	We are left with the terms $l'=2m+1$ and $l'=2m+2$. 
	For $l,l'\in\{1,\cdots,m\}$,
	\begin{eqnarray*}
		\lim_{L\to\infty}
		\int^t_0 a^L_{l,2m+1}(s,x)\, ds
		&=&
		-
		\lim_{L\to\infty}
		\sum^{Lt}_{j=1}
		\esp\left[
			\text{Im}\, 
			\log\Big{(}
				1 + \Gamma_j(\lambda_l,x_l)
			\Big{)}
			\frac{\omega_{1,j}}{L^{\alpha}}
		\right]
		\\
		&=&
		-\sigma_2
		\lim_{L\to\infty}
		\frac{1}{L}
		\sum^{Lt}_{j=1}
		\cos^2(x_l-\eta_j)
		=
		-\frac{\sigma_2 t}{2},
	\end{eqnarray*}
	and, similarly,
	\begin{eqnarray*}
		\lim_{L\to\infty}
		\int^t_0 a^L_{l,2m+2}(s,x)\, ds
		=
		-\frac{\sigma_1 t}{2}.
	\end{eqnarray*}
	Next,
	\begin{eqnarray*}
		\lim_{L\to\infty}
		\int^t_0 a^L_{m+l,2m+1}(s,x)\, ds
		&=&
		-
		\lim_{L\to\infty}
		\sum^{Lt}_{j=1}
		\esp\left[
			\text{Re}\, 
			\log\Big{(}
				1 + \Gamma_j(\lambda_l,x_l)
			\Big{)}
			\frac{\omega_{1,j}}{L^{\alpha}}
		\right]
		\\
		&=&
		-\sigma_2
		\lim_{L\to\infty}
		\frac{1}{L}
		\sum^{Lt}_{j=1}
		\sin(2x_l-2\eta_j)
		=
		0,
	\end{eqnarray*}
	and
	\begin{eqnarray*}
		\lim_{L\to\infty}
		\int^t_0 a^L_{m+l,2m+1}(s,x)\, ds
		=
		0.
	\end{eqnarray*}
	Finally,
	\begin{eqnarray*}
		\lim_{L\to\infty}
		\int^t_0 a^L_{2m+1,2m+2}(s,x)\, ds
		&=&
		\lim_{L\to\infty}
		\int^t_0 a^L_{2m+2,2m+1}(s,x)\, ds
		=
		0,
		\\
		\lim_{L\to\infty}
		\int^t_0 a^L_{2m+1,2m+1}(s,x)\, ds
		&=&
		\lim_{L\to\infty}
		\int^t_0 a^L_{2m+2,2m+2}(s,x)\, ds
		=t.
	\end{eqnarray*}
	We have then verified the first condition on Lemma \ref{thm:general-convergence-finite-dimensional}. The second condition is easily verified by an inspection at \eqref{eq:log-approximation} and the third one follows from the fact that $\alpha \geq \frac12$.

	Let us identify the limiting SDE. We start writing the correlation matrix for $m=2$ (the general case being only notationally more involved), which is $a(t,x)=a(x)$ given by
	\begin{eqnarray*}
		\begin{pmatrix}
			%1
			\frac{3\sigma^2}{8}
			&
			\frac{\sigma^2}{4}\left( 1 + \frac{1}{2}\cos(2x_1-2x_2) \right)
			&
			0
			&
			-\frac{\sigma^2}{4}\sin(2x_1-2x_2)
			&
			-\frac{\sigma_2}{2}
			&
			-\frac{\sigma_1}{2}
			\\
			\\
			%
			%2
			\frac{\sigma^2}{4}\left( 1 + \frac{1}{2}\cos(2x_1-2x_2) \right)
			&
			\frac{3\sigma^2}{8}
			&
			-\frac{\sigma^2}{4}\sin(2x_2-2x_1)
			&
			0
			&
			-\frac{\sigma_2}{2}
			&
			-\frac{\sigma_1}{2}
			\\
			\\
			%
			%3
			0
			&
			-\frac{\sigma^2 }{8} \sin(2x_2-2x_1)
			&
			\frac{\sigma^2}{8}			
			&
			\frac{\sigma^2}{8}	\cos(2x_1-2x_2)
			&
			0
			&
			0
			\\
			\\
			%
			%4
			-\frac{\sigma^2 }{8} \sin(2x_1-2x_2)
			&
			0
			&
			\frac{\sigma^2}{8}	\cos(2x_1-2x_2)
			&
			\frac{\sigma^2}{8}	
			&
			0
			&
			0
			\\
			\\
			%
			%5
			-\frac{\sigma_2 }{2}
			&
			-\frac{\sigma_2 }{2}
			&
			0
			&
			0
			&
			1
			&
			0
			\\
			\\
			%
			%6
			-\frac{\sigma_1}{2}
			&
			-\frac{\sigma_1 }{2}
			&
			0
			&
			0
			&
			0
			&
			1
		\end{pmatrix}.
	\end{eqnarray*}
	This can be written as $g(x)g(x)^T$ with $g(x)$ given by
	\begin{eqnarray*}
		\begin{pmatrix}
			%
			%1
			\frac{\sigma}{2\sqrt{2}}\sin(2x_1)
			&
			\frac{\sigma}{2\sqrt{2}}\cos(2x_1)
			&
			0
			&
			0
			&
			-\frac{\sigma_2}{2}
			&
			-\frac{\sigma_1}{2}
			\\
			\\
			%
			%2
			\frac{\sigma}{2\sqrt{2}}\sin(2x_2)
			&
			\frac{\sigma}{2\sqrt{2}}\cos(2x_2)
			&
			0
			&
			0
			&
			-\frac{\sigma_2}{2}
			&
			-\frac{\sigma_1}{2}
			\\
			\\
			%
			%3
			\frac{\sigma}{2\sqrt{2}}\cos(2x_1)
			&
			-\frac{\sigma}{2\sqrt{2}}\sin(2x_1)
			&
			0
			&
			0
			&
			0
			&
			0
			\\
			\\
			%
			%4
			\frac{\sigma}{2\sqrt{2}}\cos(2x_2)
			&
			-\frac{\sigma}{2\sqrt{2}}\sin(2x_2)
			&
			0
			&
			0
			&
			0
			&
			0
			\\
			\\
			%
			%5
			0
			&
			0
			&
			0
			&
			0
			&
			1
			&
			0
			\\
			\\
			%
			%6
			0
			&
			0
			&
			0
			&
			0
			&
			0
			&
			1
		\end{pmatrix}
	\end{eqnarray*}
	By Lemma \ref{thm:general-convergence-finite-dimensional}, we obtain the joint convergence
	\begin{eqnarray*}
		\vartheta^{\lambda_i}_L \Rightarrow \vartheta^{\lambda_i},
		\quad
		r^{\lambda_l}_L \Rightarrow r^{\lambda_l},
		\quad l=1,\cdots,m,
	\end{eqnarray*}
	in law in the topology of uniform convergence in $[0,1]$
	to the solution of the system of SDEs 
	\begin{eqnarray*}
		d\vartheta^{\lambda_l}
		&=&
		\lambda_l dt
		+
		\frac{\sigma}{\sqrt{2}}\sin(\vartheta^{\lambda_l})\, dB_1
		+
		\frac{\sigma}{\sqrt{2}}\cos(\vartheta^{\lambda_l})\, dB_2
		+
		\sigma dW,
		\\
		dr^{\lambda_l}
		&=&
		\frac{\sigma^2}{8}dt
		+
		\frac{\sigma}{2\sqrt{2}}\cos(\vartheta^{\lambda_l})\, dB_1
		-
		\frac{\sigma}{2\sqrt{2}}\sin(\vartheta^{\lambda_l})\, dB_2,
	\end{eqnarray*}
	where $B_1,B_2,W$ are three independent standard Brownian motion.
	With the complex Brownian motion $B=\frac{B_2+iB_1}{\sqrt{2}}$, the above becomes
	\begin{eqnarray*}
		d\vartheta^{\lambda_l}
		&=&
		\lambda_l dt
		+
		\sigma
		\text{Re}\left\{
			e^{-i\vartheta^{\lambda}}dB
		\right\}
		+
		\sigma dW
		\\
		dr^{\lambda_l}
		&=&
		\frac{\sigma^2}{8}dt
		+
		\frac{\sigma}{2}
		\text{Im}\left\{
			e^{-i\vartheta^{\lambda_l}}dB
		\right\}.
	\end{eqnarray*}
	We also obtain the convergence of $\lambda\mapsto\vartheta^{\lambda}_L(1)$ in the sense of finite-dimensional distributions. By Lemma \ref{thm:monotonicity-phases} below, these are increasing functions so that the tension in the topology of uniform convergence on compact sets follows from Lemma \ref{thm:second-dini}.
\end{proof}
%%%%%%%%%%%%%%%%%%%%%%%%%%%%%%%%%%%%%%%%%%%%%%%%%%%%%%%%%%%%
%%%%%%%%%%%%%%%%%%%%%%%%%%%%%%%%%%%%%%%%%%%%%%%%%%%%%%%%%%%%

%%%%%%%%%%%%%%%%%%%%%%%%%%%%%%%%%%%%%%%%%%%%%%%%%%%%%%%%%%%%
%%%%%%%%%%%%%%%%%%%%%%%%%%%%%%%%%%%%%%%%%%%%%%%%%%%%%%%%%%%%
\begin{remark}
Note that we have the equality in law
\begin{eqnarray*}
	dr^{\lambda}
	&=&
	\frac{\sigma^2}{8}dt
	+
	\frac{\sigma}{2\sqrt{2}}
	dB,
	\quad
	r^{\lambda}(0)=0.
\end{eqnarray*}
Hence, $|\zeta_L|^2$ converges to
\begin{eqnarray*}
	e^{2r^{\lambda}(t)}
	=
	\exp\left\{
		\frac{\sigma^2 t}{4}
		+
		\frac{\sigma}{\sqrt{2}}B_t
	\right\}.
\end{eqnarray*}
From the point of view of the eigenfunctions, this yields the limit
\begin{eqnarray*}
	\left|
		\psi(t)
	\right|^2
	=
	\frac{
		\exp\left\{
		\frac{\sigma^2 t}{4}
		+
		\frac{\sigma}{\sqrt{2}}B_t
		\right\}
	}
	{
		\int^t_0
		\exp\left\{
		\frac{\sigma^2 t}{4}
		+
		\frac{\sigma}{\sqrt{2}}B_t
		\right\}
		ds
	}.
\end{eqnarray*}
\end{remark}
%%%%%%%%%%%%%%%%%%%%%%%%%%%%%%%%%%%%%%%%%%%%%%%%%%%%%%%%%%%%
%%%%%%%%%%%%%%%%%%%%%%%%%%%%%%%%%%%%%%%%%%%%%%%%%%%%%%%%%%%%

The following lemma contains the monotonicity used at the end of the proof above.
%%%%%%%%%%%%%%%%%%%%%%%%%%%%%%%%%%%%%%%%%%%%%%%%%%%%%%%%%%%%%
%%%%%%%%%%%%%%%%%%%%%%%%%%%%%%%%%%%%%%%%%%%%%%%%%%%%%%%%%%%%%
\begin{lemma}\label{thm:monotonicity-phases}
	The random function $\lambda \mapsto \theta^{\lambda}_L(L)$ is increasing for each $L\geq 1$. 
\end{lemma}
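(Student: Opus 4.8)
The plan is to prove monotonicity by a Sturm-type oscillation/interlacing argument at the level of the Prüfer phase, exploiting that increasing $\lambda$ is equivalent to increasing the energy $E_0 + \frac{\lambda}{\rho L}$, under which the family of transfer matrices is monotone in a suitable (symplectic) sense. Concretely, I would first recall from \eqref{eq:spectrum prufer} that the eigenvalues $\lambda \in \Upsilon_L$ are exactly the points where $\vartheta^\lambda_L(L) = 2\theta^\lambda_L(L)$ hits the discrete set $2[(2L-1)k] + \pi + 2\pi\Z$, and that the number of eigenvalues of $\mathcal{D}_L$ in an energy window below a given level is a non-decreasing function of that level (this is just counting eigenvalues of a self-adjoint matrix). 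Translating the eigenvalue count into a statement about how many times $\theta^\lambda_L(L)$ crosses the relevant lattice as $\lambda$ increases forces $\theta^\lambda_L(L)$ to be non-decreasing up to the normalization convention on the phase.

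The cleaner route, which I would actually carry out, is a direct differentiation/monotonicity argument on the recursion. Writing the eigenvalue equation as a first-order recursion $\Phi^\lambda_L(n+1) = T^\lambda_{L,n}\Phi^\lambda_L(n)$ with $T^\lambda_{L,n} = T(\widetilde V_{\omega,1}(n),\widetilde V_{\omega,2}(n))$ and $\widetilde V_{\omega,i}(n)$ depending on $\lambda$ only through the additive term $-\frac{\lambda}{\rho L}$, one checks that $\partial_\lambda$ of each transfer matrix has a definite sign structure. In Prüfer coordinates this produces a differential inequality of the form $\partial_\lambda \theta^\lambda_L(n+1) \geq (\text{positive factor})\cdot \partial_\lambda \theta^\lambda_L(n) + (\text{nonnegative term})$, so that $\partial_\lambda \theta^\lambda_L(n) \geq 0$ propagates inductively from $\theta^\lambda_L(0) = 0$. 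The key algebraic input is the Wronskian/Prüfer identity: differentiating $R^\lambda_L(n)^2 \frac{d}{d\lambda}\theta^\lambda_L(n)$ along the recursion yields a telescoping sum of manifestly nonnegative quantities, of the form $\sum_{j} \frac{\partial}{\partial \lambda}(\text{potential at } j) \cdot |\text{component of } \Psi^\lambda_L(j)|^2 \cdot(\text{positive weight})$, using $\frac{\partial}{\partial\lambda}\widetilde V_{\omega,i}(j) = -\frac{1}{\rho L}$ together with the sign of $p_1,p_2$ at the energy $E_0$. This is the standard mechanism behind monotonicity of rotation numbers for one-dimensional operators and is insensitive to the randomness, holding pathwise in $\omega$.

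The main obstacle I anticipate is bookkeeping rather than conceptual: the transfer matrix $T(x_1,x_2)$ is not symmetric and the change of basis $\mathcal{P}_n$ in \eqref{eq:change-of-basis} mixes the two components with $n$-dependent coefficients and alternating signs $(-1)^{n-1}$, so one has to track carefully that the ``definite sign'' really survives the conjugation into Prüfer variables, i.e. that the relevant quadratic form $\langle \Phi, (\partial_\lambda T^\lambda_{L,n})(T^\lambda_{L,n})^{-1}\Phi\rangle$ (or its Prüfer analogue) has a fixed sign for the energies in question. One must also confirm the phase normalization $|\theta^\lambda_L(j) - \theta^\lambda_L(j-1)| < 2\pi$ is consistent with the claimed monotonicity, i.e. that $\theta^\lambda_L(L)$ as a genuine (non-reduced) real number is what is being asserted to increase; this is exactly the convention fixed just before \eqref{eq:spectrum prufer}, so no extra care beyond citing it is needed. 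I would present the computation for the concrete recursion of Model~I written out at the start of Section~4, reducing everything to the observation that $\lambda$ enters each increment $\Gamma_j(\lambda,\theta)$ through $-\frac{\lambda}{\rho L}$ with a coefficient whose imaginary part, after the $\mathrm{Im}\log(1+\cdot)$ operation, is nonnegative to leading order uniformly in $j$ and $\omega$ for $L$ large, and handling the finitely many small-$L$ cases (if any) separately or absorbing them since the statement is for all $L\geq 1$ and can be checked directly from the transfer-matrix formulation without the large-$L$ expansion.
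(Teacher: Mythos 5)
Your proposal takes a genuinely different route from the paper, but as it stands it has a gap at its crucial step. The paper does not differentiate the recursion at all: it uses \cite[Lemma 8.2]{BMT01} and the change of basis \eqref{eq:change-of-basis} to identify $\tan(\bar{\theta}^{\lambda}_L)$ with an increasing affine image of the boundary Green's function $G_L(\lambda)=G_L(E+\tfrac{\lambda}{\rho L};L,-;L,-)$ (using $\sin k<0$), then invokes the eigenfunction expansion $G_L(\lambda)=\sum_j a_j/(\lambda-\lambda_j)$ with positive weights $a_j$: this Herglotz structure gives monotonicity of $G_L$ on each interval between consecutive eigenvalues together with the correct divergence at the poles, and continuity of $\lambda\mapsto\bar{\theta}^{\lambda}_L$ then upgrades this to \emph{global} monotonicity of the phase. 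No per-step analysis is needed, and the argument is exact for every $L\ge1$, every $\lambda\in\R$ and pathwise in $\omega$.

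Your inductive scheme, by contrast, hinges on the claim that the explicit $\lambda$-derivative of each single Pr\"ufer increment $\mathrm{Im}\log\bigl(1+\Gamma_j(\lambda,\theta)\bigr)$ is nonnegative at fixed incoming phase $\theta$; you flag this as ``bookkeeping'' but never establish it, and your fallback --- positivity ``to leading order, uniformly in $j$ and $\omega$, for $L$ large'' --- cannot prove the lemma. The statement is exact and deterministic, is needed for \emph{every} $L\ge1$, and is used for all $\lambda\in\R$ (it feeds into the tightness argument via Lemma \ref{thm:second-dini} and into Proposition \ref{thm:convergence-pp}); for fixed $L$ the dependence on $\lambda$ is not perturbative, since $\Gamma_j$ contains terms linear and quadratic in $\lambda/(\rho L)$ (including the cross term with $\sqrt{\sigma_1\sigma_2}$), so for $|\lambda|$ of order $L$ or larger the ``leading order'' expansion says nothing, and the problematic regime is large $|\lambda|$ at fixed $L$, not the ``finitely many small-$L$ cases'' you propose to treat separately. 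Indeed, each increment is the argument of a complex quadratic polynomial in $\lambda$, and there is no reason for it to be monotone in $\lambda$ for all $\lambda$; it is entirely possible that the step-by-step monotonicity fails even though the endpoint quantity $\theta^{\lambda}_L(L)$ is monotone --- which is exactly why a global argument through the Green's function is the robust one. Your first-paragraph counting sketch has the same defect: eigenvalue counting only controls the phase at the quantization points of \eqref{eq:spectrum prufer}, not between them, and turning that into monotonicity of $\lambda\mapsto\theta^{\lambda}_L(L)$ is precisely the content that the Herglotz representation supplies. To repair your approach you would have to prove an exact discrete analogue of the Wronskian identity $R^{\lambda}_L(L)^2\,\partial_\lambda\theta^{\lambda}_L(L)=\sum_j(\text{nonnegative})$ for this Dirac model, valid for all $\lambda$ and $L$; absent that computation, the proof is incomplete.
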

%%%%%%%%%%%%%%%%%%%%%%%%%%%%%%%%%%%%%%%%%%%%%%%%%%%%%%%%%%%%%
%%%%%%%%%%%%%%%%%%%%%%%%%%%%%%%%%%%%%%%%%%%%%%%%%%%%%%%%%%%%%
\begin{proof}
	Denote $\theta^{\lambda}_L=\theta^{\lambda}_L(L)$, $\bar{\theta}^{\lambda}_L = \theta^{\lambda}_L - (2L+1)k$, and
	let $G_L(\lambda)=G_L(E+\tfrac{\lambda}{\rho L}; L,-;L,-)$.
	From \cite[Lemma 8.2]{BMT01}, the definition of the Pr\"ufer transform and the matrix $\ppru_n$ from \eqref{eq:change-of-basis}, we have the identities
	\begin{eqnarray*}
		G_L(\lambda)
		&=&
		-\frac{x^-_L}{x^+_L}
		=
		\frac{\sqrt{-p_1}}{\sqrt{p_2}} \frac{\cos(\bar{\theta}_L^{\lambda}+k)}{\cos(\bar{\theta}_L^{\lambda})}
		\\
		&=&
		\frac{\sqrt{-p_1}}{\sqrt{p_2}}
		\left( \cos k - \sin k \tan (\bar{\theta}^{\lambda}_L) \right).
	\end{eqnarray*}
	Now, let $(\lambda_j)_j$ be an enumeration of the eigenvalues of the operator $L\rho(D_L-E)$. Then, there exists positive weights $a_j$ such that
	\begin{eqnarray*}
		G_L(\lambda)
		=
		\sum_j \frac{a_j}{\lambda - \lambda_j}.
	\end{eqnarray*}
	The function $\lambda \mapsto G_L(\lambda)$ is then increasing in each interval $(\lambda_{j-1},\lambda_j)$ with $\displaystyle\lim_{\lambda\to\lambda_j^+}G_L(\lambda)=-\infty$ and $\displaystyle\lim_{\lambda\to\lambda_j^-}G_L(\lambda)=\infty$. As $\sin k <0$, the function $\lambda\mapsto \tan (\bar{\theta}^{\lambda}_L)$ satisfies the same property. As $\lambda \mapsto \bar{\theta}^{\lambda}_n$ is continuous it is therefore globally increasing and the result follows.
\end{proof}
%The case $k=-\frac{3\pi}{4}$ yields a slightly different limit:
%%%%%%%%%%%%%%%%%%%%%%%%%%%%%%%%%%%%%%%%%%%%%%%%%%%%%%%%%%%%%
%%%%%%%%%%%%%%%%%%%%%%%%%%%%%%%%%%%%%%%%%%%%%%%%%%%%%%%%%%%%%
%{\color{blue}
%\begin{theorem}
%	For $k=-\frac{3\pi}{4}$, the conclusions of the previous theorem remain valid with the new limiting SDE being
%	\begin{eqnarray}
%	\color{red}	d\vartheta^{\lambda} =...
%	\end{eqnarray}
%	for $\alpha=\frac12$.
%\end{theorem}
%}
%%%%%%%%%%%%%%%%%%%%%%%%%%%%%%%%%%%%%%%%%%%%%%%%%%%%%%%%%%%%
%%%%%%%%%%%%%%%%%%%%%%%%%%%%%%%%%%%%%%%%%%%%%%%%%%%%%%%%%%%%

%%%%%%%%%%%%%%%%%%%%%%%%%%%%%%%%%%%%%%%%%%%%%%%%%%%%%%%%%%%%
%%%%%%%%%%%%%%%%%%%%%%%%%%%%%%%%%%%%%%%%%%%%%%%%%%%%%%%%%%%%
%%%%%%%%%%%%%%%%%%%%%%%%%%%%%%%%%%%%%%%%%%%%%%%%%%%%%%%%%%%%
%%%%%%%%%%%%%%%%%%%%%%%%%%%%%%%%%%%%%%%%%%%%%%%%%%%%%%%%%%%%
%%%%%%%%%%%%%%%%%%%%%%%%%%%%%%%%%%%%%%%%%%%%%%%%%%%%%%%%%%%%
%%%%%%%%%%%%%%%%%%%%%%%%%%%%%%%%%%%%%%%%%%%%%%%%%%%%%%%%%%%%
%%%%%%%%%%%%%%%%%%%%%%%%%%%%%%%%%%%%%%%%%%%%%%%%%%%%%%%%%%%%
%%%%%%%%%%%%%%%%%%%%%%%%%%%%%%%%%%%%%%%%%%%%%%%%%%%%%%%%%%%%

\section{Scaling limit for Model II}

%%%%%%%%%%%%%%%%%%%%%%%%%%%%%%%%%%%%%%%%%%%%%%%%%%%%%%%%%%%%
%%%%%%%%%%%%%%%%%%%%%%%%%%%%%%%%%%%%%%%%%%%%%%%%%%%%%%%%%%%%
%%%%%%%%%%%%%%%%%%%%%%%%%%%%%%%%%%%%%%%%%%%%%%%%%%%%%%%%%%%%
%%%%%%%%%%%%%%%%%%%%%%%%%%%%%%%%%%%%%%%%%%%%%%%%%%%%%%%%%%%%

We will now sketch the proof of Theorem \ref{thm:scaling-model-II}.
This time, we consider the recursion
\begin{eqnarray*}
	\zeta^{\lambda}_L(j+1)
	&=&
	\Big{(}
	1
	- i\sigma_2
	\cos \bar\theta^{\lambda}_L(j) \, 
	e^{-i\bar\theta^{\lambda}_L(j)}
	  \left(
		 \frac{\omega_{1,j}}{(L-j)^{\alpha}} - \frac{\lambda}{\rho L}
	\right)
	\\
	&&
	\phantom{blabla}
	- i\sigma_1
	\cos( \bar\theta^{\lambda}_L(j)-k)
	\, e^{-i (\bar\theta^{\lambda}_L(j)-k)}
	\left(
		 \frac{\omega_{2,j+1}}{(L-j)^{\alpha}} - \frac{\lambda}{\rho L}
	\right)
	\\
	\nonumber
	&&
	\phantom{blabla}
	-i \sqrt{\sigma_1\sigma_2}
	\cos \bar\theta^{\lambda}_L(j) \,
	e^{-i (\bar \theta^{\lambda}_L(j)-k)}
	\left(
		 \frac{\omega_{1,j}}{(L-j)^{\alpha}} - \frac{\lambda}{\rho L}
	\right)
	\left(
		 \frac{\omega_{2,j+1}}{(L-j)^{\alpha}} - \frac{\lambda}{\rho L}
	\right)
	\Big{)}\,
	\zeta^{\lambda}_L(j)
%	\\
%	&=:&
%	\Big{(}
%		1
%		 - i\sigma_1 A_j(\lambda,\theta^{\lambda}_L(j)) 
%		- i\sigma_2 B_j(\lambda,\theta^{\lambda}_L(j)) 
%		- i \sqrt{\sigma_1\sigma_2} C_j(\lambda,\theta^{\lambda}_L(j))
%	\Big{)} 
%	\, \zeta^{\lambda}_L(j)
	\\
	&=:&
	\Big{(}
		1 +\Gamma_j(\lambda,\theta_L(j))
	\Big{)} 
	\, \zeta^{\lambda}_L(j),
\end{eqnarray*}
with $R^{\lambda}_L(0)=1$ and $\theta^{\lambda}_L(j)=0$, where we recall that we conveniently reversed the envelope on the environment.
%%%%%%%%%%%%%%%%%%%%%%%%%%%%%%%%%%%%%%%%%%%%%%%%%%%%%%%%%%%%
%%%%%%%%%%%%%%%%%%%%%%%%%%%%%%%%%%%%%%%%%%%%%%%%%%%%%%%%%%%%

%%%%%%%%%%%%%%%%%%%%%%%%%%%%%%%%%%%%%%%%%%%%%%%%%%%%%%%%%%%%
%%%%%%%%%%%%%%%%%%%%%%%%%%%%%%%%%%%%%%%%%%%%%%%%%%%%%%%%%%%%
An inspection of the proof for Model I shows that Model II can be treated in the exact same way, except that each application of Lemma \ref{thm:trigonometric-limits-1} has to be replaced by the corresponding use of Lemma \ref{thm:trigonometric-limits-2}. For instance, in the computation of
\begin{eqnarray*}
	\lim_{L\to\infty}
		\int^t_0 a^L_{l,l'}(s,x)\, ds
\end{eqnarray*}
for $l,l'\in\{1,\cdots,m\}$ and $t\in[0,1)$, we are lead to compute the limit
\begin{eqnarray*}
		&&
		\lim_{L\to\infty}
		\sum^{Lt}_{j=1}
		\esp\left[
			\cos^2(x_l-\eta_j) \, \cos^2(x_{l'}-\eta_j)
		 	\frac{\omega_{1,j}^2}{(L-j)^{2\alpha}} 
		\right]
		\\
		&& \quad
		=
		\lim_{L\to\infty}
		\frac{1}{4}
		\sum^{Lt}_{j=1}
		\frac{
			1+\cos(2x_l-2\eta_j)\,\cos(2x_{l'}-2\eta_j)
		}
		{(L-j)^{2\alpha}}
		\\
		&& \quad
		=
		\lim_{L\to\infty}
		\frac{1}{8}
		\sum^{Lt}_{j=1}
		\frac{
			2+\cos(2x_l-2x_{l'})+\cos(2x_{l}+2x_{l'}-4\eta_j)
		}
		{(L-j)^{2\alpha}}
		\\
		&& \quad
		=
		\left(
			\frac{1}{4}
			+ \frac{1}{8} \cos(2x_l-2x_{l'})
		\right)
		\int^t_0 \frac{ds}{1-s},
	\end{eqnarray*}
	for $\alpha=\frac12$, while the limit is $0$ for $\alpha>\frac12$.
	Hence, for $l,l'\in\{1,\cdots,m\}$, $\alpha=1\frac12$ and $t\in[0,1)$,
	\begin{eqnarray*}
		\lim_{L\to\infty}
		\int^t_0 a^L_{l,l'}(s,x)\, ds
		=
		\frac{\sigma^2 }{4}
		\left(
			1 + \frac{1}{2}\cos(2x_l-2x_{l'})
		\right)
		\int^t_0 \frac{ds}{1-s}.
	\end{eqnarray*}
	All the terms $a^L_{l,l'}(s,x)$ can be treated similarly. Note that the treatment of the drift is identical.
	
	Lemma \ref{thm:general-convergence-finite-dimensional} then yields convergence in law to the claimed SDE, uniformly in each interval $[0,T]$ with $T\in[0,1)$.
%%%%%%%%%%%%%%%%%%%%%%%%%%%%%%%%%%%%%%%%%%%%%%%%%%%%%%%%%%%%
%%%%%%%%%%%%%%%%%%%%%%%%%%%%%%%%%%%%%%%%%%%%%%%%%%%%%%%%%%%%

%%%%%%%%%%%%%%%%%%%%%%%%%%%%%%%%%%%%%%%%%%%%%%%%%%%%%%%%%%%%
%%%%%%%%%%%%%%%%%%%%%%%%%%%%%%%%%%%%%%%%%%%%%%%%%%%%%%%%%%%%
%We are then left to show that
We note that, in order to obtain the scaling limit of the spectrum, one needs to obtain the limit of the relative phase given by
\begin{eqnarray}
	\label{eq:convergence-relative-phase}
	\vartheta^{\lambda}_L(L)-\vartheta^{0}_L(L)
	\Rightarrow
	g(\lambda)
	=
	\lim_{t\to1-}
	\left(
		\vartheta^{\lambda}(t)-\vartheta^{0}(0)
	\right).
\end{eqnarray}
Note that, by Lemma \ref{thm:monotonicity-phases}, the random function $\lambda\mapsto \vartheta^{\lambda}_L(L)-\vartheta^{0}_L(L)$ is non-decreasing and well-defined for each $L\geq 1$. The proof of \eqref{eq:convergence-relative-phase} is quite involved. It follows from the arguments outlined in \cite{KVV}, Section 6, which are detailed in  \cite{VV}, Section 6 and will not be reproduced here.

%%%%%%%%%%%%%%%%%%%%%%%%%%%%%%%%%%%%%%%%%%%%%%%%%%%%%%%%%%%%
%%%%%%%%%%%%%%%%%%%%%%%%%%%%%%%%%%%%%%%%%%%%%%%%%%%%%%%%%%%%
%%%%%%%%%%%%%%%%%%%%%%%%%%%%%%%%%%%%%%%%%%%%%%%%%%%%%%%%%%%%
%%%%%%%%%%%%%%%%%%%%%%%%%%%%%%%%%%%%%%%%%%%%%%%%%%%%%%%%%%%%
%%%%%%%%%%%%%%%%%%%%%%%%%%%%%%%%%%%%%%%%%%%%%%%%%%%%%%%%%%%%
%%%%%%%%%%%%%%%%%%%%%%%%%%%%%%%%%%%%%%%%%%%%%%%%%%%%%%%%%%%%
%%%%%%%%%%%%%%%%%%%%%%%%%%%%%%%%%%%%%%%%%%%%%%%%%%%%%%%%%%%%

\section{Convergence of the point processes}

%%%%%%%%%%%%%%%%%%%%%%%%%%%%%%%%%%%%%%%%%%%%%%%%%%%%%%%%%%%%
%%%%%%%%%%%%%%%%%%%%%%%%%%%%%%%%%%%%%%%%%%%%%%%%%%%%%%%%%%%%
%%%%%%%%%%%%%%%%%%%%%%%%%%%%%%%%%%%%%%%%%%%%%%%%%%%%%%%%%%%%
%%%%%%%%%%%%%%%%%%%%%%%%%%%%%%%%%%%%%%%%%%%%%%%%%%%%%%%%%%%%

We illustrate the convergence of the point processes for Model 1 and $\alpha=\frac{1}{2}$. The case $\alpha>\frac12$ follows from similar arguments. Model II yields to additional technicalities. The interested reader can consult \cite{KVV}. 

The following result is in fact quite more general than stated.
%%%%%%%%%%%%%%%%%%%%%%%%%%%%%%%%%%%%%%%%%%%%%%%%%%%%%%%%%%%%
%%%%%%%%%%%%%%%%%%%%%%%%%%%%%%%%%%%%%%%%%%%%%%%%%%%%%%%%%%%%
\begin{proposition}
\label{thm:convergence-pp}
	Consider Model I with $\alpha=\frac{1}{2}$ and $k\neq -\frac{3\pi}{4}$.
	Then
	the point process $$\Upsilon_L - 2[(2L-1)k]-\pi,$$ 
	converges to $ {\bf Sch}_{\sigma^2}$ in the vague convergence topology where $\sigma^2 = \frac{p_1(E)^2+p_2(E)^2}{\sin^2(2k)}$.
\end{proposition}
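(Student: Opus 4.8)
The plan is to derive the proposition from the scaling limit of the Pr\"ufer phase (Theorem~\ref{thm:scaling-model-I}) together with its monotonicity in $\lambda$ (Lemma~\ref{thm:monotonicity-phases}), by realising $\Upsilon_L$ as the preimage of the lattice $2\pi\Z$ under a random increasing function and passing to the limit. Set $c_L := 2[(2L-1)k]+\pi$; this is a deterministic sequence living in the compact set $[\pi,3\pi)$. Since $\Upsilon_L$ is the finite set of eigenvalues of $\mathcal{D}_L$ and, by \eqref{eq:spectrum prufer}, $\lambda\in\Upsilon_L$ iff $\vartheta^\lambda_L(1)-c_L\in 2\pi\Z$ (with $\vartheta^\lambda_L(1)=2\theta^\lambda_L(L)$), the shifted point process reads
\begin{equation*}
	\Upsilon_L - c_L \;=\; \bigl\{\, \mu\in\R :\ F_L(\mu)\in 2\pi\Z \,\bigr\},
	\qquad
	F_L(\mu):=\vartheta^{\mu+c_L}_L(1)-c_L .
\end{equation*}
By Lemma~\ref{thm:monotonicity-phases}, $\mu\mapsto F_L(\mu)$ is increasing, so $\Upsilon_L-c_L$ is a simple locally finite point process generated by the single increasing function $F_L$.

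The core step is to prove that $F_L$ converges in law, in the topology of uniform convergence over compact sets, to $\mu\mapsto\vartheta^\mu(1)$, the solution of the system~\eqref{eq:SDEs-model-I}. Theorem~\ref{thm:scaling-model-I} already gives the convergence of $\lambda\mapsto\vartheta^\lambda_L(1)$ in that topology; the only issue is that $c_L$ need not converge. I would resolve this by compactness: along any subsequence with $c_L\to c$, combining with Theorem~\ref{thm:scaling-model-I} yields $F_L\Rightarrow(\mu\mapsto\vartheta^{\mu+c}(1)-c)$, and it then suffices to show the law of this limit does not depend on $c$. For this, set $\psi^\mu(t):=\vartheta^{\mu+c}(t)-ct$; then $\psi^\mu(0)=0$ and, by \eqref{eq:SDEs-model-I},
\begin{equation*}
	d\psi^\mu = \mu\, dt + \sigma\,\mathrm{Re}\bigl\{ e^{-i\psi^\mu}\, e^{-ict}\, dB \bigr\} + \sigma\, dW .
\end{equation*}
Since rotation of a standard complex Brownian motion by the deterministic phase $e^{-ict}$ produces again a standard complex Brownian motion (L\'evy's characterisation) that remains independent of $W$, the family $(\psi^\mu)_\mu$ solves the same system~\eqref{eq:SDEs-model-I} as $(\vartheta^\mu)_\mu$; by uniqueness in law, $(\vartheta^{\mu+c}(1)-c)_\mu\overset{d}{=}(\vartheta^\mu(1))_\mu$ for every $c$, so the full sequence $F_L$ converges in law to $\mu\mapsto\vartheta^\mu(1)$.

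Next I would transfer this to the level of point processes. The map sending an increasing function $g$ to the counting measure on $g^{-1}(2\pi\Z)$ is continuous, for the vague topology, at every $g$ that is continuous and strictly increasing: for a bounded interval $I$ with $g(\partial I)\cap2\pi\Z=\varnothing$, the number of points in $I$ equals $\#\bigl(2\pi\Z\cap g(I)\bigr)$, which depends continuously on $g|_I$. Almost surely $\mu\mapsto\vartheta^\mu(1)$ is continuous and strictly increasing, with $\p(\vartheta^\mu(1)\in2\pi\Z)=0$ for each fixed $\mu$, so at any fixed interval endpoints the hypotheses hold a.s. Invoking a Skorokhod representation to realise the convergence of the previous paragraph almost surely, the deterministic continuity statement then gives $\Upsilon_L-c_L\Rightarrow\{\mu:\vartheta^\mu(1)\in2\pi\Z\}$ in the vague topology. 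Finally, the time change $t=s/\sigma^2$ turns \eqref{eq:SDEs-model-I} into \eqref{eq:SDE-schrodinger} run up to time $\sigma^2$ with parameter $\mu/\sigma^2$ (Brownian scaling, once more via L\'evy's characterisation for the martingale part), so that $(\vartheta^\mu(1))_\mu\overset{d}{=}(\varphi^{\mu/\sigma^2}(\sigma^2))_\mu$ and hence $\{\mu:\vartheta^\mu(1)\in2\pi\Z\}\overset{d}{=}\textbf{Sch}_{\sigma^2}$ by \eqref{eq:Schodinger point process}, which concludes the argument.

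The step I expect to be most delicate is the treatment of the non-convergent deterministic shift $c_L=2[(2L-1)k]+\pi$ in the second paragraph: the whole scheme works only because the limiting point process is insensitive to the subsequential value of $c$, and this is exactly where the rotational invariance of complex Brownian motion enters. A secondary point requiring care is to make the ``preimage of $2\pi\Z$ under a converging increasing function'' continuity precise enough — in particular checking that the limit is a.s.\ a locally finite simple point process and that its one-dimensional marginals are non-atomic — in order to genuinely upgrade the weak convergence of the Pr\"ufer phase to vague convergence of the point processes.
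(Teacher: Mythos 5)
Your argument is correct and follows essentially the same route as the paper's proof: the scaling limit of Theorem \ref{thm:scaling-model-I}, monotonicity from Lemma \ref{thm:monotonicity-phases}, a Skorohod representation to pass from uniform convergence of the increasing phase functions to vague convergence of their $2\pi\Z$-preimages, a subsequence argument for the non-convergent deterministic shift, and the diffusive time change identifying the limit with $\textbf{Sch}_{\sigma^2}$. The only cosmetic difference is that you absorb the shift $c_L$ into the spectral parameter and re-derive the shift-invariance of the limiting law via rotation invariance of complex Brownian motion, whereas the paper keeps the shift on the lattice side and quotes \cite[Lemma 19]{KVV} for the same invariance.
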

%%%%%%%%%%%%%%%%%%%%%%%%%%%%%%%%%%%%%%%%%%%%%%%%%%%%%%%%%%%%
%%%%%%%%%%%%%%%%%%%%%%%%%%%%%%%%%%%%%%%%%%%%%%%%%%%%%%%%%%%%
\begin{proof}
	Recall that
	\begin{eqnarray*}
		\Upsilon_L
		=
		\left\{
			\lambda:\, \vtl_L(L) -[2\eta_L] -\pi  \in 2 \pi \Z
		\right\},
	\end{eqnarray*}
	with $\eta_L=(2L-1)k$,
	and recall that the increasing processes $(\lambda\mapsto\vtl_L(L))$ converge in law to $(\lambda\mapsto\vtl(1))$ in the topology of uniform convergence in compact set. Using Skorohod's representation theorem, we can assume that this convergence holds almost surely on a possibly different probability space.
	
	Assume that $[\eta_L]\to\eta$ and fix a realization of the processes such that $\vartheta_L(\cdot):=\vartheta^{\cdot}_L(L)$ converges to $\vartheta({\cdot}):=\vartheta^{\cdot}$, uniformly on compact sets.
	Let
	\begin{eqnarray*}
		\Upsilon 
		=
		\left\{
			\lambda:\, \vtl(1)-2\eta -\pi  \in 2 \pi \Z
		\right\}.
	\end{eqnarray*}
	As $\vartheta^{\cdot}_L$ and $\vartheta^{\cdot}$ are continuous and increasing, $\Upsilon_L$ and $\Upsilon$ are discrete countable sets. Furthermore, as
	\begin{eqnarray*}
		\vartheta^{\lambda}(t)
		&\stackrel{\bf{(d)}}{=}&
		\vartheta^{\lambda/\sigma^2}(\sigma^2 t),
	\end{eqnarray*}
	we have that
	\begin{eqnarray*}
		\Upsilon
		&\stackrel{\bf{(d)}}{=}&
		\left\{
			\lambda:\, \varphi^{\lambda/\sigma^2}(\sigma^2) -2\eta -\pi  \in 2 \pi \Z
		\right\}.
	\end{eqnarray*}
	Now, from \cite[Lemma 19]{KVV}, we know that
	\begin{eqnarray*}
		\varphi^{\lambda-\theta}(t) + \theta t
		&\stackrel{\bf{(d)}}{=}&
		\varphi^{\lambda}(t),
	\end{eqnarray*}
	which implies that
	\begin{eqnarray*}
		\textbf{Sch}_{\tau} + \theta
		&\stackrel{\bf{(d)}}{=}&
		\{\lambda:\, \varphi^{\lambda/\tau}(\tau) \in \theta + 2\pi\Z\}.
	\end{eqnarray*}
	Hence,\begin{eqnarray*}
		\Upsilon
		&\stackrel{\bf{(d)}}{=}&
		\textbf{Sch}_{\sigma^2}+2\eta+\pi.
	\end{eqnarray*}
	Label $\Upsilon = (\lambda_j)_j$ in increasing order.
	 Let $-\infty<a<b<\infty$ and let $M$ and $N$ be such that $(\lambda_j)^N_{j=M} = \Upsilon \cap [a,b]$.
	Let $\varepsilon\in(0,1)$ and let $L$ be large enough so that $\displaystyle\sup_{\lambda \in [a-1,b+1]} |\vartheta(\lambda)-\vartheta_L(\lambda)|<\varepsilon$. 
	Then, there exists a function $\delta$ with $\displaystyle\lim_{\varepsilon\to0} \delta(\varepsilon)=0$ such that, for each $j=M,\cdots,N$, there exists a unique $\lambda^L_j\in (\lambda_j-\delta(\varepsilon),\lambda_j+\delta(\varepsilon)) \cap \Upsilon_L$. This also shows that $\displaystyle\lim_{L\to\infty}\lambda^L_j = \lambda_j$ for each $j=1,\cdots,M$. Furthermore, for $\varepsilon$ small enough, there are no other points in $\Upsilon_L\cap[a-\delta(\varepsilon),b+\delta(\varepsilon)]$.
	
	Let $\mu$ and $\mu_L$ be the counting measures of $\Upsilon$ and $\Upsilon_L$ respectively. The above shows that for any non-negative, continuous and compactly supported function $f:\R\to\R$,
	\begin{eqnarray}
		\lim_{L\to\infty} \int f \, d\mu_L = \int f\, d\mu,
	\end{eqnarray}
	almost surely. By the dominated convergence theorem, it then holds that
	\begin{eqnarray}
		\lim_{L\to\infty}\esp\left[ e^{-\int f \, d\mu_L}\right]
		=
		\esp\left[ e^{-\int f \, d\mu}\right].
	\end{eqnarray}
	In other words, 
	\begin{eqnarray}
		\Upsilon_L \Rightarrow \textbf{Sch}_{\sigma^2} + 2\eta + \pi,
	\end{eqnarray}
	in the topology of vague convergence. 
	So far, as we assumed that $\eta_L\to\eta$, this limit is only subsequential. But, the result follows by noticing that $2[\eta_L]-[2\eta_L]\in\{0,2\pi\}$.	
\end{proof}
%%%%%%%%%%%%%%%%%%%%%%%%%%%%%%%%%%%%%%%%%%%%%%%%%%%%%%%%%%%%
%%%%%%%%%%%%%%%%%%%%%%%%%%%%%%%%%%%%%%%%%%%%%%%%%%%%%%%%%%%%

%%%%%%%%%%%%%%%%%%%%%%%%%%%%%%%%%%%%%%%%%%%%%%%%%%%%%%%%%%%%
%%%%%%%%%%%%%%%%%%%%%%%%%%%%%%%%%%%%%%%%%%%%%%%%%%%%%%%%%%%%
%%%%%%%%%%%%%%%%%%%%%%%%%%%%%%%%%%%%%%%%%%%%%%%%%%%%%%%%%%%%
%%%%%%%%%%%%%%%%%%%%%%%%%%%%%%%%%%%%%%%%%%%%%%%%%%%%%%%%%%%%
%%%%%%%%%%%%%%%%%%%%%%%%%%%%%%%%%%%%%%%%%%%%%%%%%%%%%%%%%%%%
%%%%%%%%%%%%%%%%%%%%%%%%%%%%%%%%%%%%%%%%%%%%%%%%%%%%%%%%%%%%
%%%%%%%%%%%%%%%%%%%%%%%%%%%%%%%%%%%%%%%%%%%%%%%%%%%%%%%%%%%%
%%%%%%%%%%%%%%%%%%%%%%%%%%%%%%%%%%%%%%%%%%%%%%%%%%%%%%%%%%%%

\appendix

%%%%%%%%%%%%%%%%%%%%%%%%%%%%%%%%%%%%%%%%%%%%%%%%%%%%%%%%%%%%
%%%%%%%%%%%%%%%%%%%%%%%%%%%%%%%%%%%%%%%%%%%%%%%%%%%%%%%%%%%%
%%%%%%%%%%%%%%%%%%%%%%%%%%%%%%%%%%%%%%%%%%%%%%%%%%%%%%%%%%%%
%%%%%%%%%%%%%%%%%%%%%%%%%%%%%%%%%%%%%%%%%%%%%%%%%%%%%%%%%%%%
%%%%%%%%%%%%%%%%%%%%%%%%%%%%%%%%%%%%%%%%%%%%%%%%%%%%%%%%%%%%
%%%%%%%%%%%%%%%%%%%%%%%%%%%%%%%%%%%%%%%%%%%%%%%%%%%%%%%%%%%%
%%%%%%%%%%%%%%%%%%%%%%%%%%%%%%%%%%%%%%%%%%%%%%%%%%%%%%%%%%%%
%%%%%%%%%%%%%%%%%%%%%%%%%%%%%%%%%%%%%%%%%%%%%%%%%%%%%%%%%%%%

\section{Lemmas about convergence of stochastic processes}

%%%%%%%%%%%%%%%%%%%%%%%%%%%%%%%%%%%%%%%%%%%%%%%%%%%%%%%%%%%%
%%%%%%%%%%%%%%%%%%%%%%%%%%%%%%%%%%%%%%%%%%%%%%%%%%%%%%%%%%%%

The following corresponds to \cite[Proposition 27]{KVV}. We reproduce it here for the convenience of the reader.
%%%%%%%%%%%%%%%%%%%%%%%%%%%%%%%%%%%%%%%%%%%%%%%%%%%%%%%%%%%%
%%%%%%%%%%%%%%%%%%%%%%%%%%%%%%%%%%%%%%%%%%%%%%%%%%%%%%%%%%%%
\begin{lemma}\label{thm:general-convergence-finite-dimensional}
	Consider a family of $\R^d$-valued Markov chains
	\begin{eqnarray*}
		\{X^n_j:\, j=1,\cdots, nT\},\quad n\geq 1.
	\end{eqnarray*}
	Let $Y^n_j(x)$ be distributed as the increment $X^n_{j+1}-X^n_j$ conditioned on $X^n_j=x$ and let
	\begin{eqnarray*}
		b^n(t,x) = n \esp\left[ Y^n_{nt}(x)\right],\quad a^n(t,x)_{ij} = n \esp\left[ Y^n_{nt}(x)_i Y^n_{nt}(x)_j\right].
	\end{eqnarray*}
	Assume that:
	\begin{enumerate}
		\item there exists two smooth functions $a:[0,T]\times \R^d \to M_d^{\text{sym}}(\R^d)$ and  $
	b:[0,T]\times \R^d \to M_d(\R^d)$ such that, for every $R>0$,
	\begin{eqnarray*}
		&&
		\sup_{\substack{0\leq t \leq T \\ |x|\leq R}}
		\left|
			\int^t_0 \left(
				a^n(s,x)-a(s,x)
			\right)\, ds
		\right|
		\\
		&&
		+
		\sup_{\substack{0\leq t \leq T \\ |x|\leq R}}
		\left|
			\int^t_0 \left(
				b^n(s,x)-b(s,x)
			\right)\, ds
		\right|
		\to
		0,
	\end{eqnarray*}
	as $n\to\infty$,
	
	\vspace{1ex}
	
	\item for every $R>0$, there exists a constant $C=C(R)$ such that
	\begin{eqnarray*}
		\| a^n(t,x)-a^n(t,y)\|
		+
		\| b^n(t,x)-b^n(t,y)\|
		\leq C \| x-y\|,
	\end{eqnarray*}
	for all $n\geq 1$, $t\in[0,T]$ and $\|x\|, \, \|y\| \leq R$ and the same inequality holds for $a$ and $b$,
	
	\vspace{1ex}
	
	\item for every $R>0$, there exists a constant $K=K(R)$ such that
	\begin{eqnarray*}
		\sup_{\substack{0\leq j \leq n \\ |x|\leq R}}
		\esp\left[
			\left\|
				Y^n_j(x)
			\right\|^3
		\right]
		\leq
		K n^{-3/2},
	\end{eqnarray*}
	
	\vspace{1ex}
	
	\item $X^n_0 \to X_0$ in distribution, with $\esp[X_0^2]<\infty$.
	\end{enumerate}
	
	\vspace{1ex}
	
	Then, $\{X^n_{tn}:\, t\in[0,T]\}$ converges weakly in $\mathcal{D}[0,T]$ to the unique solution of
	\begin{eqnarray*}
		dX_t = b(t,X_t)dt + g(t,X_t) dB_t,
	\end{eqnarray*}
	where $B$ is a $d$-dimensional Brownian motion and $g$ is a matrix-valued function such that $a = g g^T$.
\end{lemma}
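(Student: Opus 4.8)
The plan is to follow the classical martingale-problem route to diffusion approximation (Stroock--Varadhan, Ethier--Kurtz), which is exactly the scheme of \cite[Proposition 27]{KVV}; I sketch the main steps since the statement is reproduced here for convenience. First I would pass from the discrete chain to a continuous-time process via the piecewise-constant interpolation $\widetilde X^n_t := X^n_{\lfloor nt \rfloor}$, viewed as an element of $\mathcal{D}([0,T],\R^d)$, with $\widetilde X^n_0 = X^n_0$. Writing $X^n_{j+1}-X^n_j$ as the sum of its $\mathcal{F}_j$-conditional mean $n^{-1} b^n(j/n, X^n_j)$ and a martingale increment of conditional second moment $\asymp n^{-1} a^n(j/n,X^n_j)$, the third-moment bound (3) controls the remainder in all the Taylor expansions below.

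Second, tightness. After a localization step --- stopping $\widetilde X^n$ at its first exit from a large ball $B_R$, which by (1) and (3) occurs with probability vanishing uniformly in $n$ as $R\to\infty$ --- hypotheses (1)--(3) give a uniform estimate of the form $\esp[\,\|\widetilde X^n_t-\widetilde X^n_s\|^3\,]\le C(R)\,(|t-s|^{3/2}+n^{-3/2})$ on the stopped process, while (4) gives tightness of the initial data. By the Aldous/Kolmogorov criterion the family $(\widetilde X^n)_n$ is tight in $\mathcal{D}([0,T],\R^d)$ and every subsequential limit has continuous sample paths.

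Third, identification of the limit. For $f\in C_b^2(\R^d)$ set $\mathcal{L}_t f(x) := b(t,x)\cdot\nabla f(x) + \tfrac12\sum_{i,j} a_{ij}(t,x)\,\partial_{ij}f(x)$. I would check that $M^n_t := f(\widetilde X^n_t) - f(X^n_0) - \sum_{j<\lfloor nt\rfloor}\esp[\,f(X^n_{j+1})-f(X^n_j)\mid\mathcal{F}_j\,]$ is a martingale, Taylor-expand each conditional increment to second order (the remainder being $O(n^{-1})$ locally, thanks to (3)), and recognize the compensator as $\int_0^t \mathcal{L}^n_s f(\widetilde X^n_s)\,ds$ up to a negligible error coming from the floor function and the localization, where $\mathcal{L}^n$ is the generator built from $a^n,b^n$. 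Hypothesis (1), which provides uniform convergence of the \emph{time-integrated} coefficients on compacts, together with the equicontinuity in (2), then lets me replace $\mathcal{L}^n$ by $\mathcal{L}$ in the limit; hence along the subsequence $f(X_t)-f(X_0)-\int_0^t\mathcal{L}_s f(X_s)\,ds$ is a martingale for every $f\in C_b^2$, i.e. $X$ solves the martingale problem for $(\mathcal{L}_t)$ with initial law $\mathrm{Law}(X_0)$.

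Finally, uniqueness and conclusion. Since $a$ is smooth, symmetric and nonnegative-definite and $b$ is locally Lipschitz, one can select a locally Lipschitz square root $g$ with $a=gg^T$ (the principal square root of a smooth nonnegative symmetric matrix field is locally Lipschitz, or one mollifies), so the SDE $dX_t = b(t,X_t)\,dt + g(t,X_t)\,dB_t$ with $B$ a $d$-dimensional Brownian motion has a unique strong solution; this forces uniqueness of the martingale problem, so the limit point is unique and the full sequence converges, $\widetilde X^n\Rightarrow X$ in $\mathcal{D}([0,T],\R^d)$. The main obstacle is not any single estimate but the bookkeeping that makes the compensator/Taylor argument rigorous with only a third moment available: one must localize to use (1)--(3), track the errors from $\lfloor nt\rfloor$, and then let the localization radius go to infinity, and one must justify the Lipschitz square root $g$. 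All of this is carried out in detail in \cite[Proposition 27]{KVV}; in the present application, only hypothesis (1) genuinely requires work, and that is precisely the content of the trigonometric-sum computations performed in the body of the paper (via Lemmas \ref{thm:trigonometric-limits-1} and \ref{thm:trigonometric-limits-2}).
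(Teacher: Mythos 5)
Your sketch is essentially correct and follows the same route as the source the paper relies on: the paper itself gives no proof of this lemma, simply reproducing it as \cite[Proposition 27]{KVV}, whose proof is exactly the martingale-problem/diffusion-approximation scheme (interpolation, drift--martingale decomposition, tightness from the third-moment bound, identification via the generator, uniqueness from a Lipschitz square root of $a$) that you outline. So there is nothing to contrast with beyond noting that the genuinely delicate bookkeeping you flag (using only the time-integrated convergence in hypothesis (1) together with the Lipschitz bound (2), and the localization) is precisely what is carried out in \cite{KVV} and not in this paper.
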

%%%%%%%%%%%%%%%%%%%%%%%%%%%%%%%%%%%%%%%%%%%%%%%%%%%%%%%%%%%%
%%%%%%%%%%%%%%%%%%%%%%%%%%%%%%%%%%%%%%%%%%%%%%%%%%%%%%%%%%%%
The following is a stochastic version of the `second' Dini's theorem:
%%%%%%%%%%%%%%%%%%%%%%%%%%%%%%%%%%%%%%%%%%%%%%%%%%%%%%%%%%%%
%%%%%%%%%%%%%%%%%%%%%%%%%%%%%%%%%%%%%%%%%%%%%%%%%%%%%%%%%%%%
\begin{lemma}\label{thm:second-dini}
	Let $(X_n)_n$ be a sequence of monotone process converging to a process $X$ in the sense of finite dimensional distributions. Then, $(X_n)_n$ is tight in the topology of uniform convergence over compact sets. 
\end{lemma}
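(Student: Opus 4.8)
The plan is to deduce the statement from the classical modulus‑of‑continuity criterion for tightness in $C[a,b]$, exploiting monotonicity to reduce the oscillation of each $X_n$ to finitely many increments, which are then controlled by the assumed convergence of finite–dimensional distributions together with the continuity of the limiting paths. Since the topology of uniform convergence over compact sets on $C(\mathbb{R})$ is Polish, and a family of laws there is tight as soon as its images on each $C[-N,N]$ are tight (glue compacts $K_N\subseteq C[-N,N]$ with $\mathbb{P}(X_n|_{[-N,N]}\in K_N)\ge 1-\eta 2^{-N}$ into a single compact of $C(\mathbb{R})$), it is enough to prove tightness of $(X_n)_n$ in $C[a,b]$ for every fixed compact interval $[a,b]$. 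We may assume the $X_n$ are non‑decreasing (otherwise split into the non‑decreasing and non‑increasing sub‑sequences and negate the latter) and, as in all our applications, that $X_n$ and the limit $X$ have a.s.\ continuous paths; note $X$ is then automatically non‑decreasing, since $\mathbb{P}(X(\lambda)\le X(\lambda'))\ge\limsup_n\mathbb{P}(X_n(\lambda)\le X_n(\lambda'))=1$ for $\lambda<\lambda'$.

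By the standard tightness criterion in $C[a,b]$, it then suffices to check (i) tightness of the real random variables $X_n(a)$, which is immediate because $X_n(a)$ converges in law to $X(a)$; and (ii) that for every $\epsilon>0$,
\begin{equation*}
	\lim_{\delta\to 0}\ \limsup_{n\to\infty}\ \mathbb{P}\Big( \sup\nolimits_{\,s,t\in[a,b],\,|s-t|\le\delta} |X_n(s)-X_n(t)| \ge \epsilon \Big) = 0 .
\end{equation*}
To establish (ii) I would fix $\epsilon,\eta>0$ and a partition $a=t_0<t_1<\dots<t_N=b$, set $\delta_0=\min_i(t_i-t_{i-1})>0$ and $M_n=\max_{1\le i\le N}\big(X_n(t_i)-X_n(t_{i-1})\big)\ge 0$. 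If $|s-t|\le\delta_0$ then $[s\wedge t,\,s\vee t]$ lies in the union of two consecutive partition intervals, so by monotonicity $|X_n(s)-X_n(t)|\le 2M_n$, whence the supremum above is bounded by $2M_n$. Since $(y_0,\dots,y_N)\mapsto\max_i(y_i-y_{i-1})$ is continuous, convergence of finite–dimensional distributions gives $M_n\Rightarrow M:=\max_i\big(X(t_i)-X(t_{i-1})\big)$, and the portmanteau theorem applied to the closed set $[\epsilon/2,\infty)$ yields
\begin{equation*}
	\limsup_{n\to\infty}\ \mathbb{P}\Big( \sup\nolimits_{\,|s-t|\le\delta_0} |X_n(s)-X_n(t)|\ge\epsilon\Big)\ \le\ \limsup_{n\to\infty}\mathbb{P}\big(M_n\ge\tfrac{\epsilon}{2}\big)\ \le\ \mathbb{P}\big(M\ge\tfrac{\epsilon}{2}\big).
\end{equation*}
Because $X$ is a.s.\ continuous on the compact interval $[a,b]$ it is a.s.\ uniformly continuous, so $M$ is bounded by the modulus of continuity of $X$ at the maximal mesh of the partition and therefore tends to $0$ a.s., hence in probability, as the mesh shrinks. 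Thus a fine enough partition makes $\mathbb{P}(M\ge\epsilon/2)\le\eta$; since $\delta\mapsto\sup_{|s-t|\le\delta}|X_n(s)-X_n(t)|$ is non‑decreasing, the same bound holds with $\delta_0$ replaced by any smaller $\delta$, and letting $\eta\to0$ gives (ii). Combined with (i) and the reduction above, $(X_n)_n$ is tight in the required topology.

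The one genuinely non‑routine ingredient — and the reason this is a true ``second Dini'' statement rather than a soft estimate — is the continuity of the limit $X$: without it (ii) can fail, since a sequence of continuous non‑decreasing functions may converge in finite–dimensional distributions to a step function while keeping oscillation of order one on arbitrarily small scales. In the situations where the lemma is invoked, $X$ is the solution of one of the SDEs \eqref{eq:SDE-schrodinger}, \eqref{eq:SDE-sine-2}, whose dependence on $\lambda$ is a.s.\ continuous (indeed analytic) by \cite{KVV}, so this hypothesis is met and no further work is needed.
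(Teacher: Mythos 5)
Your proof is correct and follows essentially the same route as the paper's: the classical tightness criterion in $C[a,b]$ (tightness at the left endpoint plus a modulus-of-continuity estimate), with monotonicity used to dominate the oscillation of $X_n$ by finitely many increments at partition points, which are then controlled through finite-dimensional convergence, Portmanteau, and the (implicitly assumed) a.s.\ continuity of the limit $X$. The only differences are cosmetic bookkeeping — you bound the oscillation by twice the maximal single-interval increment while the paper uses increments over pairs of consecutive intervals — plus your explicit flagging of the continuity hypothesis on $X$, which the paper's proof also uses tacitly when choosing $\delta$ with $\p\left[\omega(X,2\delta)>\epsilon\right]<\eta$.
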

%%%%%%%%%%%%%%%%%%%%%%%%%%%%%%%%%%%%%%%%%%%%%%%%%%%%%%%%%%%%
%%%%%%%%%%%%%%%%%%%%%%%%%%%%%%%%%%%%%%%%%%%%%%%%%%%%%%%%%%%%
\begin{proof}
	Assume each $X_n$ is increasing. It is enough to show tightness in an interval $[0,T]$. Furthermore, it is enough to show that:
	\begin{enumerate}
		\item For all $\varepsilon>0$, there exists $A>0$ such that
			\begin{eqnarray*}
				\limsup_n\p\left[|X_n(0)| \geq A\right]<\epsilon.
			\end{eqnarray*}
			
		\item For all $\eta,\varepsilon>0$, there is a $\delta>0$ such that
			\begin{eqnarray*}
					\limsup_n \p\left[ \omega(X_n,\delta)>\epsilon\right] < \eta,
			\end{eqnarray*}
			where $\omega(f,\delta)=\sup\{|f(t)-f(s)|:\, 0 \leq s,t \leq T,\, |s-t|<\delta \}$ is the modulus of continuity.
	\end{enumerate}
	For the first point, choose $A$ such that $\p[|X(0)|\geq A]<\varepsilon$. Then, by Portemanteau's theorem,
	\begin{eqnarray*}
		\limsup_n \p\left[|X_n(0)| \geq A\right] \leq \p\left[|X(0)| \geq A\right]<\epsilon.
	\end{eqnarray*}
	Next, given $\eta,\varepsilon>0$, choose $\delta>0$ such that $\p\left[ X(\theta,2\delta)>\epsilon\right]<\eta$. Let $N$ be such that $N\delta \leq T < (N+1)\delta$ and define $t_j=j\delta$ for $j=0,\cdots, N$, $t_{-1}=0$ and $t_{N+1}=T$. Then, by monotonicity of $\theta$ and each $X_n$,
	\begin{eqnarray*}
		\limsup_n \p\left[ \omega(X_n,\delta)>\epsilon\right]
		&\leq&
		\limsup_n \p\left[ \bigcup^{N+1}_{j=1} \left\{ |X_n(t_j)-X_n(t_{j-2})| \geq \varepsilon\right\}\right]\\
		&\leq&
		\p\left[ \bigcup^{N+1}_{j=1} \left\{ |X(t_j)-X(t_{j-2})| \geq \varepsilon\right\}\right]\\
		&\leq&
		\p\left[ \omega(X,2\delta)>\epsilon\right]<\eta.
	\end{eqnarray*}
\end{proof}

%%%%%%%%%%%%%%%%%%%%%%%%%%%%%%%%%%%%%%%%%%%%%%%%%%%%%%%%%%%%
%%%%%%%%%%%%%%%%%%%%%%%%%%%%%%%%%%%%%%%%%%%%%%%%%%%%%%%%

%%%%%%%%%%%%%%%%%%%%%%%%%%%%%%%%%%%%%%%%%%%%%%%%%%%%%%%%%%%%
%%%%%%%%%%%%%%%%%%%%%%%%%%%%%%%%%%%%%%%%%%%%%%%%%%%%%%%%%%%%
%%%%%%%%%%%%%%%%%%%%%%%%%%%%%%%%%%%%%%%%%%%%%%%%%%%%%%%%%%%%
%%%%%%%%%%%%%%%%%%%%%%%%%%%%%%%%%%%%%%%%%%%%%%%%%%%%%%%%%%%%
%%%%%%%%%%%%%%%%%%%%%%%%%%%%%%%%%%%%%%%%%%%%%%%%%%%%%%%%%%%%
%%%%%%%%%%%%%%%%%%%%%%%%%%%%%%%%%%%%%%%%%%%%%%%%%%%%%%%%%%%%
%%%%%%%%%%%%%%%%%%%%%%%%%%%%%%%%%%%%%%%%%%%%%%%%%%%%%%%%%%%%
%%%%%%%%%%%%%%%%%%%%%%%%%%%%%%%%%%%%%%%%%%%%%%%%%%%%%%%%%%%%

\section{Limits of sums of trigonometric functions}

%%%%%%%%%%%%%%%%%%%%%%%%%%%%%%%%%%%%%%%%%%%%%%%%%%%%%%%%%%%%
%%%%%%%%%%%%%%%%%%%%%%%%%%%%%%%%%%%%%%%%%%%%%%%%%%%%%%%%%%%%
%%%%%%%%%%%%%%%%%%%%%%%%%%%%%%%%%%%%%%%%%%%%%%%%%%%%%%%%%%%%
%%%%%%%%%%%%%%%%%%%%%%%%%%%%%%%%%%%%%%%%%%%%%%%%%%%%%%%%%%%%
The following limits were used to obtain the scaling limit of the Pr\"ufer transform for Model I.
%%%%%%%%%%%%%%%%%%%%%%%%%%%%%%%%%%%%%%%%%%%%%%%%%%%%%%%%%%%%
%%%%%%%%%%%%%%%%%%%%%%%%%%%%%%%%%%%%%%%%%%%%%%%%%%%%%%%%%%%%
\begin{lemma}\label{thm:trigonometric-limits-1}
	Let $\eta_j=(2j-1)k$ for $k\in(-\pi,-\frac{\pi}{2})$ with $k\neq -\frac{3\pi}{2}$. Then, for all $x\in\R$, $t>0$ and $c\in\{1,2,4\}$,
	\begin{eqnarray*}
		\lim_{L\to\infty} 
		\frac{1}{L}
		\sum_{j=1}^{Lt}
		\sin(x-\eta_j)
		&=&
		\lim_{L\to\infty} 
		\frac{1}{L}
		\sum_{j=1}^{Lt}
		\cos(x-\eta_j)
		=
		0,
		\\
		\lim_{L\to\infty} 
		\frac{1}{L}
		\sum_{j=1}^{Lt}
		\sin^2(x-\eta_j)
		&=&
		\lim_{L\to\infty} 
		\frac{1}{L}
		\sum_{j=1}^{Lt}
		\cos^2(x-\eta_j)
		=
		\frac{t}{2},
		\\
		\lim_{L\to\infty} 
		\frac{1}{L}
		\sum_{j=1}^{Lt}
		\sin^4(x-\eta_j)
		&=&
		\lim_{L\to\infty} 
		\frac{1}{L}
		\sum_{j=1}^{Lt}
		\cos^4(x-\eta_j)
		=
		\frac{3t}{8},
		\\
		\lim_{L\to\infty} 
		\frac{1}{L}
		\sum_{j=1}^{Lt}
		\sin^2(x-\eta_j)
		\cos^2(x-\eta_j)
		&=&
		\frac{t}{8},
		\\
		\lim_{L\to\infty} 
		\frac{1}{L}
		\sum_{j=1}^{Lt}
		\cos^3(x-\eta_j)
		\sin(x-\eta_j)
		&=&
		0.
	\end{eqnarray*}
\end{lemma}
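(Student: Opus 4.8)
The plan is to reduce everything to a single equidistribution statement: if $\theta \in (0,2\pi)$ is not a rational multiple of $\pi$, then for any trigonometric polynomial $P$ of degree at most $4$,
\begin{equation*}
	\lim_{L\to\infty} \frac{1}{L} \sum_{j=1}^{\lfloor Lt \rfloor} P(\eta_j) = \frac{\lfloor Lt\rfloor}{L}\cdot\frac{1}{2\pi}\int_0^{2\pi} P(\phi)\, d\phi \to t\cdot \widehat{P}(0),
\end{equation*}
where $\eta_j = (2j-1)k = (2k) j - k$ is an arithmetic progression with common difference $2k$. Since the index shift by $j\mapsto j+1$ changes $\eta_j$ by $2k$, the relevant ``irrationality'' condition is that $2k/\pi \notin \mathbb{Q}$; when $2k/\pi \in \mathbb{Q}$ the sequence $(\eta_j \bmod 2\pi)$ is periodic and one averages over the finite orbit instead. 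First I would record that, because $k \in (-\pi,-\tfrac{\pi}{2})$, the only value of $k$ in this range with $2k/\pi \in \mathbb{Q}$ and with $2k$ a multiple of $\pi$ is $k = -\tfrac{3\pi}{4}$ (giving $2k = -\tfrac{3\pi}{2}$), wait — more carefully, $2k/\pi \in\{-1,-3/2\}\cdot$ rationals; the genuinely problematic resonances for degree-$\le 4$ polynomials are those where $c\cdot 2k \in 2\pi\mathbb{Z}$ for some $c\in\{1,2,3,4\}$, and within $(-\pi,-\tfrac\pi2)$ this happens exactly at $k=-\tfrac{3\pi}{4}$ (where $4k = -3\pi$... in fact $2\cdot 2k = -3\pi$, so $\cos(2\eta_j)$ and $\sin(2\eta_j)$ are not averaged). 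This is precisely the excluded value, consistent with the hypothesis $k \neq -\tfrac{3\pi}{4}$ (the statement's ``$-\tfrac{3\pi}{2}$'' being a typo for $-\tfrac{3\pi}{4}$).

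The concrete execution goes term by term after expanding each summand into its Fourier modes $e^{i n \eta_j}$, $|n|\le 4$. For the first line, $\sin(x-\eta_j)$ and $\cos(x-\eta_j)$ are pure frequency-$1$ modes; I would invoke the geometric-sum bound
\begin{equation*}
	\Big| \sum_{j=1}^{N} e^{2ink j} \Big| \le \frac{2}{|1 - e^{2ink}|} = \frac{1}{|\sin(nk)|},
\end{equation*}
valid whenever $nk \notin \pi\mathbb{Z}$, which is automatic for $n=1$ since $k\in(-\pi,-\tfrac\pi2)$. Dividing by $L$ kills these. For the second line I use $\cos^2 = \tfrac12 + \tfrac12\cos(2\cdot)$, $\sin^2 = \tfrac12 - \tfrac12\cos(2\cdot)$: the constant $\tfrac12$ contributes $t/2$, the frequency-$2$ part is a geometric sum (bounded since $2k \notin \pi\mathbb{Z}$, as $k\neq -\tfrac\pi2$, wait $k=-\tfrac\pi2$ is the open endpoint, excluded) and vanishes after dividing by $L$. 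The fourth-power lines expand as $\cos^4 = \tfrac38 + \tfrac12\cos(2\cdot) + \tfrac18\cos(4\cdot)$, $\sin^4 = \tfrac38 - \tfrac12\cos(2\cdot) + \tfrac18\cos(4\cdot)$, giving the constant $3t/8$ once the frequency-$2$ and frequency-$4$ geometric sums are shown negligible (here one needs $2k,4k\notin\pi\mathbb Z$, i.e. $k\notin\{-\tfrac\pi2,-\tfrac{3\pi}{4}\}$); similarly $\sin^2\cos^2 = \tfrac14\sin^2(2\cdot) = \tfrac18 - \tfrac18\cos(4\cdot)$ gives $t/8$, and $\cos^3\sin = \tfrac14\sin(2\cdot) + \tfrac18\sin(4\cdot)$ is a pure combination of nonzero-frequency modes, hence has limit $0$. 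Note the answer does not depend on $x$ because every $x$-dependent term sits at a nonzero frequency in $j$.

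The only genuine subtlety — the ``hard part'' — is bookkeeping the resonance condition: one must check for each summand which frequencies $n$ appear (here $n\in\{1,2,4\}$, hence the hypothesis $c\in\{1,2,4\}$ and the exclusion of $k=-\tfrac{3\pi}{4}$) and confirm $nk\notin\pi\mathbb{Z}$ for those $n$ given $k\in(-\pi,-\tfrac\pi2)\setminus\{-\tfrac{3\pi}{4}\}$. Since $nk/\pi = n\cdot(k/\pi)$ with $k/\pi\in(-1,-\tfrac12)$, we have $nk/\pi \in (-n, -n/2)$, and this interval contains an integer for $n\ge 2$; specifically $nk/\pi\in\mathbb Z$ forces $k/\pi = -p/n$ for some integer $p$ with $n/2 < p < n$, and among $n\in\{1,2,4\}$ the relevant bad case within the allowed $k$-range is $n=4$, $p=3$, i.e. $k=-\tfrac{3\pi}{4}$ — exactly excluded. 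Thus for all admissible $k$ every nonzero-frequency geometric sum is $O(1)$ uniformly in $N$, divides out, and the stated limits follow; the floor function $\lfloor Lt\rfloor$ is harmless since $\lfloor Lt\rfloor/L \to t$. This completes the reduction of the lemma to elementary geometric-series estimates.
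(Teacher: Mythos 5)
Your proof is correct and takes essentially the same route as the paper's: expand each summand into Fourier modes of frequencies $n\in\{1,2,4\}$ in $\eta_j$, observe that the nonzero-frequency sums are geometric series uniformly bounded in $N$ precisely because $nk\notin\pi\Z$ for all admissible $k$, so only the constant terms survive after dividing by $L$; your reading of the hypothesis as excluding $k=-\tfrac{3\pi}{4}$ (the ``$-\tfrac{3\pi}{2}$'' in the statement being a typo) also matches the paper's usage elsewhere. The only slip is the passing remark that at $k=-\tfrac{3\pi}{4}$ the frequency-$2$ terms fail to average --- there the ratio $e^{4ik}=-1$ still gives a bounded sum, and it is the frequency-$4$ terms (ratio $e^{8ik}=1$) that resonate --- but your final bookkeeping paragraph ($n=4$, $p=3$) reaches the correct conclusion, so this does not affect the argument.
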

%%%%%%%%%%%%%%%%%%%%%%%%%%%%%%%%%%%%%%%%%%%%%%%%%%%%%%%%%%%%
%%%%%%%%%%%%%%%%%%%%%%%%%%%%%%%%%%%%%%%%%%%%%%%%%%%%%%%%%%%%
\begin{proof}
	The proof follows from the observation that
	\begin{eqnarray*}
		\sup_M
		\left(
			\left|
				\sum_{j=1}^{M}
			\sin(c(x-\eta_j))
			\right|
			+
			\left|
				\sum_{j=1}^{M}
			\cos(c(x-\eta_j))
			\right|
		\right)
		<
		\infty,
	\end{eqnarray*}
	for $c\in\{1,2,3\}$. For instance,
	\begin{eqnarray*}
		\lim_{L\to\infty} 
		\frac{1}{L}
		\sum_{j=1}^{Lt}
		\cos^2(x-\eta_j)
		=
		\lim_{L\to\infty} 
		\frac{1}{2L}
		\sum_{j=1}^{Lt}
		\Big{(}
			1+\cos(2(x-\eta_j))
		\Big{)}
		=
		\frac{t}{2}.
	\end{eqnarray*}
	The other cases follow similarly. 
\end{proof}
%%%%%%%%%%%%%%%%%%%%%%%%%%%%%%%%%%%%%%%%%%%%%%%%%%%%%%%%%%%%
%%%%%%%%%%%%%%%%%%%%%%%%%%%%%%%%%%%%%%%%%%%%%%%%%%%%%%%%%%%%
The following are the limits needed to treat Model II.
%%%%%%%%%%%%%%%%%%%%%%%%%%%%%%%%%%%%%%%%%%%%%%%%%%%%%%%%%%%%
%%%%%%%%%%%%%%%%%%%%%%%%%%%%%%%%%%%%%%%%%%%%%%%%%%%%%%%%%%%%
\begin{lemma}\label{thm:trigonometric-limits-2}
	Let $\eta_j=(2j-1)k$ for $k\in(-\pi,-\frac{\pi}{2})$ with $k\neq -\frac{3\pi}{2}$. Then, for all $x\in\R$, $t\in[0,1)$, $c\in\{1,2,4\}$ and $\alpha\geq \frac{1}{2}$,
	\begin{eqnarray*}
		\lim_{L\to\infty} 
		\sum_{j=1}^{Lt}
		\frac{\sin(x-\eta_j)}{(L-j)^{\alpha}}
		=
		\lim_{L\to\infty} 
		\sum_{j=1}^{Lt}
		\frac{\cos(x-\eta_j)}{(L-j)^{\alpha}}
		&=&
		0,
		\\
		&&
		\\
		\lim_{L\to\infty} 
		\sum_{j=1}^{Lt}
		\frac{\sin^2(x-\eta_j)}{(L-j)^{2\alpha}}
		=
		\lim_{L\to\infty} 
		\sum_{j=1}^{Lt}
		\frac{\cos^2(x-\eta_j)}{(L-j)^{2\alpha}}
		&=&
		\left\{
		\begin{array}{ll}
			0 & \alpha>\frac{1}{2},
			\\
			\\
			\frac{1}{2}\int^t_0 \frac{ds}{1-s} & \alpha=\frac{1}{2},
		\end{array}
		\right.
		\\
		&&
		\\
		\lim_{L\to\infty} 
		\sum_{j=1}^{Lt}
		\frac{\sin^4(x-\eta_j)}{(L-j)^{2\alpha}}
		=
		\lim_{L\to\infty} 
		\frac{1}{L}
		\sum_{j=1}^{Lt}
		\frac{\cos^4(x-\eta_j)}{(L-j)^{2\alpha}}
		&=&
		\left\{
		\begin{array}{ll}
			0 & \alpha>\frac{1}{2},
			\\
			\\
			\frac{3}{8}\int^t_0 \frac{ds}{1-s} & \alpha=\frac{1}{2},
		\end{array}
		\right.
		\\
		&&
		\\
		\lim_{L\to\infty} 
		\frac{1}{L}
		\sum_{j=1}^{Lt}
		\frac{\sin^2(x-\eta_j)
		\cos^2(x-\eta_j)}
		{(L-j)^{2\alpha}}
		&=&
		\left\{
		\begin{array}{ll}
			0 & \alpha>\frac{1}{2},
			\\
			\\
			\frac{1}{8}\int^t_0 \frac{ds}{1-s} & \alpha=\frac{1}{2},
		\end{array}
		\right.
		\\
		&&
		\\
		\lim_{L\to\infty} 
		\frac{1}{L}
		\sum_{j=1}^{Lt}
		\frac{\cos^3(x-\eta_j)
		\sin(x-\eta_j)}
		{(L-j)^{2\alpha}}
		&=&
		0.
	\end{eqnarray*}
\end{lemma}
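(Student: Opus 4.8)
The plan is to reduce every sum in the statement to two elementary building blocks, exactly as in the proof of Lemma~\ref{thm:trigonometric-limits-1}. Writing $\theta_j=x-\eta_j$, power‑reduction identities express each of the functions $\sin\theta_j,\cos\theta_j,\sin^2\theta_j,\cos^2\theta_j,\sin^4\theta_j,\cos^4\theta_j,\sin^2\theta_j\cos^2\theta_j,\cos^3\theta_j\sin\theta_j$ as a constant plus a finite linear combination of $\cos(c\theta_j)$ and $\sin(c\theta_j)$ with $c\in\{1,2,4\}$; the constant equals $0$ for $\sin\theta_j$, $\cos\theta_j$ and $\cos^3\theta_j\sin\theta_j$, equals $\tfrac12$ for $\sin^2\theta_j$ and $\cos^2\theta_j$, equals $\tfrac38$ for $\sin^4\theta_j$ and $\cos^4\theta_j$, and equals $\tfrac18$ for $\sin^2\theta_j\cos^2\theta_j$. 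Hence it suffices to understand, for $\beta\in\{\alpha,2\alpha\}$ and $t\in[0,1)$ fixed, the two sums
\[
 A_L^{\beta}(t)=\sum_{j=1}^{\lfloor Lt\rfloor}\frac{1}{(L-j)^{\beta}},
 \qquad
 O_L^{\beta,c}(t)=\sum_{j=1}^{\lfloor Lt\rfloor}\frac{e^{\,ic\eta_j}}{(L-j)^{\beta}} .
\]

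For the constant block, the substitution $i=L-j$ gives $A_L^{\beta}(t)=\sum_{i=L-\lfloor Lt\rfloor}^{L-1}i^{-\beta}$. If $\beta=2\alpha>1$ this is at most $\sum_{i\ge\lfloor L(1-t)\rfloor}i^{-2\alpha}=O\!\big((L(1-t))^{1-2\alpha}\big)\to0$, since $t$ is fixed in $[0,1)$; if $\beta=1$ (the case $\alpha=\tfrac12$), then $A_L^{1}(t)=H_{L-1}-H_{L-\lfloor Lt\rfloor-1}\to-\log(1-t)=\int_0^t\frac{ds}{1-s}$ by the standard asymptotics of the harmonic numbers. For the oscillatory block I would use summation by parts: since $\eta_j=(2j-1)k$, the partial sums $S_M=\sum_{j=1}^{M}e^{\,ic\eta_j}$ form a geometric series of ratio $e^{\,2ick}$, and the hypotheses on $k$ ensure $e^{\,2ick}\neq1$ for $c\in\{1,2,4\}$ — it is here that the exclusion $k\neq-\tfrac{3\pi}{4}$ enters, ruling out $4k\in\pi\Z$ — so $\sup_M|S_M|\le K_c:=2/|1-e^{2ick}|<\infty$. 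With $a_j=(L-j)^{-\beta}$, which is positive and increasing in $j<L$, Abel summation yields
\[
 O_L^{\beta,c}(t)=S_{\lfloor Lt\rfloor}\,a_{\lfloor Lt\rfloor}-\sum_{j=1}^{\lfloor Lt\rfloor-1}S_j\,(a_{j+1}-a_j),
\]
so that $|O_L^{\beta,c}(t)|\le K_c\,a_{\lfloor Lt\rfloor}+K_c\sum_{j=1}^{\lfloor Lt\rfloor-1}(a_{j+1}-a_j)\le 2K_c\,a_{\lfloor Lt\rfloor}=O(L^{-\beta})\to0$, uniformly for $t$ in compact subsets of $[0,1)$. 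Taking real and imaginary parts then disposes of all the $\cos(c\theta_j)$ and $\sin(c\theta_j)$ sums.

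Assembling the pieces according to the trigonometric decompositions gives each claimed limit: the pure‑oscillation expressions ($\sin\theta_j$, $\cos\theta_j$, $\cos^3\theta_j\sin\theta_j$) have limit $0$, while for $\alpha=\tfrac12$ the squares contribute $\tfrac12\int_0^t\frac{ds}{1-s}$, the fourth powers $\tfrac38\int_0^t\frac{ds}{1-s}$, and $\sin^2\theta_j\cos^2\theta_j$ contributes $\tfrac18\int_0^t\frac{ds}{1-s}$, all of which vanish when $\alpha>\tfrac12$. The computation is entirely routine; the only points requiring a little care are the verification that $e^{2ick}\neq1$ for $c\in\{1,2,4\}$ under the stated constraints on $k$, and the bookkeeping of the harmonic‑number asymptotics near $t=1$, where the limiting integral $\int_0^t\frac{ds}{1-s}$ blows up — consistently with the fact that all these limits are stated only for $t<1$, as required in Theorem~\ref{thm:scaling-model-II}.
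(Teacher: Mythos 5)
Your proposal is correct and follows essentially the same route as the paper's proof: power-reduction identities split each sum into a constant block (whose harmonic/Riemann-sum asymptotics produce $\int_0^t\frac{ds}{1-s}$ at $\alpha=\tfrac12$ and vanish for $\alpha>\tfrac12$) plus oscillatory blocks in $\cos(c(x-\eta_j))$, $\sin(c(x-\eta_j))$, which vanish because the partial sums of $e^{ic\eta_j}$ are bounded while the weights $(L-j)^{-\beta}$ are uniformly $O\big(((1-t)L)^{-\beta}\big)$ on $j\le Lt$. Your explicit Abel summation and the check that $e^{2ick}\neq 1$ for $c\in\{1,2,4\}$ (using $k\neq-\tfrac{3\pi}{4}$) merely make rigorous what the paper's one-line estimate leaves implicit.
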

%%%%%%%%%%%%%%%%%%%%%%%%%%%%%%%%%%%%%%%%%%%%%%%%%%%%%%%%%%%%
%%%%%%%%%%%%%%%%%%%%%%%%%%%%%%%%%%%%%%%%%%%%%%%%%%%%%%%%%%%%
\begin{proof}
	By means of adequate trigonometric identities, everything reduces to proving that
	\begin{eqnarray*}
		\lim_{L\to\infty} 
		\sum_{j=1}^{Lt}
		\frac{\sin(c(x-\eta_j))}{(L-j)^{\nu}}
		=
		0,
	\end{eqnarray*}
	for all $\nu\geq \frac{1}{2}$ and $c\in\{1,2,4\}$.
	But using that $t<1$, we have that
	\begin{eqnarray*}
		\limsup_{L\to\infty} 
		\sum_{j=1}^{Lt}
		\frac{\sin(c(x-\eta_j))}{(L-j)^{\nu}}
		\leq
		\lim_{L\to\infty} 
		\frac{1}{(1-t)^{\nu}L^{\nu}}
		\sum_{j=1}^{Lt}
		\sin(c(x-\eta_j))
		=
		0.
	\end{eqnarray*}
\end{proof}
%%%%%%%%%%%%%%%%%%%%%%%%%%%%%%%%%%%%%%%%%%%%%%%%%%%%%%%%%%%%
%%%%%%%%%%%%%%%%%%%%%%%%%%%%%%%%%%%%%%%%%%%%%%%%%%%%%%%%%%%%

%%%%%%%%%%%%%%%%%%%%%%%%%%%%%%%%%%%%%%%%%%%%%%%%%%%%%%%%%%%%
%%%%%%%%%%%%%%%%%%%%%%%%%%%%%%%%%%%%%%%%%%%%%%%%%%%%%%%%%%%%
%%%%%%%%%%%%%%%%%%%%%%%%%%%%%%%%%%%%%%%%%%%%%%%%%%%%%%%%%%%%
%%%%%%%%%%%%%%%%%%%%%%%%%%%%%%%%%%%%%%%%%%%%%%%%%%%%%%%%%%%%
%%%%%%%%%%%%%%%%%%%%%%%%%%%%%%%%%%%%%%%%%%%%%%%%%%%%%%%%%%%%
%%%%%%%%%%%%%%%%%%%%%%%%%%%%%%%%%%%%%%%%%%%%%%%%%%%%%%%%%%%%
\thanks{{\bf Acknowledgements.} 
We thanks O. Bourget for his many valuable comments on the model.
 G. Moreno is partially supported by the Chilean Fondecyt Grants 1211189 and 1211576. A. Taarabt is supported by the Chilean Fondecyt Grant 11190084.
%%%%%%%%%%%%%%%%%%%%%%%%%%%%%%%%%%%%%%%%%%%%%%%%%%%%%%%%%%%%
%%%%%%%%%%%%%%%%%%%%%%%%%%%%%%%%%%%%%%%%%%%%%%%%%%%%%%%%%%%%
%%%%%%%%%%%%%%%%%%%%%%%%%%%%%%%%%%%%%%%%%%%%%%%%%%%%%%%%%%%%

\end{document}